\newtheorem{theorem}{Theorem}
\newtheorem{lemma}[theorem]{Lemma}
\newtheorem{corollary}[theorem]{Corollary}
\newtheorem{proposition}[theorem]{Proposition}
\newtheorem{claim}[theorem]{Claim}
\newdefinition{remark}{Remark}
\newdefinition{definition}{Definition}
\newdefinition{example}{Example}
\newproof{proof}{Proof}
\newproof{claimproof}{Proof of claim}
\journal{Nuclear Physics B}
\newcommand{\proofofref}{}
\newproof{zproofof}{Proof of \proofofref}
\newenvironment{proofof}[1]
 {\renewcommand{\proofofref}{#1}\zproofof}
 {\endzproofof}
\begin{document}

\begin{frontmatter}



\title{A unifying approach to probabilistic testing equivalences}


\author[a]{Weijun Chen} 
\ead{cwj2018@sjtu.edu.cn}

\author[a]{Yuxi Fu\corref{cor1}}
\ead{fu-yx@cs.sjtu.edu.cn}

\author[a]{Huan Long} 
\ead{longhuan@sjtu.edu.cn}

\author[b]{Hao Wu}
\ead{wuhao@shmtu.edu.cn}

\cortext[cor1]{Corresponding author.}

\affiliation[a]{organization={BASICS, Shanghai Jiao Tong University},
            city={Shanghai},
            country={China}}
            
\affiliation[b]{organization={College of Information Engineering, Shanghai Maritime University},
            city={Shanghai},
            country={China}}

\begin{abstract}
Probabilistic concurrent systems are foundational models for modern mobile computing.
In this paper, a unifying approach to probabilistic testing equivalences is proposed.
With the help of a new distribution-based semantics for probabilistic models and a probabilistic testing framework with respect to process predicates, the internal characterization and the external characterization for testing equivalences are studied.
The latter characterization can be viewed as the generalization of the classical fair/should equivalence and may equivalence.
These equivalences are shown to be congruences.
A thorough comparison between these equivalences and probabilistic bisimilarities is carried out.
The techniques introduced in this paper can be easily extended to other probabilistic concurrent models.
To showcase this flexibility, a case study is carried out on the pCSP model.
\end{abstract}



\begin{keyword}
Probabilistic concurrent system \sep testing equivalence \sep distribution-based semantics


\end{keyword}

\end{frontmatter}


\def\Chan{\mathcal{N}}
\def\Act{\mathcal{A}}
\def\Supp{\mathsf{supp}}
\def\Distr{\mathscr{D}}
\def\InfDistr{\overline{\mathscr{D}}}
\def\PRCCS{\mathcal{P}_{_\mathsf{RCCS}}}
\def\PCCS{\mathcal{P}_{_\mathsf{CCS}}}
\def\L{\mathcal{L}}
\def\E{\mathcal{E}}
\def\R{\mathcal{R}}
\def\X{\mathcal{X}}
\def\S{\mathcal{S}}
\def\observable{\downdownarrows}
\def\trace{\mathsf{trace}}
\def\PEquip#1{\mathcal{O}^{#1}_{\mathcal{L}}}
\def\Restr#1#2{#1_{\restriction\mathsf{#2}}}
\def\PTest#1{\mathcal{O}^{#1}_{\omega}}
\newlength{\arrow}
\settowidth{\arrow}{\scriptsize$00\;$}
\def\ptran{\rightsquigarrow}
\newcommand*{\myrightarrow}[1]{\xrightarrow{\mathmakebox[\arrow]{#1}}}
\def\fix{\mu}
\def\PBranching{\simeq^{p}}
\def\PWeak{\approx^{p}}
\def\PBox{=^{p}_{^\Box}}
\def\PDiamond{=^{p}_{^\Diamond}}
\def\PXBox{=^{_\mathcal{D}}_{^\Box}}
\def\PXDiamond{=^{_\mathcal{D}}_{^\Diamond}}
\def\PCBox{=^{_\Delta}_{^\Box}}
\def\PCDiamond{=^{_\Delta}_{^\Diamond}}
\def\CBox{=^{~}_{^\Box}}
\def\CDiamond{=^{~}_{^\Diamond}}
\def\PMay{=^{p}_{\mathsf{may}}}
\def\PFS{=^{p}_{\mathsf{fair}}}
\def\PMust{=^{p}_{\mathsf{must}}}
\def\PCMay{=^{_\Delta}_{\mathsf{may}}}
\def\PCFS{=^{_\Delta}_{\mathsf{fair}}}
\def\PXMay{=^{_\mathcal{D}}_{\mathsf{may}}}
\def\PXFS{=^{_\mathcal{D}}_{\mathsf{fair}}}
\def\MayDH{=^{\mathsf{DH}}_{\mathsf{may}}}
\def\MustDH{=^{\mathsf{DH}}_{\mathsf{must}}}
\def\MayNC{=^{\mathsf{NC}}_{\mathsf{may}}}
\def\MustNC{=^{\mathsf{NC}}_{\mathsf{must}}}
\def\May{=_{\mathsf{may}}}
\def\Must{=_{\mathsf{must}}}
\def\FS{=_{\mathsf{fair}}}
\def\BDP{=_{BDP}}
\def\Trace{=_{trace}}
\def\Strong{\sim_p}
\def\charMay{\chi^{\mathsf{may}}}
\def\charFS{\chi^{\mathsf{fair}}}
\section{Introduction}
\label{sec:introduction}

Testing equivalences~\cite{boreale_BasicObservablesProcesses_1999, denicola_TestingEquivalencesProcesses_1984,brinksma_FairTesting_1995,natarajan_DivergenceFairTesting_1995}, initially inspired by the practical approach of investigating complex systems through various tests~\cite{baier_PrinciplesModelChecking_2008,araujo2023testing}, is one of the main tools in the studies of process calculi.
In the seminal work by De Nicola and Hennessy~\cite{denicola_TestingEquivalencesProcesses_1984}, they introduced two incompatible testing equivalence relations: may equivalence ($\May$) and must equivalence ($\Must$). 
Later, Natarajan and Cleaveland~\cite{natarajan_DivergenceFairTesting_1995}, and Brinksma et al.~\cite{brinksma_FairTesting_1995} proposed a refinement $\FS$ of $\May$, the so-called fair/should equivalence, with a more careful treatment of divergence.
The testing framework makes use of a class of specialized testing processes called \emph{observers} that can perform a special action $\omega$ to indicate a successful test outcome.
This framework is sometimes termed the \textit{external} characterization of testing equivalences.
A slightly different approach is to seek alternative characterizations without introducing external observers for testing, giving rise to the trace equivalence~\cite{denicola_TestingEquivalencesProcesses_1984} and the contextual preorder $\preccurlyeq_{\mathscr{L}}^{c}$ in~\cite{boreale_BasicObservablesProcesses_1999}.
Since these characterizations are more intrinsic and rely solely on the behaviors of the processes themselves, they are also referred to as the \textit{internal} characterizations of testing equivalences.

Over the past few decades, integrating probabilistic mechanisms into traditional computational models has been a major endeavor.
The studies of probabilistic testing equivalences emerged with both theoretical and practical motivations~\cite{cheung07,hierons10,gerhold28,crafa11}.
Randomness has been studied in different process models.
Some well-known models are Markov decision processes (MDP) \cite{baier_PrinciplesModelChecking_2008,etessami_RecursiveMarkovDecision_2015}, randomized Calculus of Communication Systems (RCCS) \cite{fu_ModelIndependentApproach_2021, wu_ProbabilisticWeakBisimulation_2023, zhang_UniformRandomProcess_2019}, probabilistic automata (PA) \cite{cattani_DecisionAlgorithmsProbabilistic_2002,segala_ModelingVerificationRandomized_1995,turrini_PolynomialTimeDecision_2015}, and probabilistic communicating
sequential processes (pCSP)~\cite{deng_SemanticsProbabilisticProcesses_2014}.
One would expect a theory of probabilistic testing equivalences to be \emph{unifying}: it should not depend on the specific underlying model, or, at the very least, it should be easily transferable across different models.
Such a requirement has been the central theme of Fu et al.'s theory~\cite{fu_TheoryInteraction_2016, fu_ModelIndependentApproach_2021, fu_NamePassingCalculus_2015}.
The basic idea is that different interaction models may employ fundamentally different interaction mechanisms; for example, name passing in the $\pi$-calculus~\cite{10.5555/329902} and the handshake communication in CCS are two inherently distinct paradigms. 
However, internal computations are common to all these models. 
If we can abstract away from the specific details of external actions, then our approach becomes readily adaptable to a wide range of models.

In this paper, we aim to extend the unifying approach to the setting of probabilistic testing equivalences.
We select a concrete model, the RCCS model, for a case study, which enables direct comparison with classical testing equivalences on CCS.
We primarily focus on the may equivalence and the fair equivalence, since it is well-known that the may equivalence is compatible with the fair equivalence~\cite{brinksma_FairTesting_1995,natarajan_DivergenceFairTesting_1995}, whereas it is not compatible with the must equivalence~\cite{denicola_TestingEquivalencesProcesses_1984}. 
More specifically, we provide two types of characterizations: the observer-based external characterizations and the unifying internal characterizations.
These characterizations coincide and can be regarded as probabilistic extensions of classical $\May$ and $\FS$.
Our testing equivalences are all \emph{congruence relations}: they are closed under all the operations of our concern.
This is quite significant for a full-fledged interaction model.
In most previous studies, equivalence relations are defined for models that disown the concurrent composition operator or the recursion operator. 

A comprehensive spectrum of behavioral equivalences (including testing equivalences~\cite{denicola_TestingEquivalencesProcesses_1984} and bisimilarities~\cite{milner_CommunicationConcurrency_1989,vanglabbeek_BranchingTimeAbstraction_1996}) has been given in~\cite{vanglabbeek_LinearTimeBranching_1993}.
Inspired by this, we conduct a detailed comparison of these equivalences with respect to their inclusion relations~$\subseteq$, a perspective that offers valuable insight in probabilistic settings.
Beyond comparing different probabilistic equivalences within a single model, a further natural question is whether our approach coincides with existing notions of probabilistic testing equivalences across different models. While research on probabilistic fair testing remains rather limited, probabilistic may equivalence has been extensively studied~\cite{deng_SemanticsProbabilisticProcesses_2014, yi_TestingProbabilisticNondeterministic_1992}. We believe that our notion is consistent with several established definitions, especially those introduced by Deng et al.~\cite{DBLP:journals/lmcs/DengGHM08}.
To substantiate this claim and reinforce the unifying nature of our approach, we conducted a case study on the pCSP model. 
The resulting equivalences indeed corroborate this correspondence.

The interplay between nondeterminism and randomness is a central issue.
Some studies~\cite{CLEAVELAND199993} resolved all nondeterministic choices probabilistically, limiting their applicability to general probabilistic concurrent systems.
The RCCS model in the paper incorporates both nondeterminism and randomness, allowing us to demonstrate a proper treatment of nondeterminism in the presence of randomness. 
To support our unifying testing approach, we develop a novel distribution-based semantics that directly captures the evolution of processes in a probabilistic testing scenario.
It is worth noting that this differs from existing methods, such as the tree-based semantics~\cite{fu_ModelIndependentApproach_2021, wu_AnalyzingDivergenceNondeterministic_2024a, wu_ProbabilisticWeakBisimulation_2023, yi_TestingProbabilisticNondeterministic_1992}, the scheduler-based approach~\cite{turrini_PolynomialTimeDecision_2015}, and the lifting techniques~\cite{deng_SemanticsProbabilisticProcesses_2014, DENG201758}.
In brief, our distribution-based approach exhibits several appealing properties.
It is \emph{simple}, as it avoids intricate constructions such as schedulers and infinite unfolding trees (see Lemma~\ref{lem:linearity_of_probabilistic_transition_strengthened}); \emph{general}, since the theory relies only on the underlying probabilistic labelled transition system rather than model-specific syntactic details (see the case study on pCSP); and \emph{semantically well-founded}, as its distribution-based nature aligns closely with the essence of probabilistic systems, making it broadly deployable in settings where tree-based or scheduler-based approaches are used (see in particular Section~\ref{sec:comparison}).
Further technical comparisons are provided in the main text.

\paragraph{Contributions}

The main contributions of this paper are stated as follows.
\begin{enumerate}
    \item 
    We introduce a distribution-based semantics for the RCCS model that satisfies several well-established requirements.
    Under this semantics, probabilistic transition sequences are linear with respect to convex combinations of distributions (Lemma~\ref{lem:linearity_of_probabilistic_transition_strengthened}).
    We then develop a universal probabilistic testing framework using process predicates and extend the classical concrete testing outcome to a convex set (Lemma~\ref{lem:linearity_of_testing_outcome}). 
    Some approximation properties are discussed in Section~\ref{subsec:approximation}.
    \item
    Within our testing framework, we introduce two parametric probabilistic testing equivalences, the box equivalence $\PXBox$ and the diamond equivalence $\PXDiamond$, as internal characterizations, where $\mathcal{D}$ is a given testing context.
    We prove that $\PXBox$ is strictly finer than $\PXDiamond$ (Theorem~\ref{thm:probabilistic_box_diamond}).
    We define the external characterizations $\PXFS$ and $\PXMay$ and show that they coincide with $\PXBox$ and $\PXDiamond$, respectively (Theorem~\ref{thm:model_indenpent_characterizations}).
    We prove that our testing equivalences are all congruences (Theorem~\ref{thm:congr}). 
    At the same time, we applied our framework to the pCSP model and obtained consistent results.
    \item 
    Finally, we reveal the inclusion relationships between the probabilistic weak bisimilarity ($\PWeak$) and our probabilistic testing equivalences
    (Theorem~\ref{thm:probabilistic_hierarchy}).
    We also show that our internal characterizations are 
    conservative extensions of the ones in~\cite{fu_NamePassingCalculus_2015} (Proposition~\ref{prop:classical_generalization} and~\ref{prop:testing_classical_generalization}), which give alternative proofs for their core results (Proposition~\ref{prop:classical_model_independent_characterization}) in a probabilistic setting.
\end{enumerate}

A schematic overview of our main results is given in Figure~\ref{fig:summary}.

\begin{figure}[t]
    \centering
\begin{tikzpicture}
\tikzset{vertex/.style = {font=\scriptsize}}
\tikzset{edge/.style = {->,> = latex', font=\scriptsize}}
\definecolor{mycolor1}{HTML}{0aa344}
\node[vertex] (b) at (3,0) {$\PWeak$};
\node[vertex] (c) at (6,0) {$\PFS$};
\node[vertex] (d) at (9,0) {$\PBox$};
\node[vertex] (e) at (12,0) {$\PDiamond$};
\node[vertex] (f) at (15,0) {$\PMay$};
\node[vertex] (c1) at (6,1.5) {$\PCFS$};
\node[vertex] (d1) at (9,1.5) {$\PCBox$};
\node[vertex] (e1) at (12,1.5) {$\PCDiamond$};
\node[vertex] (f1) at (15,1.5) {$\PCMay$};
\node[vertex] (c2) at (6,3) {$\FS$};
\node[vertex] (d2) at (9,3) {$\CBox$};
\node[vertex] (e2) at (12,3) {$\CDiamond$};
\node[vertex] (f2) at (15,3) {$\May$};
\draw[edge] (b) to node[above,text = black, scale = 0.6] {$\subsetneq$} node[below,text = black, scale=1] { Theorem~\ref{thm:pweak_vs_pbox}} (c);
\draw[edge, <->] (c) to node[above,text = black, scale=1] {$=$} node[below,text = black, scale=1] { Theorem~\ref{thm:model_indenpent_characterizations}} (d);
\draw[edge] (d) to node[above,text = black, scale=1] {$\subsetneq$} node[below,text = black, scale=1] {Theorem~\ref{thm:probabilistic_box_diamond}} (e);
\draw[edge, <->] (e) to node[above,text = black, scale=1] {$=$} node[below,text = black, scale=1] { Theorem~\ref{thm:model_indenpent_characterizations}} (f);

\draw[violet, dashed, opacity=0.25] (4.7,-0.75) rectangle (7.3,4);
\node[vertex,text = violet, scale=1,align=center] (ex1) at (6,3.85) {external};
\node[vertex,text = violet, scale=1,align=center] (ex1) at (6,3.6) {characterizations};
\draw[violet, dashed, opacity=0.25] (13.7,-0.75) rectangle (16.3,4);
\node[vertex,text = violet, scale=1,align=center] (ex2) at (15,3.85) {external};
\node[vertex,text = violet, scale=1,align=center] (ex2) at (15,3.6) {characterizations};
\draw[mycolor1, dashed, opacity=0.25] (7.7,-0.75) rectangle (13.3,4);
\node[vertex,text = mycolor1, scale=1] (in) at (10.5,3.65) {internal characterizations};

\draw[blue, dashed, opacity=0.5] (1,-0.35) rectangle (17,0.35);
\node[vertex,text = blue, scale=1] (p) at (2.7,-0.55) {probabilistic testing context};

\draw[red, dashed, opacity=0.5] (1,1.15) rectangle (17,1.85);
\node[vertex,text = red, scale=1] (de) at (3,1.5) {classical testing context};

\draw[cyan, dashed, opacity=0.5] (1,2.65) rectangle (17,3.35);
\node[vertex,text = cyan, scale=1] (cl) at (3,3) {classical equivalences};

\draw[edge, <->] (c1) to node[above,text = black, scale=1] {$=$} node[below,text = black, scale=1] { Theorem~\ref{thm:model_indenpent_characterizations}} (d1);
\draw[edge] (d1) to node[above,text = black, scale=1] {$\subsetneq$} node[below,text = black, scale=1] { Theorem~\ref{thm:probabilistic_box_diamond}} (e1);
\draw[edge, <->] (e1) to node[above,text = black, scale=1] {$=$} node[below,text = black, scale=1] { Theorem~\ref{thm:model_indenpent_characterizations}} (f1);
\draw[edge, <->] (c2) to node[above,text = black, scale=1] {$=$} node[below,text = black, scale=1] { Proposition~\ref{prop:classical_model_independent_characterization}} (d2);
\draw[edge] (d2) to node[above,text = black, scale=1] {$\subsetneq$} node[below,text = black, scale=1] { Proposition~\ref{prop:classical_model_independent_characterization}} (e2);
\draw[edge, <->] (e2) to node[above,text = black, scale=1] {$=$} node[below,text = black, scale=1] { Proposition~\ref{prop:classical_model_independent_characterization}} (f2);

\node[vertex] (p) at (10.5,0.35) {};
\node[vertex] (de) at (10.5,1.15) {};
\draw[edge] (p) to node[left,text = black, scale=1] {pairwise $\subsetneq$~~} node[right,text = black, scale=1,align=center] {~Proposition~\ref{prop:testing_classical_generalization}\\~Proposition~\ref{prop:classical_generalization}} (de);
\node[vertex] (de) at (10.5,1.85) {};
\node[vertex] (cl) at (10.5,2.65) {};
\draw[edge] (de) to node[left,text = black, scale=1, align=center] {pairwise conservative\\generalization~} node[right,text = black, scale=1, align=center] {~Proposition~\ref{prop:testing_classical_generalization}\\ ~Proposition~\ref{prop:classical_generalization}} (cl);
\end{tikzpicture}
    \caption{The inclusion relations among probabilistic equivalences presented in this paper, where the arrow from one equivalence to the other indicates that the former is finer than the latter. The superscripts $\Delta$ and $p$ denote, respectively, classical testing contexts without probabilistic capabilities and probabilistic testing contexts with full probabilistic power.}
    \label{fig:summary}
\end{figure}
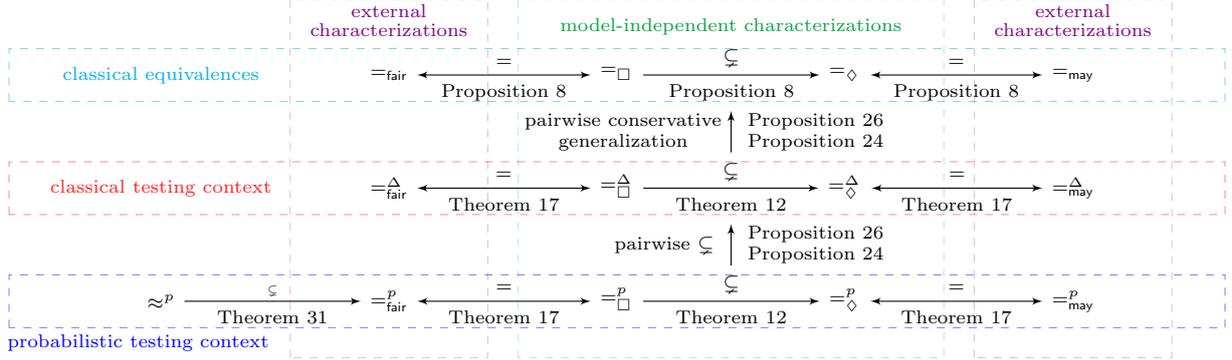

\paragraph*{Organization}
Section~\ref{sec:preliminary} revisits the CCS model and classical testing theory.
Section~\ref{sec:rccs_model} introduces the RCCS model and its distribution-based semantics.
Section~\ref{sec:testing_over_distributions} develops our general predicate-based testing framework for probabilistic process models.
Section~\ref{sec:model_independent} formally introduces probabilistic diamond/box equivalences as well as their external characterizations.
A case study on pCSP is also presented.
Section~\ref{sec:comparison} compares our work with related work and establishes the spectrum shown in Figure~\ref{fig:summary}. 
Section~\ref{sec:conclusion} concludes.
\section{Revisiting the classical testing theory}
\label{sec:preliminary}

We start by revisiting the foundational concepts of Milner's CCS model and the theory of classical testing equivalence, which should be familiar to the readers. 
This review serves as a basis for introducing the RCCS model and distribution-based semantics, which will be discussed in detail in Section~\ref{sec:rccs_model}.

\subsection{The CCS model}
\label{subsec:ccs}

Let $\Chan$ be an infinite set of channels, ranged over by $a,b,c$, and $\overline{\Chan} =\left\{\overline{a}\mid a \in \Chan\right\}$.
Let $\L = \Chan \cup \overline{\Chan}$, ranged over by $\ell$, be the set of labels representing the external actions.
We usually use $L$ to denote a finite subset of $\Chan$.
We use a special symbol $\tau \notin \L$ to represent the internal action (also referred to as a silent action).
The set of actions is $\Act = \L \cup \left\{\tau\right\}$, ranged over by $\alpha, \beta, \gamma$.
The empty string over $\Act$ is denoted by $\varepsilon$.

The grammar of the CCS model is defined as:
\begin{equation}
\label{eq:ccs_grammer}
    S, T:=  \mathbf{0} ~\Big{|}~ X ~\Big{|}~ \sum_{i \in I} \alpha_i.T_i ~\Big{|}~  S\mid T ~\Big{|}~ (L)T ~\Big{|}~ \fix X.T,\tag{$*$}
\end{equation}
where the non-empty index set $I$ is finite.
In (\ref{eq:ccs_grammer}), $\mathbf{0}$ is the \emph{nil term} that cannot perform any action and $X$ is a \emph{process variable}.
The remaining four combinators are used to generate new terms from existing ones. 
Their intuitive meanings are explained as follows:
\begin{itemize}
    \item $\sum_{i \in I} \alpha_i.T_i$ (\emph{nondeterministic choice term}) is a term that may in a nondeterministic manner choose to perform action $\alpha_i$ and then continue as $T_i$.
    \item $S \mid T$ (\emph{parallel composition term}) denotes the concurrent execution of $S$ and $T$.
    \item $(L)T$ (\emph{localization term}) behaves like $T$, except that it cannot perform actions in the set $L$.
    \item $\fix X.T$ (\emph{fixpoint term}) represents a term defined recursively by the equation $X = T$, where $T$ typically contains some occurrences of the process variable $X$.
    A substitution $T\{P_1/X_1, \ldots, P_n/X_n\}$ (abbreviated as $T\{\widetilde{P}/\widetilde{X}\}$ when $n$ is clear) denotes the term obtained by simultaneously replacing the process variables $X_1, \ldots, X_n$ in $T$ by $P_1, \ldots, P_n$, respectively.
\end{itemize}

A trailing $\mathbf{0}$ that appears at the end of a term is often omitted, e.g., $\tau.a$ represents $\tau.a.\mathbf{0}$.
If $L=\{a_1,a_2,\dots,a_k\}$, then $(L)T$ can be simplified as $(a_1a_2\dots a_k)T$.
Sometimes we will use the infix notation of $\sum$ to specify particular summands in nondeterministic choice terms, writing for example, $\sum_{i \in I'} \alpha_i. T_i + \beta.T' + \gamma.T''$.
As usual, the variable $X$ in $\fix X.T$ is \emph{bound}. 
A variable in a term is \emph{free} if it is not bound. 
A term is a \emph{process} if it contains no free variables. 
We write $A,B,P,Q$ for processes. 
The set of all CCS processes is denoted by $\PCCS$.
Given a process $P\in\PCCS$, we use $\Chan(P)$ to denote the channels in $P$.
The channels in $L$ are said to be \emph{local} in $(L)P$.
A channel in a term is \emph{global} if it is not local.
A \emph{renaming function} $f:\Chan\to\Chan$ is a partial function with its domain $\mathsf{dom}(f)$ finite.
We use $P[f]$ to denote the process obtained from $P$ by renaming all global names in $\Chan(P)\cap \mathsf{dom}(f)$ using $f$.
A renaming function that maps $a$ to $b$ can be simply written as $[a\mapsto b]$.

The operational semantics of $\PCCS$ is given by the \emph{labeled transition system} (LTS for short) in Figure \ref{fig:LTS}, where $\ell 
\in \L$, $\alpha_i, \beta \in \Act$ and the transition relation $\longrightarrow\;\subseteq \PCCS\times \Act \times \PCCS$.
We use $\Longrightarrow$ to denote the reflexive and transitive closure of $\myrightarrow{\tau}$.
Given a process $P$, an \emph{internal action sequence} of $P$ is a possibly infinite sequence of the form $P\myrightarrow{\tau}P_1\myrightarrow{\tau}\cdots$.
A process $P$ is \emph{divergent} if there exists an infinite internal action sequence starting from $P$.

\begin{figure}[t]	
    \vspace{-5mm}
    \begin{center}
        \begin{displaymath}
        \frac{}{~\sum_{i \in I} \alpha_i.T_i \myrightarrow{\alpha_i} T_i~} \qquad
        \frac{T \myrightarrow{\beta} T'}{~(L)T \myrightarrow{\beta}(L)T' ~} ~\beta, \overline{\beta} \notin L \qquad
        \frac{T\left\{\fix X.T/X\right\} \myrightarrow{\beta} T'}{\fix X.T \myrightarrow{\beta} T'}
        \end{displaymath}	
        \begin{displaymath}
            \frac{~S \myrightarrow{\beta} S'~}{~S\mid T \myrightarrow{\beta} S'\mid T~} \qquad
            \frac{~T \myrightarrow{\beta} T'~}{~S\mid T \myrightarrow{\beta} S\mid T'~} \qquad
            \frac{~S\myrightarrow{\ell} S' \quad T\myrightarrow{\overline{\ell}} T'~}{S\mid T\myrightarrow{\tau} S' \mid T'}
        \end{displaymath}	
    \end{center}
    \vspace{-2mm}
    \caption{LTS for $\mathrm{CCS}$.}
    \label{fig:LTS}
    \vspace{-2mm}
\end{figure}

\subsection{Classical testing equivalence}
\label{subsec:may_must}

Given a set $\S$, we use $\R$ and $\E$ to denote a relation and an equivalence relation over $\S$, respectively.
We write $\S / \E$ for the set of equivalence classes induced by $\E$.
The equivalence class containing $x\in \S$ is denoted by $[x]_{\E}$.
The subscript is often omitted when $\E$ is clear from the context.
The testing machinery of De Nicola and Hennessy~\cite{denicola_TestingEquivalencesProcesses_1984} can be summarized as follows.

\begin{enumerate}
    \item 
    A special action $\omega \notin \Act$ is used to report success.
    \item 
    \emph{Observers}, ranged over by $O$, are obtained from CCS processes by replacing some occurrences of $\mathbf{0}$ by the process $\boldsymbol{\omega}:=\omega.\mathbf{0}$.
    The operational semantics of observers are defined under the action set $\Act\cup\left\{\omega\right\}$.
    \item 
    A \emph{testing  sequence $s^{P\mid O}_{\omega}$ of process $P$ by observer $O$} is a (possibly infinite) internal action sequence of $P \mid O$ satisfies that
    \begin{enumerate}
        \item 
        no non-terminating state of $s^{P\mid O}_{\omega}$ can perform the $\omega$ action, and
        \item 
        if $s^{P\mid O}_{\omega}$ is a finite sequence, then the final state of $s^{P\mid O}_{\omega}$ either can perform the $\omega$ action or cannot perform any $\tau$ action.
    \end{enumerate}
    \item 
    A testing  sequence $s^{P\mid O}_{\omega}$ of $P$ by $O$ is \emph{DH-successful} if $s^{P\mid O}_{\omega}$ is a finite sequence with its final state being able to perform the $\omega$ action.
    \item The test of $P$ by $O$ is called \emph{fair-successful} if whenever $P\mid O \Longrightarrow P'\mid O'$ then some testing sequence of $P'$ by $O'$ is DH-successful.
\end{enumerate}

We then introduce a set of predicates for binary relations to describe the relative observability of a pair of processes under any observer, tailored to meet various requirements.

\begin{definition}
Suppose $\R$ is a relation on $\PCCS$.
\begin{enumerate}
    \item 
    $\R$ satisfies \emph{may predicate} if $P\,\R\, Q$ implies that, for every observer $O$, \emph{some} testing sequence of $P$ by $O$ is DH-successful if and only if \emph{some} testing sequence of $Q$ by $O$ is DH-successful.
    \item 
    $\R$ satisfies \emph{must predicate} if $P\,\R\, Q$ implies that, for every observer $O$, \emph{all} testing sequences of $P$ by $O$ are DH-successful if and only if \emph{all} testing sequences of $Q$ by $O$ are DH-successful.
    \item 
    $\R$ satisfies \emph{fair predicate} if $P\,\R\, Q$ implies that, for every observer $O$, the test of $P$ by $O$ is fair-successful if and only if 
    the test of $Q$ by $O$ is fair-successful.
\end{enumerate} 
\end{definition}

It is straightforward to see that the largest relation defined by each of these predicates is reflexive, symmetric, transitive, and is closed under the union operation, hence the following.

\begin{definition}[Classical testing equivalences \cite{denicola_TestingEquivalencesProcesses_1984}]
\label{def:may_equivalence}
~
\begin{enumerate}
    \item The \emph{may equivalence} $\May$ is the largest relation on $\PCCS$ that satisfies the may predicate.
    \item The \emph{must equivalence} $\Must$ is the largest relation on $\PCCS$ that satisfies the must predicate.
    \item The \emph{fair equivalence} $\FS$ is the largest relation on $\PCCS$ that satisfies the fair predicate.
\end{enumerate}
\end{definition}

Intuitively, an observer $O$ abstracts the external testing behavior, and different testing sequences capture the possible outcomes of applying a test under different nondeterministic branches.
The may and must testing check whether the system \emph{can} or \emph{must} eventually pass the test, respectively.
In contrast, the fair testing checks if a system remains capable of passing the test from \emph{any} reachable state.
The meanings of the corresponding equivalence relations naturally follow from these interpretations.
The relationships among the three testing equivalences are summarized below.

\begin{proposition}[\cite{denicola_TestingEquivalencesProcesses_1984,brinksma_FairTesting_1995,natarajan_DivergenceFairTesting_1995}]
    The three testing equivalences introduced above satisfy the following statements.
    \begin{enumerate}
        \item $\May\;\not\subseteq\;\Must$ and $\Must\;\not\subseteq\;\May$.
        \item $\FS\;\subsetneq\;\May$.
    \end{enumerate}
\end{proposition}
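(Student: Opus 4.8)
The plan is to handle the two clauses by different means: clause (1) by exhibiting explicit separating processes, and clause (2) by proving an inclusion together with its strictness. Write $\Omega := \fix X.\tau.X$ for the purely divergent process, so $\Omega \myrightarrow{\tau} \Omega$.

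For clause (1), to witness $\May \not\subseteq \Must$ I would take $P_1 := a.\mathbf{0}$ and $Q_1 := a.\mathbf{0} + \tau.\Omega$. The divergent summand adds no visible capability, so every DH-successful testing sequence of one is mirrored by the other and $P_1 \May Q_1$; but against $O := \overline{a}.\boldsymbol{\omega}$ the process $P_1$ has only its single successful testing sequence, while $Q_1$ also admits the livelocked sequence $Q_1 \mid O \myrightarrow{\tau} \Omega \mid O \myrightarrow{\tau} \cdots$, which is not DH-successful, so $P_1 \not\Must Q_1$. To witness $\Must \not\subseteq \May$ I would take $P_2 := \Omega$ and $Q_2 := a.\mathbf{0} + \tau.\Omega$. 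A short case analysis shows that each of them must-passes an observer $O$ exactly when $O$ can perform $\omega$ in its initial state, since otherwise the always-available $\Omega$-loop supplies a livelocked, non-successful testing sequence; hence $P_2 \Must Q_2$. Yet $Q_2$ may-passes $\overline{a}.\boldsymbol{\omega}$ through its $a$-branch while $P_2$ cannot, so $P_2 \not\May Q_2$. All verifications here are routine unfoldings of the transition rules and the testing-sequence conditions.

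For clause (2), strictness $\FS \neq \May$ reuses the pair $P_1, Q_1$: they are may-equivalent, but against $\overline{a}.\boldsymbol{\omega}$ the state $\Omega \mid \overline{a}.\boldsymbol{\omega}$ is reachable from $Q_1 \mid \overline{a}.\boldsymbol{\omega}$ and can never reach an $\omega$-enabled configuration, so $Q_1$ is not fair-successful while $P_1$ is; thus $P_1 \not\FS Q_1$. The substantive part is $\FS \subseteq \May$. Here I would invoke the classical characterisation of $\May$ as weak-trace equivalence, reducing the goal to $\FS \subseteq \Trace$, and argue contrapositively: supposing $P$ has a weak trace $w = \ell_1 \cdots \ell_k$ that $Q$ lacks, build the observer $O_w := O^{(0)}$ where $O^{(k)} := \mathbf{0}$ and $O^{(i)} := \tau.\boldsymbol{\omega} + \overline{\ell_{i+1}}.O^{(i+1)}$ for $0 \le i < k$. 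The design point is that matching only a proper prefix of $w$ leaves the observer in some $O^{(i)}$ with $i < k$, from which one silent step reaches the $\omega$-ready state $\boldsymbol{\omega}$, so success stays reachable, whereas matching $w$ in full drives the observer into the dead state $\mathbf{0}$, from which no success is ever reachable. Consequently $P$, being able to complete $w$, reaches this trap and is not fair-successful, while $Q$, unable to complete $w$, keeps success perpetually reachable and is fair-successful; therefore $P \not\FS Q$.

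I expect the inclusion $\FS \subseteq \May$ to be the main obstacle, and it needs care on two fronts. First, the reduction of $\May$ to weak-trace equivalence must be in place (cited or re-proved), because the trap-observer only yields a trace witness rather than an arbitrary may-witness. Second, the fair-pass versus fair-fail dichotomy for $O_w$ must be checked against the precise definitions of $\Longrightarrow$-reachability and of DH-successful testing sequences: one verifies that every configuration of $Q \mid O_w$ reached after a proper prefix of $w$ still admits a DH-successful testing sequence via the silent move to $\boldsymbol{\omega}$, and that the configuration $P' \mid \mathbf{0}$ reached after the full trace $w$ admits none. By contrast, the counterexamples for clause (1) require only direct computation.
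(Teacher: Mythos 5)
Your proposal is correct, but note that the paper does not actually prove this proposition: it imports it from the cited literature, offers the counterexamples of Example~\ref{ex:classical_testing} purely as intuition, and explicitly defers the non-trivial inclusion $\FS\;\subseteq\;\May$, proving only a probabilistic analogue later (Theorem~\ref{thm:probabilistic_box_diamond}). Your clause-(1) witnesses and your strictness witness for clause (2) are, modulo notation, the same divergence-based separations as the paper's ($\tau.a$ versus $\tau.a+\tau.\fix X.\tau.X$, and pure divergence versus a process that always retains a divergent escape), and your verifications of them are sound. The genuinely different content is your proof of $\FS\;\subseteq\;\May$: you reduce $\May$ to weak-trace equivalence and build the trap observer $O_w$, which keeps success one silent step away until the distinguishing trace $w$ is completed, at which point success becomes permanently unreachable; this is an elementary argument at the classical level, though it leans on the may-equals-trace characterization as an imported result (which you rightly flag as needing citation or re-proof). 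The paper's route to the analogous inclusion is contextual rather than trace-based: Lemma~\ref{lem:sequence_transform} composes $\mu$ with $Q=\sum_{a\in L}a+\sum_{a\in L}\overline{a}+b$ under restriction so that $\charMay_{\L}(\mu)+\charFS_{\L}(\mu^L)=1$, whence preservation of fair characteristics plus extensionality forces preservation of may characteristics (Theorem~\ref{thm:probabilistic_box_diamond}), and the classical statement then falls out as a corollary (Proposition~\ref{prop:classical_model_independent_characterization}). Your approach buys a direct, self-contained classical proof; the paper's buys a technique that survives the move to probabilities, where trace-style reasoning is much weaker, and it avoids invoking the trace characterization altogether.
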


As a result, may and must equivalence are incomparable, and the same holds for fair and must equivalence.
Formal proofs can be found in the literature; here, we present several illustrative examples in Example~\ref{ex:classical_testing} to build intuition, which will help understand the spectrum of testing equivalences in the probabilistic setting discussed later.

\begin{example}
\label{ex:classical_testing}
Consider the four processes shown in Figure~\ref{fig:classical_testing}
\begin{equation}
    P_1:=\tau.a,\; P_2:=\tau.a+\tau.(\fix X.\tau.X),\; P_3 := \fix X.(\tau.a+\tau.X), \text{and~}P_4:=\fix X.\tau.X.
\end{equation}

\begin{figure}[htb]
\centering
\begin{minipage}[c]{.2\textwidth}
\centering
\begin{tikzpicture}
\tikzset{vertex/.style = {}}
\tikzset{edge/.style = {->,> = latex'}}
\node[vertex] (a) at  (0,0) {\footnotesize$a$};
\node[vertex] (b) at  (0,1.5) {\footnotesize$P_1$};
\draw[edge] (b) to node[left] {\footnotesize$\tau$} (a);
\end{tikzpicture}
\end{minipage}
\hspace{0.03\textwidth}
\begin{minipage}[c]{.22\textwidth}
\centering
\begin{tikzpicture}
\tikzset{vertex/.style = {}}
\tikzset{edge/.style = {->,> = latex'}}
\node[vertex] (a) at  (0,0) {\footnotesize$a$};
\node[vertex] (b) at  (1,1) {\footnotesize$P_2$};
\node[vertex] (c) at (2,0) {\footnotesize$\cdot$};
\draw[edge] (b) to node[left=3, pos=0] {\footnotesize$\tau$} (a);
\draw[edge] (b) to node[right=3, pos=0] {\footnotesize$\tau$} (c);
\draw[edge] (c) to[loop below, in=300, out=240, looseness=8] node {\footnotesize$\tau$} (c);
\end{tikzpicture}
\end{minipage}
\hspace{0.03\textwidth}
\begin{minipage}[c]{.2\textwidth}
\centering
\begin{tikzpicture}
\tikzset{vertex/.style = {}}
\tikzset{edge/.style = {->,> = latex'}}
\node[vertex] (a) at  (0,0) {\footnotesize$a$};
\node[vertex] (b) at  (0,1.5) {\footnotesize$P_3$};
\draw[edge] (b) to node[left] {\footnotesize$\tau$} (a);
\draw[edge] (b) to[loop above, in=60, out=120, looseness=6] node {\footnotesize$\tau$} (b);
\end{tikzpicture}
\end{minipage}
\hspace{0.03\textwidth}
\begin{minipage}[c]{.15\textwidth}
\centering
\begin{tikzpicture}
\tikzset{vertex/.style = {}}
\tikzset{edge/.style = {->,> = latex'}}
\node[vertex] (b) at  (0,0) {\footnotesize$P_4$};
\draw[edge] (b) to[loop above, in=60, out=120, looseness=8] node {\footnotesize$\tau$} (b);
\end{tikzpicture}
\end{minipage}
\caption{The counterexamples of classical testing equivalences.}
\label{fig:classical_testing}
\end{figure}
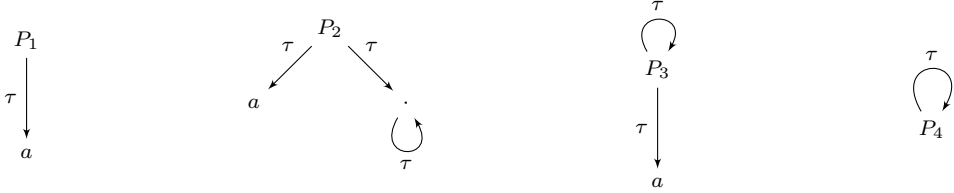 
We have the following observations.
\begin{itemize}
    \item 
    $P_1$, $P_2$ and $P_3$ all can perform the external action $a$, whereas $P_4$ cannot. Thus only $P_4$ \emph{cannot} pass the test $\overline{a}.\boldsymbol{\omega}$, and $P_1\May P_2\May P_3 \not\May P_4$.
    \item 
    $P_2$, $P_3$ and $P_4$ all have infinite internal action sequences, whereas $P_1$ does not. Thus only $P_1$ \emph{must} pass the test $\overline{a}.\boldsymbol{\omega}$, and $P_1 \not\Must P_2\Must P_3\Must P_4$.
    \item 
    For any $Q$ such that $P_3 \Longrightarrow Q$, we have $P_1 \May Q$, and they can both pass the test $\overline{a}.\boldsymbol{\omega}$. Thus the test of $P_1$ (and $P_3$) by $\overline{a}.\boldsymbol{\omega}$ is fair-successful.
    Since $P_2 \myrightarrow{\tau} P_4$ and $P_4$ can only diverge, neither the test of $P_2$ nor the test of $P_4$ by any observer is fair-successful.
    Therefore, $P_1\FS P_3\not\FS P_2\FS P_4$.

\end{itemize}

In summary, $\May$ and $\Must$, as well as $\FS$ and $\Must$, are incomparable under inclusion; however, the pairwise intersections of these three relations are nontrivial (containing process pairs beyond just the reflexive relation).
Notably, the proof of $\FS\;\subseteq\;\May$ is slightly non-trivial, and Theorem~\ref{thm:probabilistic_box_diamond} in the following section provides a probabilistic version of this inclusion.

\end{example}

From Example~\ref{ex:classical_testing}, we can also observe that $\Must$ treats divergence rather coarsely: all divergent processes are considered must-equivalent.
In contrast, $\FS$ inspects every state along divergent paths, thereby partially addressing this limitation.
Due to this drawback and the incomparability discussed above, the remainder of this paper will focus on probabilistic may and fair testing.

\section{Randomized CCS with distribution-based semantics}
\label{sec:rccs_model}

In this section, we formalize the randomized CCS model and present a new distribution-based semantics. 
To begin, we introduce the necessary mathematical notations related to probability.
A \textit{(discrete) distribution} over a countable set $\S\neq\emptyset$ is a function $\rho: \S \to [0,1]$ such that $\sum_{x \in \S} \rho(x) = 1$, whose \emph{support} is $\Supp(\rho):=\left\{x\in\S: \rho(x)>0\right\}$.
We denote by $\Distr(\S)$ the set of distributions over $\S$ with finite support.
We also express a distribution $\rho \in\Distr(\S)$ as $\left\{(x, \rho(x)): x \in \Supp(\rho)\right\}$.
For $\X \subseteq \S$, we define $\rho(\X) := \sum_{x \in \X} \rho(x)$.
For any $x\in\S$, the \emph{Dirac distribution} $\delta_x$ w.r.t. $x$ is defined by $\delta_x(x)=1$ and $\delta_x(y)=0$ for all $y\ne x$.
Given a finite family of distributions $\left\{\rho_i\right\}_{i\in I}$ and real numbers $\{p_i\}_{i\in I}$ such that $p_i\in [0,1]$ for each $i\in I$ and $\sum_{i\in I}p_i=1$, we say that $\sum_{i\in I}p_i\rho_i$ is a \emph{convex combination} of $\{\rho_i\}_{i\in I}$.

\subsection{Randomized CCS model}
\label{subsec:sytax}

The syntax of the randomized CCS model, RCCS~\cite{fu_ModelIndependentApproach_2021}, is defined by
\begin{equation}
\label{eq:grammer_rccs}
    S, T:=  \mathbf{0} ~\Big{|}~ X ~\Big{|}~ \sum_{i \in I} \alpha_i.T_i ~\Big{|}~ \bigoplus_{i \in I}p_i\tau.T_i ~\Big{|}~ S\mid T ~\Big{|}~ (L)T ~\Big{|}~ \fix X.T,\tag{$**$}
\end{equation}
where $I$ is a non-empty finite set.
In (\ref{eq:grammer_rccs}), $\bigoplus_{i \in I}p_i\tau.T_i$ is a newly introduced \emph{probabilistic choice term}, where $p_i\in(0,1)$ for each $i\in I$ and $\sum_{i\in I}p_i=1$.
The $\bigoplus$ operator is also commonly written in infix notation.
Intuitively, this term represents an internal probabilistic transition within the system, which can evolve into $T_i$ with probability $p_i$.
The meanings of the remaining terms are consistent with those introduced in Section~\ref{subsec:ccs}.
An RCCS term $S$ is $n$-ary if $S$ contains at most $n$ free process variables, and a \emph{process} is a $0$-ary term.
The set of all RCCS processes is denoted by $\PRCCS$.
Hereafter, unless otherwise specified, distributions $\mu, \nu, \rho,\dots$, refer specifically to the elements of $\Distr(\PRCCS)$;
$n$-ary distributions $\vartheta, \varsigma, \varrho, \dots$, refer specifically to distributions over RCCS terms such that all terms in their supports are $n$-ary.
Some operations are extended from processes to ($n$-ary) distributions as follows.
\begin{itemize}
    \item The composition of $\rho_1$ and $\rho_2$ is 
$
\rho_1\mid\rho_2:=\sum_{A\in\Supp(\rho_1)}\sum_{B\in\Supp(\rho_2)}\rho_1(A)\rho_2(B)\delta_{A\mid B}.
$
    \item The localization of $\rho$ to $L$ is $(L)\rho:=\sum_{A\in\Supp(\rho)}\rho(A)\delta_{(L)A}$.
    \item The renaming of $\rho$ by $f$ is $\rho[f]:=\sum_{A\in\Supp(\rho)}\rho(A)\delta_{A[f]}$.
    \item The channels appearing in $\rho$ is defined by $\Chan(\rho):=\bigcup_{A\in\Supp(\rho)}\Chan(A)$.
    \item The substitution $\vartheta\{\widetilde{P}/\widetilde{X}\}:=\sum_{R\in\Supp(\vartheta)}\vartheta(R)\delta_{R\{\widetilde{P}/\widetilde{X}\}}$.
\end{itemize}

\subsection{A new distribution-based semantics}
\label{subsec:semantics}

The operational semantics of $\mathrm{RCCS}$ consists of two parts defined by the \emph{probabilistic labeled transition systems} (pLTS for short) given in Figure \ref{fig:pLTS1} and  Figure \ref{fig:pLTS2}. 
\begin{itemize}
    \item To Part I of the pLTS (Figure \ref{fig:pLTS1}), the transition relation $\longrightarrow\;\subseteq \PRCCS\times \Act \times \Distr(\PRCCS)$, where $\ell\in \L$, $\alpha_i, \beta \in \Act$.
    Notably, $\bigoplus_{i \in I} p_i \tau.T_i$ can evolve to the process $T_i$ with probability $p_i$, thereby reaching the distribution $\sum_{i\in I}p_i\delta_{T_i}$.
    Other rules are consistent with the classical CCS model (refer to Figure~\ref{fig:LTS}).
    This part is defined from the perspective of processes. 
    In other words, it only specifies how processes evolve into distributions through a single transition.  
    \item Part II of the pLTS (Figure \ref{fig:pLTS2}) then defines how distributions will continue to evolve further.
    Intuitively, $\mu\xrightarrow{(\alpha,p)} \nu$ means that $\nu$ is obtained from $\mu$ by activating some transition $P\myrightarrow{\alpha}\rho$ for process $P\in\Supp(\mu)$ with probability $p$, and $P$ remains unchanged with probability $1-p$.
    Beyond its theoretical convenience (Lemma~\ref{lem:linearity_of_probabilistic_transition_strengthened}), this approach is also motivated by practical scenarios, such as in queueing theory, where arrival events generally conform to specific distributions.
\end{itemize}
    
\begin{figure}[t]	
    \begin{center}
        \begin{displaymath}
        \frac{}{~\sum_{i \in I} \alpha_i.T_i \myrightarrow{\alpha_i} \delta_{T_i}~} \quad
        \frac{}{~\bigoplus_{i \in I}p_i\tau.T_i \myrightarrow{\tau} \sum_{i\in I}p_i\delta_{T_i}~} \quad 
        \frac{~S \myrightarrow{\beta} \rho~}{~S\mid T \myrightarrow{\beta} \rho\mid\delta_{T}~}
        \end{displaymath}	
        \begin{displaymath}
            \frac{~T \myrightarrow{\beta} \rho~}{~S\mid T \myrightarrow{\beta} \delta_{S}\mid \rho~} \quad
            \frac{~S\myrightarrow{\ell}\rho_1 \quad T\myrightarrow{\overline{\ell}}\rho_2~}{S\mid T\myrightarrow{\tau}\rho_1 \mid \rho_2} \quad
            \frac{T \myrightarrow{\beta} \rho}{~(L)T \myrightarrow{\beta}(L)\rho ~} ~\beta,\overline{\beta} \notin L \quad
            \frac{T\{\fix X.T/X\} \myrightarrow{\beta} \rho}{\fix X.T \myrightarrow{\beta} \rho}
        \end{displaymath}
    \end{center}
    \vspace{-3mm}
    \caption{pLTS for $\mathrm{RCCS}$ - Part I.}
    \label{fig:pLTS1}
\end{figure}

\begin{figure}[t]	
    \small
    \begin{center}
   $\mu\xrightarrow{(\alpha,p)} \nu \quad \left(\begin{array}{c}
        \text{where~} \mu,\;\nu\in\Distr(\PRCCS),\;\alpha\in \Act,\; p\in(0,1],
        \text{and there exists a process }\\P\in\Supp(\mu)\text{~such that~}
        P\myrightarrow{\alpha}\rho \text{~for some $\rho$ and~}\nu=\mu+\mu(P)p(\rho-\delta_P).
    \end{array}\right)$  
    \end{center}
    \vspace{-3mm}
    \caption{pLTS for $\mathrm{RCCS}$ - Part II.}
    \label{fig:pLTS2}
\end{figure}

It can be also observed that distributions with finite support, which cover most practical scenarios, are closed under finite-step transitions.
Therefore, it is well-justified to consider only the distributions in $\Distr(\PRCCS)$ in probabilistic testing.
The following example helps comprehend our new semantics.
\begin{example}
\label{ex:semantics}
Let $Q_1:=\frac{1}{2}\tau.a\oplus\frac{1}{2}\tau.b$, $Q_2:=\fix X.\left(\frac{1}{3}\tau.a\oplus\frac{1}{3}\tau.b\oplus\frac{1}{3}\tau.X\right)$, $Q_3:=\tau.a+\tau.b.a$ and $Q_4:=\frac{1}{2}\tau.a\oplus\frac{1}{2}\tau.b.a$, as shown in Figure~\ref{fig:exp_semantics}.

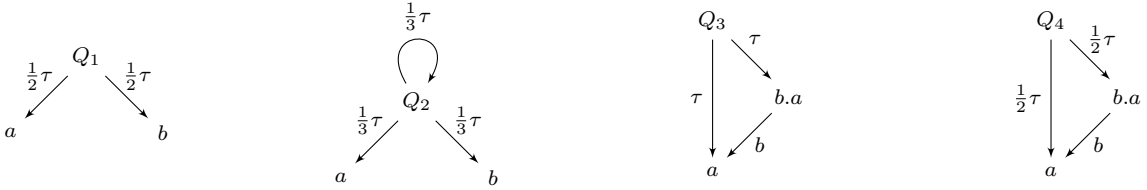
\begin{figure}[htb]
\centering
    \begin{minipage}[c]{.2\textwidth}
        \centering
        
\begin{tikzpicture}
\tikzset{vertex/.style = {}}
\tikzset{edge/.style = {->,> = latex'}}
\node[vertex] (a) at  (0,0) {\footnotesize$a$};
\node[vertex] (b) at  (1,1) {\footnotesize$Q_1$};
\node[vertex] (c) at  (2,0) {\footnotesize$b$};
\draw[edge] (b) to node[left=2, pos=0] {\footnotesize$\frac{1}{2}\tau$} (a);
\draw[edge] (b) to node[right=3, pos=0] {\footnotesize$\frac{1}{2}\tau$} (c);
\end{tikzpicture}

    \end{minipage}
\hspace{0.05\textwidth}
    \begin{minipage}[c]{.2\textwidth}
        \centering

\begin{tikzpicture}
\tikzset{vertex/.style = {}}
\tikzset{edge/.style = {->,> = latex'}}
\node[vertex] (a) at  (0,0) {\footnotesize$a$};
\node[vertex] (b) at  (1,1) {\footnotesize$Q_2$};
\node[vertex] (c) at  (2,0) {\footnotesize$b$};
\draw[edge] (b) to node[left=2, pos=0] {\footnotesize$\frac{1}{3}\tau$} (a);
\draw[edge] (b) to node[right=3, pos=0] {\footnotesize$\frac{1}{3}\tau$} (c);
\draw[edge] (b) to[loop above, in=60, out=120, looseness=8] node {\footnotesize$\frac{1}{3}\tau$} (b);
\end{tikzpicture}

    \end{minipage}
\hspace{0.05\textwidth}
    \begin{minipage}[c]{.2\textwidth}
        \centering

\begin{tikzpicture}
\tikzset{vertex/.style = {}}
\tikzset{edge/.style = {->,> = latex'}}
\node[vertex] (a) at  (1,-1) {\footnotesize$a$};
\node[vertex] (b) at  (1,1) {\footnotesize$Q_3$};
\node[vertex] (c) at  (2,0) {\footnotesize$b.a$};
\draw[edge] (b) to node[left] {\footnotesize$\tau$} (a);
\draw[edge] (b) to node[right=3, pos=0] {\footnotesize$\tau$} (c);
\draw[edge] (c) to node[right=2, pos=0.7] {\footnotesize$b$} (a);
\end{tikzpicture}

    \end{minipage}
\hspace{0.05\textwidth}
    \begin{minipage}[c]{.2\textwidth}
        \centering

\begin{tikzpicture}
\tikzset{vertex/.style = {}}
\tikzset{edge/.style = {->,> = latex'}}
\node[vertex] (a) at  (1,-1) {\footnotesize$a$};
\node[vertex] (b) at  (1,1) {\footnotesize$Q_4$};
\node[vertex] (c) at  (2,0) {\footnotesize$b.a$};
\draw[edge] (b) to node[left] {\footnotesize$\frac{1}{2}\tau$} (a);
\draw[edge] (b) to node[right=3, pos=0] {\footnotesize$\frac{1}{2}\tau$} (c);
\draw[edge] (c) to node[right=2, pos=0.7] {\footnotesize$b$} (a);
\end{tikzpicture}

    \end{minipage}
\vspace{-3mm}
\caption{Examples of the distribution-based semantics.}
\label{fig:exp_semantics}
\end{figure}

\begin{itemize}
    \item 
    According to the first part of the pLTS in Figure \ref{fig:pLTS1}, one has that 
    \begin{equation}
        \begin{aligned}
            &Q_1\myrightarrow{\tau}\rho_1:=\frac{1}{2}\delta_a+\frac{1}{2}\delta_b,\quad Q_2\myrightarrow{\tau}\rho_2:=\frac{1}{3}\delta_a+\frac{1}{3}\delta_b +\frac{1}{3}\delta_{Q_2},\\
            &Q_3\myrightarrow{\tau}\rho_3:=\delta_a,\quad 
            Q_3\myrightarrow{\tau}\rho_3':=\delta_{b.a},\quad
            Q_4\myrightarrow{\tau}\rho_4:=\frac{1}{2}\delta_a+\frac{1}{2}\delta_{b.a}.
        \end{aligned}
    \end{equation}
    Processes $Q_3$ and $Q_4$ illustrate a crucial distinction between the operators $+$ and $\oplus$: $\delta_{Q_3}$ can reach the distribution $\delta_{b.a}$, while in any distribution reachable from $\delta_{Q_4}$, the probability of $b.a$ never exceeds $\frac{1}{2}$.
    \item 
    Let $\mu:=\left\{\left(Q_1,\frac{1}{2}\right), \left(Q_2,\frac{1}{2}\right)\right\}$.
    According to the second part of the pLTS in Figure \ref{fig:pLTS2}, if we activate $Q_1\myrightarrow{\tau}\rho_1$ with probability $\frac{1}{2}$, then we have 
    \begin{equation}
        \mu\xrightarrow{\left(\tau, \frac{1}{2}\right)}\mu+\frac{1}{2}\mu(Q_1)\left(\rho_1-\delta_{Q_1}\right)=
        \frac{1}{4}\delta_{Q_1}+\frac{1}{2}\delta_{Q_2}+\frac{1}{8}\delta_a+\frac{1}{8}\delta_b.
    \end{equation}
    If we activate $Q_2\myrightarrow{\tau}\rho_2$ with probability $\frac{1}{2}$, then we have 
    \begin{equation}
        \mu\xrightarrow{\left(\tau, \frac{1}{2}\right)}\mu+\frac{1}{2}\mu(Q_2)\left(\rho_2-\delta_{Q_2}\right)=
        \frac{1}{2}\delta_{Q_1}+\frac{1}{3}\delta_{Q_2}+\frac{1}{12}\delta_a+\frac{1}{12}\delta_b.
    \end{equation}
\end{itemize}

\end{example}

\subsection{Linearity over probabilistic transition sequences}
\label{subsec:linearity}

Given a word $\pi = (\alpha_1,p_1) \cdots (\alpha_k,p_k)$ with length $\vert\pi\vert:=k$, and two distributions $\mu,\nu$, 
a \emph{(probabilistic) transition sequence} $\mu \myrightarrow{\pi} \nu$ takes the form $\mu \xrightarrow{(\alpha_1,p_1)} \dots \xrightarrow{(\alpha_k,p_k)} \nu$.
We refer to $\pi$ as a \emph{witness} and $\nu$ as a \emph{descendant} of $\mu$.
If $\pi\in(\{\tau\}\times(0,1])^{*}$, we say that it is an \emph{internal transition sequence} and further abbreviate it to $\mu\rightsquigarrow\nu$, which can be viewed as a probabilistic extension of the classical silent transition sequence $\Longrightarrow$. 
If $\pi\in(\Act\times\{1\})^{*}$, we say it is \emph{degenerate} and simply treat $\pi$ as identical to its action sequence, i.e., a word in $\Act^{*}$.
For clarity, we shall abbreviate $(a,1)$ to $a$. 

\begin{example}
    \label{ex:probabilistic_transition}
    We can observe that the behaviors of the processes $Q_1$ and $Q_2$ shown in Example~\ref{ex:semantics} are ``consistent'' under our semantics. 
    For instance, $\delta_{Q_1}$ can simulate $\delta_{Q_2}\myrightarrow{\tau}\rho_2$ by $\delta_{Q_1}\xrightarrow{\left(\tau, \frac{2}{3}\right)}\rho_2':=\frac{1}{3}\delta_a+\frac{1}{3}\delta_b +\frac{1}{3}\delta_{Q_1}$.
    On the other hand, we can prove by induction that 
    \begin{equation}
        \label{eq:Q2}
        \delta_{Q_2}\rightsquigarrow\nu_k:=\left\{\left(Q_2, \frac{1}{3^k}\right), \left(a, \frac{3^k-1}{2\cdot 3^k}\right), \left(b, \frac{3^k-1}{2\cdot 3^k}\right)\right\}.
    \end{equation}
    Since $\lim_{k\to\infty}\nu_k=\rho_1$, we see that $\delta_{Q_2}$ can ``almost'' simulate $\delta_{Q_1}\myrightarrow{\tau}\rho_1$.
    In fact, they are equivalent with respect to the probabilistic weak bisimilarity $\PWeak$, to be introduced in Section~\ref{sec:comparison}.
\end{example}

Since distributions are closed under convex combinations, it is natural that the probabilistic transition sequences are also closed under convex combinations in some way. 
The following lemma formalizes this idea (see Appendix \ref{appendix:proof_rccs} for its proof), which can be further extended to the convex combinations of any finite number of distributions.

\begin{restatable}{lemma}{LinOfProbTransStr}
    \label{lem:linearity_of_probabilistic_transition_strengthened}
    Let $\mu_1,\mu_2\in \Distr(\PRCCS)$ be two distributions and $p\in[0,1]$ be a real number.
    \begin{enumerate}
        \item 
        If $\mu_1\myrightarrow{\pi_1} \nu_1$ and $\mu_2\myrightarrow{\pi_2}\nu_2$, then $p\mu_1+(1-p)\mu_2\myrightarrow{\pi} p\nu_1+(1-p)\nu_2$ for some $\pi$ such that $\vert\pi\vert\le\vert\pi_1\vert+\vert\pi_2\vert$.
        \item 
        If $p\mu_1+(1-p)\mu_2\myrightarrow{\pi} \nu$, then there exist two distributions $\nu_1,\nu_2$ such that $\mu_1\myrightarrow{\pi_1}\nu_1$, $\mu_2\myrightarrow{\pi_2}\nu_2$ and $\nu=p\nu_1+(1-p)\nu_2$, where $\vert\pi_1\vert,\vert\pi_2\vert\le\vert\pi\vert$.
    \end{enumerate}
\end{restatable}

Lemma~\ref{lem:linearity_of_probabilistic_transition_strengthened} provides a concise mathematical description of the convex combination and decomposition operations for probabilistic transition sequences, which exemplifies a key advantage of our proposed semantics.
Note that the lengths of new transition sequences are bounded, which is crucial when performing inductions.
If $\pi \in (\{\tau\}\times(0,1])^{*}$, we can immediately obtain the following commonly used corollary for internal transition sequences.

\begin{corollary}
    \label{cor:linearity_of_probabilistic_transition}
    Let $\mu_1,\mu_2\in \Distr(\PRCCS)$ be two distributions and $p\in[0,1]$ be a real number.
    \begin{enumerate}
        \item \label{item:linear_combination} 
        If $\mu_1\rightsquigarrow \nu_1$ and $\mu_2\rightsquigarrow\nu_2$, then $p\mu_1+(1-p)\mu_2\rightsquigarrow p\nu_1+(1-p)\nu_2$.
        \item \label{item:linear_decomposition}
        If $p\mu_1+(1-p)\mu_2\rightsquigarrow \nu$, then there exist two distributions $\nu_1,\nu_2$ such that $\mu_1\rightsquigarrow\nu_1$, $\mu_2\rightsquigarrow\nu_2$ and $\nu=p\nu_1+(1-p)\nu_2$.
    \end{enumerate}
\end{corollary}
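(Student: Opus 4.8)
The plan is to read off Corollary~\ref{cor:linearity_of_probabilistic_transition} as the specialization of Lemma~\ref{lem:linearity_of_probabilistic_transition_strengthened} to witnesses drawn from $(\{\tau\}\times(0,1])^{*}$. Recall that $\mu\rightsquigarrow\nu$ means precisely $\mu\myrightarrow{\pi}\nu$ for some internal witness $\pi\in(\{\tau\}\times(0,1])^{*}$; hence, beyond a direct appeal to the lemma, the only thing I need to establish is that the witnesses produced by the lemma's constructions remain internal. The key observation is that neither the combination nor the decomposition construction underlying Lemma~\ref{lem:linearity_of_probabilistic_transition_strengthened} ever introduces an action label absent from its inputs: they merely interleave, concatenate, or select among the transitions supplied. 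Since $\tau$ is the sole label occurring in an internal witness, internality will be inherited for free, and the length bounds of the lemma (though no longer recorded in the statement of the corollary) guarantee that the manufactured witnesses are finite.

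For item~\ref{item:linear_combination} I would start from $\mu_1\rightsquigarrow\nu_1$ and $\mu_2\rightsquigarrow\nu_2$, i.e.\ $\mu_1\myrightarrow{\pi_1}\nu_1$ and $\mu_2\myrightarrow{\pi_2}\nu_2$ with $\pi_1,\pi_2\in(\{\tau\}\times(0,1])^{*}$, and invoke item~1 of Lemma~\ref{lem:linearity_of_probabilistic_transition_strengthened} to obtain a witness $\pi$ with $\vert\pi\vert\le\vert\pi_1\vert+\vert\pi_2\vert$ and $p\mu_1+(1-p)\mu_2\myrightarrow{\pi}p\nu_1+(1-p)\nu_2$. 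Every action appearing in $\pi$ is inherited from $\pi_1$ or $\pi_2$ and thus equals $\tau$, so $\pi$ is internal and $p\mu_1+(1-p)\mu_2\rightsquigarrow p\nu_1+(1-p)\nu_2$ follows. Dually, for item~\ref{item:linear_decomposition} I would start from $p\mu_1+(1-p)\mu_2\myrightarrow{\pi}\nu$ with $\pi$ internal, apply item~2 of the lemma to get $\nu_1,\nu_2$ together with witnesses $\pi_1,\pi_2$ satisfying $\vert\pi_1\vert,\vert\pi_2\vert\le\vert\pi\vert$, $\mu_1\myrightarrow{\pi_1}\nu_1$, $\mu_2\myrightarrow{\pi_2}\nu_2$ and $\nu=p\nu_1+(1-p)\nu_2$, and note that each action of $\pi_1$ and $\pi_2$ is selected from the $\tau$-only word $\pi$, whence both are internal and $\mu_1\rightsquigarrow\nu_1$, $\mu_2\rightsquigarrow\nu_2$.

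The one genuine point to check — the only step that is not a mere unfolding of the definition of $\rightsquigarrow$ — is the claim that the witnesses manufactured inside the proof of Lemma~\ref{lem:linearity_of_probabilistic_transition_strengthened} carry no action label beyond those of the inputs. I expect this to be the main (though slight) obstacle, and I would discharge it by inspecting that construction: the combination schedules the transitions of $\pi_1$ and $\pi_2$ with rescaled probabilities but untouched labels, while the decomposition restricts each step of $\pi$ to the portion of the support originating from $\mu_1$ or from $\mu_2$. In either direction the multiset of action labels is a sub-multiset of the original, so the property of being $\tau$-only is preserved, and the corollary follows.
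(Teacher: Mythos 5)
Your proposal is correct and matches the paper's route exactly: the paper derives Corollary~\ref{cor:linearity_of_probabilistic_transition} as an immediate specialization of Lemma~\ref{lem:linearity_of_probabilistic_transition_strengthened} to witnesses in $(\{\tau\}\times(0,1])^{*}$. The one point you flag and verify---that the combination construction only rescales probabilities while keeping action labels, and the decomposition only selects steps of $\pi$ (or inserts $\varepsilon$), so $\tau$-only witnesses yield $\tau$-only witnesses---is precisely the detail the paper leaves implicit in its ``we can immediately obtain'' remark, and your inspection of the lemma's inductive construction discharges it correctly.
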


We have previously stated that our distribution-based approach is semantically well-founded.
Here, we conclude this section with a few additional discussions in support of this claim.

\begin{itemize}
    \item 
    Our semantics is more suitable for characterizing the continuous nature of probabilities.
    For instance, to characterize branching bisimulation, Fu~\cite{fu_ModelIndependentApproach_2021} has employed a conditional probability strategy, where $Q_1$ and $Q_2$ in Example~\ref{ex:semantics} are equivalent in the sense that if $Q_i\myrightarrow{\tau} X$ then
    $\mathsf{Pr}(X\equiv A\mid X\not\equiv Q_i)=\frac{1}{2}$ for $A=a,b$ and $i=1,2$.
    Even so, this explanation cannot account for multi-step simulations, 
    However, under our semantics, such a multi-step simulation only requires, in a natural way, that equivalent distributions can almost reach equivalent distributions, as explained in Example~\ref{ex:probabilistic_transition}.
    \item 
    For probabilistic testing, the tree-based method often makes too fine a distinction between nodes, where different nodes are treated as distinct, even if they are labeled with the same process.
    For instance, in Example~\ref{ex:semantics}, $Q_3$ can reach leaf distribution $\{(a,1)\}$, $Q_4$ can reach leaf distribution $\left\{\left(a,\frac{1}{2}\right), \left(a,\frac{1}{2}\right)\right\}$, and
    these two distributions are distinct in the sense that only the latter can reach leaf distribution $\left\{\left(a, \frac{1}{2}\right), \left(\mathbf{0},\frac{1}{2}\right)\right\}$.
    The scheduler-based method~\cite{turrini_PolynomialTimeDecision_2015} also has this issue, where the resulting $a$'s from different branches of $Q_3$ are not treated as identical, as their traces differ.
    In contrast, our approach is entirely distribution-based, which
    eliminates such unnecessary differences since $\delta_a\xrightarrow{\left(a,\frac{1}{2}\right)}\left\{\left(a, \frac{1}{2}\right), \left(\mathbf{0},\frac{1}{2}\right)\right\}$.
\end{itemize}

At the same time, the discussion above clarifies in what sense our semantics improves upon tree-based approaches.
Since scheduler-based semantics involves substantially more intricate definitions, we defer a formal comparison to Section~\ref{subsec:comparison_weak}.

\section{A predicate-based framework for probabilistic testing}
\label{sec:testing_over_distributions}

In this section, we propose a predicate-based framework for probabilistic testing.
We begin by formalizing the testing outcome of a distribution.
Recall that the classical way (in Section~\ref{subsec:may_must}) is to introduce a special action $\omega$ to specify success. 
Here, we take \emph{predicates} on processes, i.e., the subsets of $\PRCCS$, as the extension. 
Given a distribution $\mu$ and a predicate $\varphi \subseteq \PRCCS$, we use $\mu(\varphi) = \sum_{P \in \varphi} \mu(P)$ to denote the probability that the predicate $\varphi$ is satisfied under the distribution $\mu$, and call it the \emph{satisfaction probability of $\varphi$ under $\mu$}. 
The \emph{testing outcomes of $\mu$ with respect to $\varphi$} is then defined by the set $\mathcal{O}^{\mu}_\varphi := \left\{\nu(\varphi) : \mu\rightsquigarrow\nu\right\}$, that is, the collection of satisfaction probabilities of $\varphi$ under the distributions reachable from $\mu$ by probabilistic internal transition sequences.
We shall further abbreviate $\mathcal{O}^{\delta_P}_\varphi$ to $\mathcal{O}^{P}_\varphi$.

\subsection{The structure of testing outcomes}
\label{subsec:structure_of_testing_outcome}

Classical testing equivalences are obtained by imposing varying degrees of constraints on the testing results. 
Here, we show that the testing outcomes defined above are determined by their boundaries.
Therefore, it suffices only to consider the boundary points when defining probabilistic testing equivalences.
Let $O_1, O_2$ be two sets of real numbers, and $k\in\mathbb{R}$ be a scalar. 
The Minkowski sum and scalar product for sets are defined as:
$O_1 + O_2 := \{ x + y : x \in O_1, y \in O_2\},
kO_1 := \{ kx : x \in O_1 \}$.
As we have had Lemma \ref{lem:linearity_of_probabilistic_transition_strengthened}, the intuition of the following lemma, which shows the linearity of testing outcomes, should be clear. 

\begin{restatable}{lemma}{linOfTestOut}
    \label{lem:linearity_of_testing_outcome}
    Let $\mu_1,\mu_2\in\Distr(\PRCCS)$ be two distributions, $\varphi\subseteq \PRCCS$ be a predicate, and $p\in[0,1]$ be a real number.
    The following statements are valid.
    \begin{enumerate}
        \item\label{item:linearity_of_testing_outcome_1} $\mathcal{O}^{p\mu_1+(1-p)\mu_2}_{\varphi}=p\mathcal{O}^{\mu_1}_{\varphi}+(1-p)\mathcal{O}^{\mu_2}_{\varphi}.$
        \item\label{item:linearity_of_testing_outcome_2} $\partial\,\mathcal{O}^{p\mu_1+(1-p)\mu_2}_{\varphi}=p\cdot\partial\,\mathcal{O}^{\mu_1}_{\varphi}+(1-p)\cdot\partial\,\mathcal{O}^{\mu_2}_{\varphi}$, for $\partial\in\{\sup,\inf\}$.
    \end{enumerate}
\end{restatable}

The proof is provided in Appendix~\ref{appendix:framework}.
Lemma~\ref{lem:linearity_of_testing_outcome} delineates the structure of testing outcomes.
If $x, y \in \mathcal{O}^\mu_\varphi$, then for any $p\in[0,1]$, we can apply Lemma~\ref{lem:linearity_of_testing_outcome} (\ref{item:linearity_of_testing_outcome_1}) and obtain that
\begin{equation}
    px + (1 - p)y \in p\mathcal{O}^\mu_\varphi + (1 - p)\mathcal{O}^\mu_\varphi = \mathcal{O}^{p\mu + (1 - p)\mu}_\varphi = \mathcal{O}^\mu_\varphi.
\end{equation}
Therefore, $\mathcal{O}^\mu_\varphi$ is a nonempty convex subset of $[0, 1]$, i.e., an interval or a singleton.
The supremum and infimum can almost fully determine $\mathcal{O}^\mu_\varphi$, except for the two boundary points.
This also explains why we need Lemma~\ref{lem:linearity_of_testing_outcome} (\ref{item:linearity_of_testing_outcome_2}).
In fact, when $\varphi$ is chosen appropriately, $\sup \mathcal{O}^\mu_\varphi$ corresponds to the may equivalence.
For fair testing, we focus on the quantity $\inf \left\{\sup \mathcal{O}^{\nu}_{\varphi}: \mu \rightsquigarrow \nu\right\}$, which represents the worst case success probability of distribution $\mu$ in any fair test $\mu \rightsquigarrow \nu$.
Let $\charMay_\varphi(\mu):=\sup \mathcal{O}^\mu_\varphi$ and $\charFS_{\varphi}(\mu):=\inf \left\{\sup \mathcal{O}^{\nu}_{\varphi}: \mu \rightsquigarrow \nu\right\}$, which are called the \emph{may} and \emph{fair characteristic} of $\mu$ with respect to $\varphi$.
Similarly, a shorthand notation is adopted if $\mu$ is a Dirac distribution $\delta_P$.
See the examples below for an intuitive understanding.
More explanations will be given in Sections~\ref{sec:model_independent}.

\begin{example}
\label{ex:linearity_testing_outcome}

Consider the two processes $P_1:=\tau.a, P_2:=\tau.a+\tau.(\fix X.\tau.X)$ in Example~\ref{ex:classical_testing}.
Define $\psi_a:=\{P \in \PRCCS:P\myrightarrow{a}\}$, which characterizes processes that can immediately perform action $a$.
Then we have $a\in \psi_a$ whereas $P_1 \notin \psi_a$.
Since $\delta_{P_1}\rightsquigarrow{\delta_a}$ and $\delta_{P_1}\rightsquigarrow\delta_{P_1}$ and $\delta_{P_1}(\psi_a)=0$ both hold, we have $\{0,1\}\subseteq\mathcal{O}^{P_1}_{\psi_a}$.
Then by the above analysis, we obtain that $\mathcal{O}^{P_1}_{\psi_a}=[0,1]$.
Similarly, we can deduce that $\mathcal{O}^{P_2}_{\psi_a}=[0,1]$.
Note that $\charMay_{\psi_a}(P_1)=\charMay_{\psi_a}(P_2)=1$ holds, which intuitively indicates that $P_1$ and $P_2$ ``may pass'' the test by observer $\overline{a}.\boldsymbol{\omega}$ in classical settings.

As $\delta_{P_2}\rightsquigarrow\delta_{\fix X.\tau.X}$ and $\mathcal{O}_{\psi_a}^{{\fix X.\tau.X}}=0$, we have $\charFS_{\psi_a}(P_2)=\inf\{\sup\mathcal{O}^{\nu}_{\psi_a}:\delta_{P_2}\rightsquigarrow\nu\}=0$.
However, for any $\nu$ such that $\delta_{P_1}\rightsquigarrow\nu$, we have $\nu=p\delta_{P_1}+(1-p)\delta_a$ for some $p\in[0,1]$.
Applying Lemma~\ref{lem:linearity_of_testing_outcome}, we have 
\begin{equation}
    \sup\mathcal{O}^{\nu}_{\psi_a}=p\sup\mathcal{O}^{P_1}_{\psi_a}+(1-p)\sup\mathcal{O}^{a}_{\psi_a}=1.
\end{equation}
Hence, $\charFS_{\psi_a}(P_1)=1\ne \charFS_{\psi_a}(P_2)$, which is consistent with the fact that $P_1 \not\FS P_2$.
\end{example}

\begin{example}
\label{ex:oplus_testing_outcome}

Consider the process $Q_2:=\fix X.\left(\frac{1}{3}\tau.a\oplus\frac{1}{3}\tau.b\oplus\frac{1}{3}\tau.X\right)$ involving the probabilistic choice operation in Example~\ref{ex:semantics}.
Let $\psi_a$ be the predicate defined above.
By Equation~\ref{eq:Q2}, we have $\frac{3^k-1}{2\cdot 3^k}\in\mathcal{O}^{Q_2}_{\psi_{a}}$ for each $k \in \mathbb N$.
Since $\frac{1}{2}$ can never be achieved by any descendant  of $\delta_{Q_2}$, one has that $\mathcal{O}^{Q_2}_{\psi_a}=[0,\frac{1}{2})$.
However, $\charMay_{\psi_a}(Q_2)=\sup \mathcal{O}^{Q_2}_{\psi_a}=\frac{1}{2}$, which intuitively means that there exists a distribution $\nu$, which can be obtained from $\delta_{Q_2}$ by an ``infinite way'' (see Figure~\ref{fig:infinite_case}), such that $\nu(\psi_a)=\frac{1}{2}$.
Applying the supremum and infimum can prevent us from such complex infinite cases, therefore making the formalization concise.

\begin{figure}[htb]
\centering
\begin{tikzpicture}
\tikzset{vertex/.style = {}}
\tikzset{edge/.style = {->,> = latex'}}
\node[vertex] (a) at  (0,1) {\footnotesize $Q_2$};
\node[vertex] (b) at  (1,-0.2) {\footnotesize $b$};
\node[vertex] (c) at  (2,1) {\footnotesize $Q_2$};
\node[vertex] (d) at  (1,2.2) {\footnotesize $a$};
\draw[edge] (a) to node[left, pos=0.7] {\footnotesize$\frac{1}{3}\tau$} (b);
\draw[edge] (a) to node[above] {\footnotesize$\frac{1}{3}\tau$} (c);
\draw[edge] (a) to node[left, pos=0.7] {\footnotesize$\frac{1}{3}\tau$} (d);
\node[vertex] (b) at  (3,-0.2) {\footnotesize $b$};
\node[vertex] (a) at  (4,1) {\footnotesize $Q_2$};
\node[vertex] (d) at  (3,2.2) {\footnotesize $a$};
\draw[edge] (c) to node[left, pos=0.7] {\footnotesize$\frac{1}{3}\tau$} (b);
\draw[edge] (c) to node[above] {\footnotesize$\frac{1}{3}\tau$} (a);
\draw[edge] (c) to node[left, pos=0.7] {\footnotesize$\frac{1}{3}\tau$} (d);
\node[vertex] (b) at  (5,-0.2) {\footnotesize $b$};
\node[vertex] (c) at  (6,1) {\footnotesize $\cdots$};
\node[vertex] (d) at  (5,2.2) {\footnotesize $a$};
\draw[edge] (a) to node[left, pos=0.7] {\footnotesize$\frac{1}{3}\tau$} (b);
\draw[edge] (a) to node[above] {\footnotesize$\frac{1}{3}\tau$} (c);
\draw[edge] (a) to node[left, pos=0.7] {\footnotesize$\frac{1}{3}\tau$} (d);
\node[vertex] (e) at (8,1) {\footnotesize $Q_2$};
\draw[edge] (c) to node[above] {\footnotesize$\frac{1}{3}\tau$} (e);
\node[vertex] (b) at  (9,-0.2) {\footnotesize $b$};
\node[vertex] (a) at  (10,1) {\footnotesize $\cdots$};
\node[vertex] (d) at  (9,2.2) {\footnotesize $a$};
\draw[edge] (e) to node[left, pos=0.7] {\footnotesize$\frac{1}{3}\tau$} (b);
\draw[edge] (e) to node[above] {\footnotesize$\frac{1}{3}\tau$} (a);
\draw[edge] (e) to node[left, pos=0.7] {\footnotesize$\frac{1}{3}\tau$} (d);
\end{tikzpicture}
\caption{The infinite transition sequence of $Q_2$ to obtain the satisfactory probability $\frac{1}{2}$.}
\label{fig:infinite_case}
\end{figure}
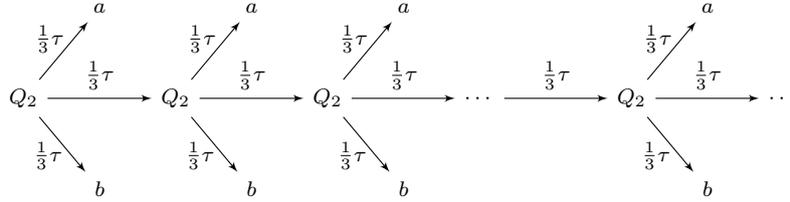

\end{example}

\paragraph{Divergence in probabilistic settings}

From Figure~\ref{fig:infinite_case}, we observe that although $Q_2$ has an infinite diverging branch, the probability that the process actually follows this branch is $0$.
In general, such ``innocent'' divergent behavior is often regarded as equivalent to non-divergence.
In fact, in our semantics, there is no intrinsic distinction between strict non-divergence and divergence with probability $0$.
For instance, under tree-based semantics, a process like $Q_1 = \frac{1}{2}\tau.a \oplus \frac{1}{2}\tau.b$, which cannot generate any diverging branch, can in our semantics fully simulate the innocent divergent behavior of $Q_2$ (see Example~\ref{ex:probabilistic_transition}).

The probabilistic testing equivalences focus on interaction aspects of probabilistic processes and are divergent-insensitive, as no external observer can distinguish between a nil process $\mathbf{0}$ and an always divergent process $\mu X.\tau.X$.
However, the predicate-based framework developed in this section can be extended to other equivalences that are divergent-sensitive.
For probabilistic processes, the relevant notion is not a binary ``divergent vs. non-divergent’’ classification, but the probability with which divergence occurs.
The probability of divergence can be quantified through testing outcomes with respect to suitable predicates. We can define two predicates:
\begin{equation}
    \begin{aligned}
         \psi_{\tau} &:= \{P \in \PRCCS : P \myrightarrow{\tau}\text{ only} \}.\\
        \psi_{\mathrm{div}} &:= \{P \in \PRCCS : P \in \text{some \emph{$\tau$-end component}} \},
    \end{aligned}
\end{equation}
where $\tau$-end components can be seen as probabilistic version of $\tau$-cycles and characterizes processes that divergent with probability $1$. We omit here the formal definition of  $\tau$-end component, and readers can refer \cite{HE2023105033} for its details.
Then, for a distribution $\mu$ over finite-state processes, we can use $\mathrm{div}_{\min}^\mu:=\inf \mathcal{O}_{\psi_\tau}^{\mu}$ (resp. $\mathrm{div}_{\max}^\mu:=\sup \mathcal{O}_{\psi_{\mathrm{div}}}^{\mu}$) to quantify the minimal (resp. maximal) probability of divergence under all probabilistic internal transition sequences. In this way, a distribution $\mu$ may diverge with any probability within the interval $[{\rm div}_{\min}^\mu, {\rm div}_{\max}^\mu]$.
Now consider three processes $P_1 = \mu X.(\frac{1}{2}\tau.a \oplus \frac{1}{2}\tau.X)$, $P_2 = \tau.a + \tau.(\mu X.\tau.X)$ and $P_3 = \frac{1}{2}\tau.a \oplus \frac{1}{2}\tau.\mu X.\tau.X$, we can verify that 
\begin{equation}
    [\mathrm{div}_{\min}^{P_1}, \mathrm{div}_{\max}^{P_1}] = \{0\},\quad [\mathrm{div}_{\min}^{P_2}, \mathrm{div}_{\max}^{P_2}] = [0, 1],\quad \text{and}\quad [\mathrm{div}_{\min}^{P_3}, \mathrm{div}_{\max}^{P_3}]  = \{\frac{1}{2}\}.
\end{equation}
Therefore, we see that ``innocent'' divergent processes and  ``pathological'' divergent processes (diverge with positive probability) can be formally distinguished by this divergence testing.

\subsection{Approximation by degenerate sequences}
\label{subsec:approximation}

Since both the external and internal characterization of probabilistic testing will be defined within this framework, we provide several general technical lemmas in this part.
Recall that a witness $\pi$ of an internal transition sequence is degenerate if $\pi\in \tau^{*}$, i.e., without uncertainty.
Lemma~\ref{lem:bounded_by_degenerate_witness} shows that any reachable satisfaction probability can be both lower and upper approximated by some probabilistic transition sequences with degenerate witnesses. 

\begin{restatable}{lemma}{bdByDegenWit}
    \label{lem:bounded_by_degenerate_witness}
    Let $\varphi\subseteq\PRCCS$ be a predicate and $\mu,\nu\in\Distr(\PRCCS)$ be two distributions.
    If $\mu\rightsquigarrow\nu$, then there exist two distributions $\nu_1,\nu_2\in\Distr(\PRCCS)$ such that 
    \begin{enumerate}
        \item \label{item:bounded_by_degenerate_witness_1}
        for $i=1,2$,  $\mu\rightsquigarrow\nu_i$ with a degenerate witness $\pi_i\in\tau^{*}$, and
        \item \label{item:bounded_by_degenerate_witness_2}
        $\nu_1(\varphi)\le\nu(\varphi)\le\nu_2(\varphi)$.
    \end{enumerate}
\end{restatable}

The proof of Lemma~\ref{lem:bounded_by_degenerate_witness} (see Appendix~\ref{appendix:framework}) does not rely on the fact that $\varphi$ is a predicate.
If we regard $\varphi$ as its double dual, i.e., a function $\varphi^{*}: \Distr(\PRCCS) \to [0,1]$, which maps any distribution $\mu$ to $\mu(\varphi)$,
then for any distributions $\mu_1$ and $\mu_2$ and $p\in[0,1]$, we have that
\begin{equation}
    \varphi^{*}(p\mu_1 + (1-p)\mu_2) = p\mu_1(\varphi) + (1-p)\mu_2(\varphi) = p\varphi^{*}(\mu_1) + (1-p)\varphi^{*}(\mu_2).
\end{equation}
This means that $\varphi^{*}$ is a linear functional, which is the only crucial property of $\varphi$ used in the proof. 
Therefore, if we replace $\varphi^{*}$ with any other linear functional, the lemma still holds.
Lemma~\ref{lem:linearity_of_testing_outcome} verifies that the functional $\mu \mapsto \sup\,\mathcal{O}_{\varphi}^{\mu}$, is also linear.
Combining Lemma~\ref{lem:linearity_of_testing_outcome} and the proof of Lemma~\ref{lem:bounded_by_degenerate_witness}, we get the following lemma. 

\begin{lemma}
    \label{lem:outcome_bounded_by_degenerate_witness}
    Let $\varphi\subseteq\PRCCS$ be a predicate and $\mu,\nu\in\Distr(\PRCCS)$ be two distributions.
    If $\mu\rightsquigarrow\nu$ , then there exist two distributions $\nu_1,\nu_2\in\Distr(\PRCCS)$ such that 
    \begin{enumerate}
        \item \label{item:outcome_bounded_by_degenerate_witness_1}
        for $i=1,2$,  $\mu\rightsquigarrow\nu_i$ with a degenerate witness $\pi_i\in\tau^{*}$, and
        \item\label{item:outcome_bounded_by_degenerate_witness_2}
        $\sup\,\mathcal{O}_{\varphi}^{\nu_1}\le\sup\,\mathcal{O}_{\varphi}^{\nu}\le\sup\,\mathcal{O}_{\varphi}^{\nu_2}$.
    \end{enumerate}
\end{lemma}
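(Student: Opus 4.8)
The plan is to observe that Lemma~\ref{lem:outcome_bounded_by_degenerate_witness} is formally identical to Lemma~\ref{lem:bounded_by_degenerate_witness}, with the satisfaction probability $\nu\mapsto\nu(\varphi)$ replaced by the may characteristic $\nu\mapsto\sup\,\mathcal{O}_{\varphi}^{\nu}$. Both maps are \emph{linear functionals} on $\Distr(\PRCCS)$: linearity of $\mu\mapsto\mu(\varphi)=\varphi^{*}(\mu)$ is immediate from its definition, whereas linearity of $F(\mu):=\sup\,\mathcal{O}_{\varphi}^{\mu}$ is exactly Lemma~\ref{lem:linearity_of_testing_outcome}(\ref{item:linearity_of_testing_outcome_2}) specialized to $\partial=\sup$. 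I would therefore prove once and for all the abstract claim: \emph{for every linear functional $F:\Distr(\PRCCS)\to[0,1]$ and every $\mu\rightsquigarrow\nu$, there exist $\nu_1,\nu_2$ reachable from $\mu$ by degenerate witnesses with $F(\nu_1)\le F(\nu)\le F(\nu_2)$.} Taking $F=\sup\,\mathcal{O}_{\varphi}^{\cdot}$ yields the present lemma, while $F=\varphi^{*}$ recovers Lemma~\ref{lem:bounded_by_degenerate_witness}.

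The abstract claim I would establish by induction on the witness length $k=\vert\pi\vert$ of $\mu\rightsquigarrow\nu$. For $k=0$ we have $\nu=\mu$ and take $\nu_1=\nu_2=\mu$, reachable via the empty (degenerate) witness. For the inductive step, split off the first transition $\mu\xrightarrow{(\tau,p_1)}\mu'\rightsquigarrow\nu$, whose tail has length $k-1$. By the definition of the transition relation (Figure~\ref{fig:pLTS2}), $\mu'=\mu+\mu(P)p_1(\rho-\delta_P)=(1-p_1)\mu+p_1\widehat\mu$, where $\widehat\mu:=\mu+\mu(P)(\rho-\delta_P)$ results from the \emph{same} transition activated with probability $1$, so $\mu\xrightarrow{(\tau,1)}\widehat\mu$ is degenerate. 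Applying the second part of Lemma~\ref{lem:linearity_of_probabilistic_transition_strengthened} to $(1-p_1)\mu+p_1\widehat\mu\rightsquigarrow\nu$ decomposes the tail into $\mu\rightsquigarrow\nu^{a}$ and $\widehat\mu\rightsquigarrow\nu^{b}$ with $\nu=(1-p_1)\nu^{a}+p_1\nu^{b}$ and with \emph{both witnesses of length $\le k-1$} — the length bound in that lemma is precisely what makes the induction close. The induction hypothesis then supplies degenerate-reachable $\nu^{a}_1,\nu^{a}_2$ from $\mu$ bounding $F(\nu^{a})$, and $\nu^{b}_1,\nu^{b}_2$ from $\widehat\mu$ bounding $F(\nu^{b})$; prepending $\mu\xrightarrow{(\tau,1)}\widehat\mu$ makes $\nu^{b}_1,\nu^{b}_2$ degenerate-reachable from $\mu$ as well.

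The one point needing care — and the step I expect to be the real obstacle — is how to merge these four candidates into a single $\nu_1$ and a single $\nu_2$. The naive choice $\nu_2=(1-p_1)\nu^{a}_2+p_1\nu^{b}_2$ is \emph{wrong}: a convex combination of two degenerate internal sequences need not be realizable by a degenerate witness, because when the two branches share processes in $\Supp(\mu)$ the merged sequence is forced to activate them with a fractional, non-unit probability. I would instead \emph{select} rather than average. By linearity, $F(\nu)=(1-p_1)F(\nu^{a})+p_1F(\nu^{b})$ is a convex combination of $F(\nu^{a})$ and $F(\nu^{b})$, hence bounded above by $\max\{F(\nu^{a}_2),F(\nu^{b}_2)\}$ and below by $\min\{F(\nu^{a}_1),F(\nu^{b}_1)\}$. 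Taking $\nu_2$ to be whichever of $\nu^{a}_2,\nu^{b}_2$ attains the larger $F$-value, and $\nu_1$ whichever of $\nu^{a}_1,\nu^{b}_1$ attains the smaller one, gives $F(\nu_1)\le F(\nu)\le F(\nu_2)$ with $\nu_1,\nu_2$ genuinely reachable from $\mu$ by degenerate witnesses, completing the induction and hence the lemma.
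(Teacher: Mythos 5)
Your proposal is correct and is essentially the paper's own proof: the paper likewise observes that the argument for Lemma~\ref{lem:bounded_by_degenerate_witness} uses only the linearity of $\varphi^{*}$, invokes Lemma~\ref{lem:linearity_of_testing_outcome}~(\ref{item:linearity_of_testing_outcome_2}) to see that $\mu\mapsto\sup\,\mathcal{O}_{\varphi}^{\mu}$ is also a linear functional, and concludes; your explicit induction (splitting $\mu'=(1-p_1)\mu+p_1\widehat\mu$, decomposing the tail via Lemma~\ref{lem:linearity_of_probabilistic_transition_strengthened} with its length bound, prepending the degenerate $(\tau,1)$ step, and selecting rather than averaging the candidates) reproduces the appendix proof of Lemma~\ref{lem:bounded_by_degenerate_witness} in that abstract setting. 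Your warning that convex combinations of degenerate witnesses need not be degenerate matches the paper's own remark after Corollary~\ref{cor:calculated_by_degenerate_witness}, and your min/max selection is equivalent to the paper's ``without loss of generality'' ordering.
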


A major application of Lemma~\ref{lem:bounded_by_degenerate_witness} and Lemma~\ref{lem:outcome_bounded_by_degenerate_witness} is that when examining the boundaries of testing outcomes, it suffices to focus on transition sequences with degenerate witnesses. 

\begin{corollary}
    \label{cor:calculated_by_degenerate_witness}
    Given $\mu\in\Distr(\PRCCS)$ and $\varphi\subseteq\PRCCS$, the following equalities are valid. 
    \begin{enumerate} 
        \item \label{item:calculated_by_degenerate_witness_1}
        $\charMay_{\varphi}(\mu)=\sup\,\left\{\nu(\varphi) : \mu \myrightarrow{\tau^{*}} \nu\right\}$,
        \item \label{item:calculated_by_degenerate_witness_2}
        $\charFS_{\varphi}(\mu)= \inf \left\{\sup \mathcal{O}^{\nu}_{\varphi} : \mu \myrightarrow{\tau^{*}} \nu\right\}$.
    \end{enumerate}
\end{corollary}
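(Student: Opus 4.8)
The plan is to establish each equality as a two-sided inequality, where the set of degenerate-witness descendants sits inside the set of all internal-transition descendants (this handles the easy direction), and the approximation lemmas supply the reverse direction. The underlying subset fact is that a degenerate witness $\pi\in\tau^{*}$ is in particular an internal witness, since $\tau^{*}=(\{\tau\}\times\{1\})^{*}\subseteq(\{\tau\}\times(0,1])^{*}$; hence $\mu\myrightarrow{\tau^{*}}\nu$ implies $\mu\rightsquigarrow\nu$, so the family of distributions reachable by a degenerate witness is a subset of the family reachable by an arbitrary internal witness.

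For part (\ref{item:calculated_by_degenerate_witness_1}), unfolding the definition gives $\charMay_\varphi(\mu)=\sup\mathcal{O}^\mu_\varphi=\sup\{\nu(\varphi):\mu\rightsquigarrow\nu\}$. Because the degenerate descendants form a subset, the inequality $\sup\{\nu(\varphi):\mu\myrightarrow{\tau^{*}}\nu\}\le\charMay_\varphi(\mu)$ is immediate. For the reverse inequality I would take an arbitrary $\nu$ with $\mu\rightsquigarrow\nu$ and apply Lemma~\ref{lem:bounded_by_degenerate_witness} to obtain a distribution $\nu_2$ reachable by a degenerate witness with $\nu(\varphi)\le\nu_2(\varphi)$, so that $\nu(\varphi)\le\sup\{\nu'(\varphi):\mu\myrightarrow{\tau^{*}}\nu'\}$; taking the supremum over all such $\nu$ yields $\charMay_\varphi(\mu)\le\sup\{\nu(\varphi):\mu\myrightarrow{\tau^{*}}\nu\}$. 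The two bounds together give the claim. Only the upper-approximating distribution $\nu_2$ is used here; the lower one $\nu_1$ is irrelevant for a supremum.

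Part (\ref{item:calculated_by_degenerate_witness_2}) is symmetric but uses the outcome version of the approximation lemma and the opposite choice of distribution. Unfolding, $\charFS_\varphi(\mu)=\inf\{\sup\mathcal{O}^\nu_\varphi:\mu\rightsquigarrow\nu\}$. Since the degenerate descendants are a subset, passing to a smaller index set can only enlarge the infimum, giving $\charFS_\varphi(\mu)\le\inf\{\sup\mathcal{O}^\nu_\varphi:\mu\myrightarrow{\tau^{*}}\nu\}$ directly. For the reverse I would fix an arbitrary $\nu$ with $\mu\rightsquigarrow\nu$ and invoke Lemma~\ref{lem:outcome_bounded_by_degenerate_witness} to produce a degenerate-witness distribution $\nu_1$ with $\sup\mathcal{O}^{\nu_1}_\varphi\le\sup\mathcal{O}^\nu_\varphi$; thus $\inf\{\sup\mathcal{O}^{\nu'}_\varphi:\mu\myrightarrow{\tau^{*}}\nu'\}\le\sup\mathcal{O}^\nu_\varphi$, and taking the infimum over all $\nu$ reachable from $\mu$ gives $\inf\{\sup\mathcal{O}^{\nu'}_\varphi:\mu\myrightarrow{\tau^{*}}\nu'\}\le\charFS_\varphi(\mu)$.

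There is no substantive obstacle: both lemmas are already in hand, and the argument is a routine subset-plus-squeeze on suprema and infima. The one point requiring care is matching the approximation direction to the extremum — the upper approximation $\nu_2$ for the supremum in part (\ref{item:calculated_by_degenerate_witness_1}) versus the lower approximation $\nu_1$ for the infimum-of-suprema in part (\ref{item:calculated_by_degenerate_witness_2}) — since the wrong choice of $\nu_1,\nu_2$ would break the relevant inequality. I would also note that the inner supremum $\sup\mathcal{O}^\nu_\varphi$ in part (\ref{item:calculated_by_degenerate_witness_2}) still ranges over all internal transition sequences out of $\nu$ and is left untouched, as Lemma~\ref{lem:outcome_bounded_by_degenerate_witness} compares these suprema directly; no separate appeal to part (\ref{item:calculated_by_degenerate_witness_1}) is needed.
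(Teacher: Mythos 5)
Your proof is correct and follows exactly the route the paper intends: the corollary is stated there as an immediate application of Lemma~\ref{lem:bounded_by_degenerate_witness} and Lemma~\ref{lem:outcome_bounded_by_degenerate_witness} (no separate proof is given), and your two-sided argument --- subset inclusion of degenerate witnesses for the easy direction, the upper approximant $\nu_2$ against the supremum in part~(\ref{item:calculated_by_degenerate_witness_1}), and the lower approximant $\nu_1$ against the infimum in part~(\ref{item:calculated_by_degenerate_witness_2}) --- supplies precisely the omitted details, with the approximation directions matched correctly to each extremum.
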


Corollary~\ref{cor:calculated_by_degenerate_witness} provides a simplification of the definition of characteristics.
However, it is important to note that convex combinations of two degenerate sequences are not necessarily degenerate.
Therefore, the simplified form on the right cannot be viewed as the essence of the testing characteristics.
\section{Unifying approach to probabilistic testing equivalences}
\label{sec:model_independent}
In this section, we introduce the unifying internal characterization for probabilistic testing equivalences ($\PXDiamond$ and $\PXBox$) and explore their relationship.
Then we extend the classical may and fair equivalences to the RCCS model and show that they are exactly the external characterizations for $\PXDiamond$ and $\PXBox$, respectively. 
All these equivalences are proved to be congruences.
We then transferred our approach to the pCSP model and obtained similar conclusions, which further support the generality of our framework.
Detailed proofs are provided in Appendix~\ref{appendix:proof_model_independent}.

\subsection{Towards a unifying approach to probabilistic testing}
In \cite{fu_NamePassingCalculus_2015}, the authors formulated a unifying characterization of testing equivalences for the $\pi$-calculus.
Since this approach abstracts away from the specific details of external actions, it can be adapted to CCS with minimal modifications as follows.
We say that a process $P$ is \emph{observable}, written $P\Downarrow $, if $P \Longrightarrow \myrightarrow{\ell} P'$ for some $\ell\in \mathcal{L}$ and $P'$.
A process $P$ is \emph{strongly observable}, written $P\observable$, if $P' \Downarrow$ for all $P'$ such that $P \Longrightarrow P'$.
Let $\R$ be a relation on $\PCCS$.
We define the following three properties.
\begin{enumerate}
    \item $\R$ is \emph{extensional} if the following property holds:
    \begin{enumerate}
        \item If $P_1 \,\R\, P_2$ and $Q_1 \,\R\, Q_2$, then 
        $(P_1 \mid Q_1) \;\R\; (P_2 \mid Q_2)$.
        \item If $P \,\R\, Q$, then $(a) P \;\R\; (a)Q$ for all $a \in \Chan$.
    \end{enumerate}
    \item
    $\R$ is \emph{equipollent} if $P\Downarrow \iff Q\Downarrow$ whenever $P \,\R\, Q$.
    \item
    $\R$ is \emph{strongly equipollent} if $P\observable \iff Q\observable$ whenever $P \,\R\, Q$.
\end{enumerate}

The \emph{diamond equality} $\CDiamond$ is the largest equipollent, extensional equivalence on $\PCCS$.
The \emph{box equality} $\CBox$ is the largest strongly equipollent, extensional equivalence on $\PCCS$.
The main results regarding $\CBox$ and $\CDiamond$ presented in \cite{fu_NamePassingCalculus_2015} are  valid in CCS as well. 
We summarize them as the following propositions.

\begin{proposition}
\label{prop:classical_model_independent_characterization}
    The following statements are valid.
    \begin{enumerate}
        \item $(\FS)=(\CBox)\;\subsetneq\;(\CDiamond)=(\May)$.
        \item Both the box equality $\CBox$ and the diamond equality $\CDiamond$ are congruences on $\PCCS$.
    \end{enumerate}
\end{proposition}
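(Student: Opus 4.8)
The plan is to prove the two equalities $(\May)=(\CDiamond)$ and $(\FS)=(\CBox)$ independently, each by double inclusion, and then to read off both the strict inclusion and the congruence claim as corollaries. The whole argument hinges on a single \emph{translation} between the observer world (which uses the dedicated success action $\omega$) and the observer-free world (which uses only observability $\Downarrow$ and strong observability $\observable$). Throughout I fix processes $P,Q$ and an observer $O$, pick a fresh channel $e\notin\Chan(P)\cup\Chan(Q)\cup\Chan(O)$, let $O'$ be $O$ with every occurrence of $\boldsymbol{\omega}=\omega.\mathbf{0}$ replaced by $e.\mathbf{0}$, and put $L=\Chan(P)\cup\Chan(Q)\cup\Chan(O)$, so that $e\notin L$ and all other names of $P,Q,O'$ lie in $L$.

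The core lemma I would establish is the correspondence
\begin{align*}
P \text{ passes the may test } O &\iff (L)(P\mid O')\Downarrow,\\
P \text{ passes the fair test } O &\iff (L)(P\mid O')\observable.
\end{align*}
The idea is that a testing sequence of $P\mid O$ is exactly an internal ($\tau$-only) run, and localising every name of $L$ forces $P$ and $O'$ to proceed only by synchronisation or private $\tau$'s while leaving the fresh action $e$ as the \emph{unique} visible action of $(L)(P\mid O')$; thus ``reaching an $\omega$-enabled state'' becomes ``reaching an $e$-enabled, i.e.\ observable, state''. For the may case this is immediate; for the fair case I use that every state reachable from $P\mid O$ by $\Longrightarrow$ has the shape $P''\mid O''$, so fair-success (every reachable state can still reach an $\omega$-enabled state) translates precisely into ``every $\Longrightarrow$-descendant of $(L)(P\mid O')$ is observable'', which is $\observable$. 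Granting the correspondence, the inclusions $(\CDiamond)\subseteq(\May)$ and $(\CBox)\subseteq(\FS)$ are short: from $P\CDiamond Q$, extensionality clause (a) with $O'\CDiamond O'$ gives $P\mid O'\CDiamond Q\mid O'$, clause (b) applied channel-by-channel gives $(L)(P\mid O')\CDiamond (L)(Q\mid O')$, and equipollence then equates their $\Downarrow$-status; by the correspondence $P$ and $Q$ pass the same may tests, so $\CDiamond$ satisfies the may predicate. The $\CBox$/$\FS$ argument is identical with $\observable$ and strong equipollence in place of $\Downarrow$ and equipollence.

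For the reverse inclusions I would show that $\May$ is equipollent and that $\FS$ is strongly equipollent, both being extensional already by the classical fact that $\May$ and $\FS$ are congruences for $\mid$ and localisation. For equipollence: if $P\May Q$ and $P\Downarrow$ via $P\Longrightarrow\myrightarrow{\ell}P'$, then $P$ passes the may test $\overline{\ell}.\boldsymbol{\omega}$, hence so does $Q$, which forces $Q\Longrightarrow\myrightarrow{\ell}Q'$, i.e.\ $Q\Downarrow$. For strong equipollence I use the single ``saturating'' observer $O_{\Downarrow}=\sum_{a\in L_0}\bigl(a.\boldsymbol{\omega}+\overline{a}.\boldsymbol{\omega}\bigr)$ over the finite set $L_0=\Chan(P)\cup\Chan(Q)$, and check that $P$ passes the fair test $O_{\Downarrow}$ iff $P\observable$: since names never grow along transitions, every descendant of $P$ uses only channels in $L_0$, so a reachable state $P'\mid O_{\Downarrow}$ can still reach an $\omega$-enabled state exactly when $P'\Downarrow$, while states of the form $P''\mid\boldsymbol{\omega}$ are already successful. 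The same equivalence for $Q$ yields $P\observable\iff Q\observable$. This gives $(\May)\subseteq(\CDiamond)$ and $(\FS)\subseteq(\CBox)$, completing both equalities. The strict inclusion $(\CBox)\subsetneq(\CDiamond)$ is then inherited from the classical $(\FS)\subsetneq(\May)$ recorded in the preceding proposition, and statement~(2) follows because $\CDiamond=\May$ and $\CBox=\FS$ are full congruences on $\PCCS$ by the classical congruence results for may and fair testing, which cover guarded choice and recursion, the operators not already handled by extensionality.

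The step I expect to be the main obstacle is the correspondence lemma, and in particular its fair half: one must verify carefully that localisation neither creates nor destroys reachable behaviour, i.e.\ that the $\tau$-transitions of $(L)(P\mid O')$ are in exact bijection with the testing-sequence steps of $P\mid O'$, and that the ``$O$ has already signalled success'' states (where the observer component is $\boldsymbol{\omega}$, now $e.\mathbf{0}$) are handled so that the quantifier alternation defining fair-success matches the shape $\forall P'\!\Longrightarrow\! P''\,(P''\Downarrow)$ of $\observable$. Establishing the saturating-observer characterisation of $\observable$ also relies on the routine but essential invariant that $\Chan(P')\subseteq\Chan(P)$ for every descendant $P'$, which is exactly what makes the finite observer $O_{\Downarrow}$ adequate.
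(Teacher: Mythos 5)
Your proof of statement (1) is essentially correct, but it takes a genuinely different route from the paper's. The paper does not argue classically at all: it reads the proposition off from its probabilistic theory, combining Theorem~\ref{thm:congr} (the projections $\Restr{(\PCBox)}{CCS}$ and $\Restr{(\PCDiamond)}{CCS}$ are congruences), Corollary~\ref{cor:probabilistic_may_fs} (the identities and strict inclusion among $\PCBox$, $\PCDiamond$, $\PCFS$, $\PCMay$), and the conservativity results, Propositions~\ref{prop:classical_generalization} and~\ref{prop:testing_classical_generalization}, which identify those projections with $\CBox$, $\CDiamond$, $\FS$, $\May$. Your argument instead stays entirely inside classical CCS: the translation $O\mapsto O'$, $(L)(P\mid O')$ between $\omega$-success and observability, the maximality arguments via equipollence/extensionality of $\May$ and $\FS$, and the saturating observer $O_{\Downarrow}$ are exactly the ingredients of the original model-independent proof of Fu et al.\ (the paper's cited source for this proposition), and they check out, including the two correspondence lemmas whose bookkeeping you rightly flag as the delicate part. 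The trade-off: the paper's route exhibits the classical statement as a corollary of the probabilistic framework (one of its advertised contributions), while yours is self-contained and independent of that machinery.

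Statement (2), however, has a genuine gap. You derive congruence of $\CBox$ and $\CDiamond$ from the claim that $\May$ and $\FS$ are full congruences on $\PCCS$ ``by the classical congruence results for may and fair testing, which cover guarded choice and recursion.'' For $\May$ (weak trace equivalence) this is defensible; for $\FS$ it is not a citable fact. Fair-testing equivalence is not preserved by unrestricted choice --- $a\FS\tau.a$, yet $a+b\not\FS\tau.a+b$, as the observer $\overline{b}.\boldsymbol{\omega}$ shows --- so even preservation under the paper's $\tau$-guarded sums needs a root-condition-style argument rather than a citation, and, more seriously, closure of $\FS$ under the fixpoint operator is precisely what the classical literature lacks (the paper stresses that previous studies ``disown'' the recursion operator). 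The paper itself obtains this closure only through Theorem~\ref{thm:congr}, whose fixpoint case rests on the nontrivial Lemma~\ref{lem:add_tau}. So your proof of (2) outsources exactly the hard part to results that do not exist in the form you need; to repair it you would have to prove fixpoint (and guarded-choice) congruence of $\FS$ and $\May$ directly, or establish congruence of $\CBox$ and $\CDiamond$ by a direct argument.
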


Proposition~\ref{prop:classical_model_independent_characterization} forms the top-level hierarchy structure in Figure~\ref{fig:summary}.
We will provide proofs from a probabilistic view in Section~\ref{subsec:comparison_with_classical_testing}. 
However, since these definitions and properties are based on processes, while all our characterizations rely on distributions, we introduce the following projection and lifting operations.
\begin{definition}
    \label{def:lifting}
    Let $\R$ be a relation on $\Distr(\PRCCS)$ and $\E$ be an equivalence over $\PRCCS$.
    \begin{enumerate}
        \item The projection of $\R$ onto a set $\mathcal{X}\subseteq\PRCCS$ is defined by 
        $
            \Restr{\R}{\mathcal{X}}:=\{(P_1,P_2)\in\mathcal{X}^2:\delta_{P_1}\,\R\,\delta_{P_2}\}.
        $
        \item The lifting of $\E$ is defined by
        $
            \E^\dag:=\{(\mu_1,\mu_2):\forall\;\mathcal{C}\in \PRCCS/\E, \mu_1(\mathcal{C})=\mu_2(\mathcal{C})\}.
        $
    \end{enumerate}
\end{definition}

Note that the lifting of any equivalence must be closed under the convex combination.
We take $\PCCS$ as a subset of $\PRCCS$.
The projections of $\R$ onto $\PCCS$ and $\PRCCS$ are abbreviated as $\Restr{\R}{CCS}$ and $\Restr{\R}{RCCS}$, respectively. 
Furthermore, since the projection of $\R$ is always clear, we usually omit such subscripts unless we want to emphasize the operation itself.
For convenience, we formalize the probabilistic strong bisimulation within our semantics as a tool, which means two processes can simulate every transition of each other.

\begin{definition}[Probabilistic strong bisimulation]
    \label{def:p_strong}
    A relation $\R$ on $\PRCCS$ is a \emph{probabilistic strong bisimulation} if for any $\alpha\in\mathcal{A}$, the following statements are valid.
    \begin{enumerate}
        \item If $P\,\R\,Q$ and $P\myrightarrow{\alpha}\sum_{i\in I}p_i\delta_{P_i}$, then $Q\myrightarrow{\alpha}\sum_{i\in I}p_i\delta_{Q_i}$ and $P_i\R Q_i$ for all $i\in I$.
        \item If $P\,\R\,Q$ and $Q\myrightarrow{\alpha}\sum_{i\in I}p_i\delta_{Q_i}$, then $P\myrightarrow{\alpha}\sum_{i\in I}p_i\delta_{P_i}$ and $P_i\R Q_i$ for all $i\in I$.
    \end{enumerate}
\end{definition}

The \emph{probabilistic strong bisimilarity} over distributions is defined to be $\Strong^\dag$, where equivalence $\Strong$ is the largest probabilistic strong bisimulation over $\PRCCS$.
It should be clear that both $\Strong$ and $\Strong^\dag$ are well-defined equivalences enjoying the following structural properties.
\begin{restatable}{proposition}{propStrongBisim}
    \label{prop:property_strong_bisimulation}
    Suppose $P, Q, A\in\PRCCS$, and $T, R$ are RCCS terms containing free variable $X$.
    Let $\mu_1,\mu_2, \nu_1\in\Distr(\PRCCS)$.
    \begin{enumerate}
        \item $P \mid Q\Strong Q\mid P$, $P\mid\mathbf{0}\Strong P$, $(P\mid Q)\mid A\Strong P\mid(Q\mid A)$, and $\fix X.T\Strong T\{\fix X.T/X\}$.
        \item If $L\subseteq\Chan$ and $L\cap \Chan(Q)=\emptyset$, then $(L)Q\Strong Q$ and $(L)P\mid Q\Strong (L)(P\mid Q)$.
        \item\label{item:property_strong_bisimulation_3} If $P\Strong Q$, then $R\{P/X\}\Strong R\{Q/X\}$.
        \item
        If $\mu_1\Strong^\dag\mu_2$ and $\mu_1\rightsquigarrow\nu_1$, then $\mu_2\rightsquigarrow\nu_2\Strong^\dag\nu_1$ for some $\nu_2$.
    \end{enumerate}
\end{restatable}

\begin{remark}
The probabilistic strong bisimilarity puts rather strong constraints on distributions such that no external tests based on observable properties can distinguish such an equivalent pair.
Proposition~\ref{prop:property_strong_bisimulation} underpins this claim.
For instance, $\tau.a\Strong\frac{1}{2}\tau.a\oplus\frac{1}{2}\tau.a$ .
Therefore, in the subsequent discussion, we do not need to distinguish them.
\end{remark}

\subsection{Probabilistic box equivalence and diamond equivalence}
\label{subsec:probabilistic_diamond_box}

\paragraph{Observability}
We now generalize the (strong) equipollence and the extensionality to the RCCS model.
We first define 
\begin{equation}
    \psi_\L:= \left\{ P \in \PRCCS : P \myrightarrow{\ell} \text{ for some } \ell \in \L\right\}
\end{equation} to indicate that a process can perform an external action
and abbreviate $\mathcal{O}^{\mu}_{\psi_\L}$, $\charMay_{\psi_\L}$, and $\charFS_{\psi_\L}$ as $\PEquip{\mu}$, $\charMay_{\L}$, and $\charFS_{\L}$, respectively.
Note that the essence of the unifying approach lies in its disregard for the specific behavior of external actions, which is effectively captured by $\psi_\L$.
However, with the introduction of probabilities, the observability of processes should no longer be  a dichotomy of observable/unobservable. 
For example, consider two processes 
\begin{align}
    &P_5:=\tau.(0.99 \tau.a\oplus 0.01\tau.\mathbf{0})+\tau.(0.01 \tau.a\oplus 0.99\tau.\mathbf{0}),\text{\quad and}\\
    &P_6:=\tau.(0.49 \tau.a\oplus 0.51\tau.\mathbf{0})+\tau.(0.51 \tau.a\oplus 0.49\tau.\mathbf{0}),
\end{align}
where we might agree that $P_5$ has a higher likelihood of being observable than $P_6$.
Fortunately, the testing framework introduced in Section~\ref{sec:testing_over_distributions} allows us to refine the definition of observability as follows.

\begin{definition}[Probabilistic equipollence]
\label{def:probabilistic_equipollence}
Let $\R$ be a relation over $\Distr(\PRCCS)$.
    Then 
    \begin{enumerate}
        \item $\R$ is \emph{probabilistically equipollent} if $\charMay_{\L}(\mu_1)=\charMay_{\L}(\mu_2)$ holds whenever $\mu_1\,\R\,\mu_2$.
        \item $\R$ is \emph{probabilistically strongly equipollent} if $\charFS_{\L}(\mu_1)=\charFS_{\L}(\mu_2)$ holds whenever $\mu_1\,\R\,\mu_2$.
    \end{enumerate}
\end{definition}
The following examples should help to understand the definition.

\begin{example}
    We can interpret $\charMay_{\L}(\mu)$ as the degree to which $\mu$ is observable.
    Accordingly, our claim that $P_5$ is more likely to be observable than $P_6$ is supported by the fact that $\charMay_{\L}(P_5) = 0.99 > 0.51 = \charMay_{\L}(P_6)$.
    Similarly, since $\charFS_{\L}(P_5)=0.01<0.49=\charFS_{\L}(P_6)$, we may say that $P_5$ is ``more strongly observable'' than $P_6$.
    Then, a probabilistically (strongly) equipollent relation requires that any pair of relevant distributions have the same degree of being (strongly) observable.
\end{example}

\begin{example}
    \label{ex:equipollence}
    Let $\R:=\{(\delta_{P_1}, \delta_{P_2})\}$, where $P_1, P_2$ are defined in Example~\ref{ex:classical_testing}.
    Since $\Chan(P_1)=\Chan(P_2)=\{a\}$, the distinguishing power of $\psi_\mathcal{L}$ is no greater than that of $\psi_a$ in this simple case.
    According to Example~\ref{ex:linearity_testing_outcome}, $\R$ is probabilistically equipollent but not probabilistically strongly equipollent.
\end{example}

From the examples above, we observe that for a distribution $\mu$, the values of $\charMay_{\L}(\mu)$ and $\charFS_{\L}(\mu)$ are largely unrelated.
However, there remains a strong underlying connection between these two notions of characteristics, which can be quantified as follows.

\begin{restatable}{lemma}{seqTrans}
\label{lem:sequence_transform}
    Let $\mu\in\Distr(\PRCCS)$ be a distribution, $L$ be a finite set of channels such that $\Chan(\mu)\subseteq L$ and $b\not\in L$ be a fresh channel.
    Define $Q=\sum_{a\in L}a+\sum_{a\in L}\overline{a}+b$ and $\mu^L=(L)(\mu\mid\delta_{Q})$. Then
    \begin{equation}
        \charMay_{\L}(\mu)+\charFS_{\L}(\mu^L)=1.
    \end{equation}
\end{restatable}

Intuitively, the only external action $\mu^L$ can perform is $b$ (through $Q\myrightarrow{b}\delta_{\mathbf 0}$).
But if $\mu$ itself can perform a external action $a$, then it can interact with $\delta_Q$ in such a way that their parallel composition loses the ability to perform $b$. 
Roughly speaking, the silent transition sequences of $\mu$ and $\mu^L$ correspond one-to-one, with the property that the satisfaction probabilities of their respective descendants with respect to $\L$ are complementary.
\begin{example}
    Let $Q:=\overline a+b$. Then it is easy to compute and verify that
    \begin{equation}
        \charFS_{\L}((a)(P_5\mid Q))=0.01=1-\charMay_{\L}(P_5),\quad\text{and}\quad
        \charFS_{\L}((a)(P_6\mid Q))=0.49=1-\charMay_{\L}(P_6).
    \end{equation}
\end{example}

\paragraph{Extensionality} 
Given $\mathcal{X}\subseteq \PRCCS$, we define $\Delta(\mathcal{X}):=\left\{\delta_{P}:P\in \mathcal{X}\right\}$.
A set $\mathcal{D}$ is a \emph{testing context} if $\Delta(\PCCS)\subseteq\mathcal{D}\subseteq\Distr(\PRCCS)$ and $\mathcal{D}$ is closed under the parallel composition, localization, and renaming operations.
Notably, we allow relations to exhibit extensionality properties only in a testing context, facilitating the establishment of connections to the classical case in Section~\ref{subsec:comparison_with_classical_testing}.

\begin{definition}[$\mathcal{D}$-Extensional]
\label{def:probabilistic_extensionality}
    Suppose $\mathcal{D}$ is a testing context.
    A relation $\R$ on $\Distr(\PRCCS)$ is \emph{$\mathcal{D}$-extensional} if for $\mu_1,\mu_2\in\Distr(\PRCCS)$ and $\nu\in\mathcal{D}$ the following property holds:
    \begin{enumerate}
        \item ($\mathcal{D}$-compositional) If $\mu_1 \,\R\, \mu_2$, then $\left(\mu_1 \mid \nu\right) \;\R\; (\mu_2 \mid \nu)$.
	\item (Localizable) If $\mu_1 \,\R\, \mu_2$, then $(L)\mu_1\;\R\; (L)\mu_2$ for all finite set $L \subseteq \Chan$.
    \end{enumerate}
\end{definition}

Now we are prepared to present the unifying internal characterization for probabilistic testing equivalences.
Definition~\ref{def:probabilistic_extensionality} and Definition~\ref{def:probabilistic_equipollence} induce the following two equivalences, both of which are parametric with respect to the testing context $\mathcal{D}$.
The main result of this section holds for all testing contexts.
Therefore, in most cases, we omit the explicit qualification ``for any $\mathcal{D}$''.

\begin{definition}[$\mathcal{D}$-diamond equivalence]
\label{def:probabilistic_diamond_equivalence}
	The \emph{$\mathcal{D}$-diamond equivalence} $\PXDiamond$ is the largest probabilistically equipollent, $\mathcal{D}$-extensional relation on $\Distr(\PRCCS)$.
\end{definition}

\begin{definition}[$\mathcal{D}$-box equivalence]
\label{def:probabilistic_box_equivalence}
	The \emph{$\mathcal{D}$-box equivalence} $\PXBox$ is the largest probabilistically strongly equipollent, $\mathcal{D}$-extensional relation on $\Distr(\PRCCS)$.
\end{definition}

The $\mathcal{D}$-extensionality can be obtained through an inductive procedure.
The \emph{$\mathcal{D}$-extensional closure} of a relation $\R$ is defined by $\R^{\mathcal{D}}:=\bigcup_{k}\R_k$, where $\R_0:=\R$ and $\R_{k+1}:=\{((L)\mu_1 \mid \nu, (L)\mu_1\mid \nu) : \mu_1\;\R_k\;\mu_2, \nu\in\mathcal{D}, L\subseteq \Chan\text{~is finite}\}$.
Given an $n$-ary RCCS term $S$, for convenience, we shall assume w.l.o.g. that the free variables in $S$ are always $X\equiv X_1, X_2\dots, X_n$ and
abbreviate $S\{\widetilde{P}/\widetilde{X}\}$ as $S[\widetilde{P}]$ for any $n$-tuple $\widetilde{P}$ of processes.
The convention for an $n$-ary distribution is similar.
Based on these concepts, Lemma~\ref{lem:show_membership} is the primary tool we use to prove two distributions are equivalent w.r.t. $\PXDiamond$ or $\PXBox$ in this paper.

\begin{restatable}{lemma}{showMem}
    \label{lem:show_membership}
    Given a testing context $\mathcal{D}$, the following statements are valid.
    \begin{enumerate}
        \item \label{item:show_membership_1} 
        Let $\mu_1, \mu_2 \in \Distr(\PRCCS)$ be two distributions.
        If $\mu_1\Strong^\dag \mu_2$, then $\mu_1\PXDiamond\mu_2$.
        \item \label{item:show_membership_2}
        Let $\R$ be a relation on $\Distr(\PRCCS)$,
        If $\R^{\mathcal{D}}$ is probabilistically equipollent, then $\R\,\subseteq\,\PXDiamond$.
        \item \label{item:show_membership_3}
        Let $P, Q\in \PRCCS$.
        If $\R:=\{(\vartheta[P], \vartheta[Q]):\vartheta\text{~is a $1$-ary distribution}\}$ is probabilistically equipollent, or equivalently, $\charMay_\L(\vartheta[P])=\charMay_\L(\vartheta[Q])$ for any $1$-ary distribution $\vartheta$, then $P\PXDiamond Q$.
    \end{enumerate}
    Moreover, the analog statements also hold for $\PXBox$.
\end{restatable}

The following example provides some insights into the relationship between $\PXBox$ and $\PXDiamond$.

\begin{example}
    \label{ex:diamond_box_conterexp}
    Let $P_1:=\tau.a$, $P_2:=\tau.a+\tau.\mu X.\tau.X, P_3:=\fix X.(\tau.a+\tau.X)$.
    \begin{itemize}
        \item Let $\R:=\{(\vartheta[P_1], \vartheta[P_2]):\vartheta\text{~is $1$-ary~}\}$.
        We can prove that $\R$ is probabilistically equipollent through an induction on transitions.
        Applying Lemma~\ref{lem:show_membership}, we have $P_1\PXDiamond P_2$.
    \item Example~\ref{ex:linearity_testing_outcome} shows that 
    $\charFS_{\psi_a}(P_1)\ne \charFS_{\psi_a}(P_2)$, which implies that $(\delta_{P_1},\delta_{P_2})\not\in\;\PXBox$.
    Therefore, we deduce that $\PXDiamond \not\subseteq \PXBox$ 
    \item Let $\R':=\{(\vartheta[P_1], \vartheta[P_3]):\vartheta\text{~is $1$-ary~}\}$.
    Following a similar argument, we obtain that $\R'$ satisfies both types of equipollence.
    Therefore, both $P_1\PXDiamond P_3$ and $P_1\PXBox P_3$ hold for any testing context $\mathcal{D}$.
    \end{itemize}
\end{example}

Example~\ref{ex:diamond_box_conterexp} supports the central theorem of this subsection: the $\mathcal{D}$-box equivalence $\PXBox$ is strictly included in the $\mathcal{D}$-diamond equivalence $\PXDiamond$.

\begin{theorem}
\label{thm:probabilistic_box_diamond}
    $\PXBox\;\subsetneq\;\PXDiamond$.
\end{theorem}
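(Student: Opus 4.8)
The plan is to establish the two directions separately: the inclusion $\PXBox \subseteq \PXDiamond$ and then strictness $\PXBox \neq \PXDiamond$. For the inclusion, I would argue that any probabilistically strongly equipollent, $\mathcal{D}$-extensional relation is also probabilistically equipollent. Since $\PXBox$ is the \emph{largest} relation satisfying the former pair of conditions and $\PXDiamond$ is the largest satisfying the latter, it suffices to show that strong equipollence implies equipollence, i.e., that $\charFS_{\L}(\mu_1)=\charFS_{\L}(\mu_2)$ forces $\charMay_{\L}(\mu_1)=\charMay_{\L}(\mu_2)$. This is where I expect the main obstacle to lie, because Example~\ref{ex:equipollence} shows $\charMay_{\L}$ and $\charFS_{\L}$ are \emph{not} pointwise determined by one another, so the implication cannot hold at the level of a single pair of distributions.

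To overcome this, I would exploit the $\mathcal{D}$-extensionality of $\PXBox$ together with Lemma~\ref{lem:sequence_transform}, which is precisely the bridge between the two characteristics: it gives $\charMay_{\L}(\mu)+\charFS_{\L}(\mu^L)=1$ where $\mu^L=(L)(\mu\mid\delta_Q)$ is built by composition, localization, and a fixed testing process $Q$. The strategy is: suppose $\mu_1 \PXBox \mu_2$. Choose a finite $L\supseteq\Chan(\mu_1)\cup\Chan(\mu_2)$ and a fresh channel $b$, and form the process $Q=\sum_{a\in L}a+\sum_{a\in L}\overline{a}+b$. Since $\Delta(\PCCS)\subseteq\mathcal{D}$ gives $\delta_Q\in\mathcal{D}$, $\mathcal{D}$-compositionality yields $\mu_1\mid\delta_Q \PXBox \mu_2\mid\delta_Q$, and localizability yields $\mu_1^L \PXBox \mu_2^L$. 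Strong equipollence of $\PXBox$ then gives $\charFS_{\L}(\mu_1^L)=\charFS_{\L}(\mu_2^L)$, and applying Lemma~\ref{lem:sequence_transform} to both sides converts this into $\charMay_{\L}(\mu_1)=\charMay_{\L}(\mu_2)$. Thus $\PXBox$ is itself probabilistically equipollent, and being $\mathcal{D}$-extensional, it is contained in the largest such relation $\PXDiamond$.

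For the strict inequality, I would simply invoke the witnesses already furnished by Example~\ref{ex:diamond_box_conterexp}: the processes $P_1:=\tau.a$ and $P_2:=\tau.a+\tau.\fix X.\tau.X$ satisfy $P_1 \PXDiamond P_2$ (shown via Lemma~\ref{lem:show_membership} using probabilistic equipollence of the one-ary substitution instances), whereas $\charFS_{\psi_a}(P_1)\neq\charFS_{\psi_a}(P_2)$ from Example~\ref{ex:linearity_testing_outcome} certifies that $(\delta_{P_1},\delta_{P_2})\notin\;\PXBox$. Hence $\PXDiamond\not\subseteq\PXBox$, which combined with the inclusion above delivers $\PXBox\subsetneq\PXDiamond$.

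The delicate point deserving care is verifying that the hypotheses of Lemma~\ref{lem:sequence_transform} are met simultaneously for $\mu_1$ and $\mu_2$ with a \emph{single} common $L$ and $b$; this is harmless since $\Chan(\mu_1)$ and $\Chan(\mu_2)$ are finite, so a common finite $L$ containing both and a single fresh $b\notin L$ always exist. I should also confirm that $\charFS_{\L}$ in the statement of Lemma~\ref{lem:sequence_transform} is exactly the notion of observability used in Definition~\ref{def:probabilistic_equipollence}, so that strong equipollence of $\PXBox$ is directly applicable to the transformed distributions $\mu_i^L$. Once these bookkeeping checks are in place, the argument closes cleanly.
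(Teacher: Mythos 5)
Your proposal is correct and follows essentially the same route as the paper's own proof: it derives probabilistic equipollence of $\PXBox$ from its $\mathcal{D}$-extensionality (using $\delta_Q\in\Delta(\PCCS)\subseteq\mathcal{D}$ to form $\mu_i^L$) together with the complementarity identity of Lemma~\ref{lem:sequence_transform}, and then cites Example~\ref{ex:diamond_box_conterexp} for strictness. The only cosmetic difference is that you allow any finite $L\supseteq\Chan(\mu_1)\cup\Chan(\mu_2)$ where the paper takes equality, which changes nothing.
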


\begin{proof}
    Let $\mu_1, \mu_2\in\Distr(\PRCCS)$ be two distributions such that $\mu_1\PXBox\mu_2$.
    Take $L=\Chan(\mu_1)\cup\Chan(\mu_2)$ and $b\in\Chan\setminus L$. 
    Since $\PXBox$ is $\mathcal{D}$-extensional, one has that $\mu_1^L\PXBox\mu_2^L$ and thus $\charFS_{\L}(\mu_1^L)=\charFS_{\L}(\mu_2^L)$, where $\mu_i^L$ is defined as in Lemma~\ref{lem:sequence_transform}.
    Together with Lemma~\ref{lem:sequence_transform}, we have
    \begin{equation}
        \charMay_{\L}(\mu_1)=1-\charFS_{\L}(\mu_1^L)=1-\charFS_{\L}(\mu_2^L)=\charMay_{\L}(\mu_2).
    \end{equation} 
    Therefore, $\PXBox$ is probabilistically equipollent, thus contained in the largest relation $\PXDiamond$.
    By Example~\ref{ex:diamond_box_conterexp}, the strictness is witnessed by $\tau.a$ and $\tau.a+\tau.\fix X.\tau.X$.
    \qed
\end{proof}

\begin{remark}
The establishment of $\mu_1^L\PXBox\mu_2^L$ relies on the fact that $\Delta(\PCCS)\subseteq\mathcal{D}$. 
This requirement is natural as two probabilistic testing equivalent processes should not be distinguishable by a classical observer.
\end{remark}

\subsection{External characterization for probabilistic testing equivalences}
\label{subsec:external_characterization}

We proceed to develop the external characterizations for probabilistic testing.
A $\mathrm{RCCS}^\omega$ process is obtained from a $\mathrm{RCCS}$ process by renaming some channels by $\omega$.
The set of all $\mathrm{RCCS}^\omega$ processes will be denoted by $\PRCCS^\omega$. The semantics of $\mathrm{RCCS}^\omega$ is defined under the action set $\Act\cup\{\omega\}$ as done in Section~\ref{subsec:semantics}.
The testing machinery of De Nicola and Hennessy \cite{denicola_TestingEquivalencesProcesses_1984} can be generalized to the RCCS model as follows.
\begin{enumerate}
	\item A special action $\omega \notin \Act$ is used to mark success.
	\item 
    We call any $O\in \PRCCS^\omega$ an observer process and any $o \in \Distr(\PRCCS^\omega)$ an observer distribution.
    A testing context $\mathcal{D}\subseteq\Distr(\PRCCS)$ can induce a set of observer distributions 
    $
        \mathcal{D}^\omega:=\{o[a\mapsto \omega] : o \in\mathcal{D}, a\in\Chan(o)\}.
    $
	\item We use $\psi_{\omega}:= \left\{ P\in\PRCCS^\omega : P \myrightarrow{\omega} \right\}$ to indicate that a process is in a successful state.
	\item The testing outcomes of process distribution $\mu$ by observer distribution $o$ (with respect to $\psi_{\omega}$) is then defined by the set
	$\mathcal{O}_{\psi_{\omega}}^{\mu\mid o}$. 
    We shall abbreviate $\mathcal{O}_{\psi_{\omega}}^{\mu\mid o}$, $\charMay_{\psi_\omega}$, and $\charFS_{\psi_\omega}$ as $\PTest{\mu\mid o}$, $\charMay_\omega$ and $\charFS_\omega$, respectively.
\end{enumerate}

\begin{example}
    The observer set induced by $\Delta(\PCCS)$ corresponds to all classical observers defined in Section~\ref{subsec:may_must}.
    The observer set induced by $\Distr(\PRCCS)$ is exactly all $\mathrm{RCCS}^\omega$ distributions with full probabilistic capabilities.
\end{example}

Next, we extend the predicates over binary relations for classical processes in Section \ref{subsec:may_must} to the probabilistic settings.

\begin{definition}
\label{def:prob_may_fs_predicate}
Given a relation $\R$ on $\Distr(\PRCCS)$ and a testing context $\mathcal{D}$, we say that
\begin{itemize}
    \item 
        $\R$ satisfies \emph{$\mathcal{D}$-may predicate} if $\mu_1\,\R\,\mu_2$ implies that, for every observer distribution $o\in \mathcal{D}^\omega$, it holds that 
    $\charMay_{\omega}(\mu_1\mid o)=\charMay_{\omega}(\mu_2\mid o)$, and that
    \item 
        $\R$ satisfies \emph{$\mathcal{D}$-fair predicate} if $\mu_1\,\R\,\mu_2$ implies that, for every observer distribution $o\in\mathcal{D}^\omega$, it holds that 
    $\charFS_{\omega}(\mu_1\mid o)=\charFS_{\omega}(\mu_2\mid o).$
\end{itemize}
\end{definition}

The well-definedness of the following two equivalence relations is evident.

\begin{definition}[$\mathcal{D}$-may equivalence]
    \label{def:pmay}
	The \emph{$\mathcal{D}$-may equivalence} $\PXMay$ is the largest equivalence on $\Distr(\PRCCS)$ that satisfies the $\mathcal{D}$-may predicate.
\end{definition}
\begin{definition}[$\mathcal{D}$-fair equivalence]
    \label{def:pfs}
	The \emph{$\mathcal{D}$-fair equivalence} $\PXFS$ is the largest equivalence on $\Distr(\PRCCS)$ that satisfies the $\mathcal{D}$-fair predicate.
\end{definition}

Table~\ref{tab:comparision} compares these definitions with those introduced in Section~\ref{subsec:probabilistic_diamond_box}. 
We will show that the equivalences sharing the same characteristic, i.e., those aligned in the same column of Table~\ref{tab:comparision}, are identical.

For the soundness of external characterizations, it should be clear that all definitions and lemmas in Section~\ref{sec:testing_over_distributions} are independent of the action set $\mathcal{A}$.
The strong bisimilarity $\Strong^\dag$ can also be extended to $\mathrm{RCCS}^\omega$.
Applying these techniques, we can obtain the $\mathcal{D}$-extensionality of $\PXMay$ and $\PXFS$.

\begin{table}[t]
    \centering
    \begin{tabular}{ccccccc}
        \toprule
        Characterizations & Predicate $\varphi$ & Characteristic $\charMay_{\varphi}$ & Characteristic $\charFS_{\varphi}$ \\
        \midrule
        internal style & $\psi_\L$ & $\PXDiamond$ & $\PXBox$  \\
        external style & $\psi_\omega$ & $\PXMay$ &  $\PXFS$ \\
        \bottomrule
    \end{tabular}
    \caption{Comparison between the unifying and external characterizations.}
    \label{tab:comparision}
    \vspace{-5mm}
\end{table}

\begin{restatable}{lemma}{probMayFsExt}
\label{lem:probabilistic_may_fs_extensional}
    Both $\PXMay$ and $\PXFS$ are $\mathcal{D}$-extensional.
\end{restatable}

Lemma~\ref{lem:external_equipollent} quantifies the relationship between the testing outcomes with respect to $\psi_\L$ and $\psi_\omega$ by carefully designing an observer to absorb all external actions, from which we can derive that the probabilistic may (fair, resp.) equivalence is (strong, resp.) equipollent.

\begin{restatable}{lemma}{extEqip}
    \label{lem:external_equipollent}
    Let $\mu\in\Distr(\PRCCS)$ be a distribution and $L\subseteq \Chan$ be a finite set of channels containing $\Chan(\mu)$. 
    Define observer $O_L:=\sum_{a\in L}a.\boldsymbol\omega+\sum_{a\in L}\overline{a}.\boldsymbol\omega$ and $o_L:=\delta_{O_L}$. 
    Then
    \begin{enumerate}
        \item 
        $\charMay_{\omega}(\mu\mid o_L)=\charMay_{\L}(\mu)$, and
        \item
        $\charFS_{\omega}(\mu\mid o_L)=\charFS_{\L}(\mu)$.
    \end{enumerate}
\end{restatable}

An example is also provided to further elucidate the above lemma.
\begin{example}
    \label{ex:reduction}
    Let $Q_2:=\fix X.\left(\frac{1}{3}\tau.a\oplus\frac{1}{3}\tau.b\oplus\frac{1}{3}\tau.X\right)$ and $O_L:=\overline a.\boldsymbol{\omega}+\overline b.\boldsymbol{\omega}$.
    Note that the descendants of $\delta_{Q_2}$ can interact with $o_L:=\delta_{O_L}$ if and only if $a$ or $b$ are exposed.
    We claim that every internal transition sequence of $Q_2$ can induce a corresponding one of $\delta_{Q_2}\mid o_L$.
    For instance, as shown in Figure~\ref{fig:infinite_case}, we have 
    \begin{equation}
        \delta_{Q_2}\rightsquigarrow\frac{1}{3^k}\delta_{Q_2}+\frac{3^k-1}{2\cdot 3^k}\delta_{a}+\frac{3^k-1}{2\cdot 3^k}\delta_b.
    \end{equation}
    Then by Lemma~\ref{lem:linearity_of_probabilistic_transition_strengthened}
    \begin{equation}
        \delta_{Q_2}\mid o_L\rightsquigarrow\frac{1}{3^k}\delta_{Q_2\mid O_L}+\frac{3^k-1}{2\cdot 3^k}\delta_{a\mid O_L}+\frac{3^k-1}{2\cdot 3^k}\delta_{b\mid O_L}\rightsquigarrow\frac{1}{3^k}\delta_{Q_2\mid O_L}+\frac{3^k-1}{3^k}\delta_{\mathbf 0\mid \boldsymbol{\omega}}.
    \end{equation}
    The satisfactory probabilities of these two distributions concerning $\psi_\L$ and $\psi_\omega$ are identical (both equal to $1-\frac{1}{3^k}$).
    Based on this observation, we can reduce the calculation of $\charMay_{\L}(Q_2)$ to that of $\charMay_{\omega}(Q_2\mid O_L)$ but not vice versa since $O_L$ is fixed in this equality.
\end{example}

Then the soundness follows.

\begin{restatable}[Soundness]{corollary}{corProbMayFsEqip}
    \label{cor:probabilistic_may_fs_equipollent}
    $\PXMay\,\subseteq\,\PXDiamond$ and $\PXFS\,\subseteq\,\PXBox$.
\end{restatable}

As for completeness, since the $\mathcal{D}$-may/fair predicates rely on the external action $\omega$, which is not allowed in $\PRCCS$, we need to eliminate this action.
For any process $P\in\PRCCS$, we choose some $a\notin\Chan(P)$.
Recall that $P[\omega\mapsto a]$ and $\mu[\omega\mapsto a]$ stand for renaming $P$ and $\mu$ by function $[\omega\mapsto a]$, respectively.
The basic fact we can observe is that $P\myrightarrow{\omega}$ if and only if $P[\omega\mapsto a]\myrightarrow{a}$.
It is evident that $\mu[\omega\mapsto a]$ is a distribution in $\Distr(\PRCCS)$.
The following lemma reveals the relationship between testing outcomes of $\mu$ and $\mu[\omega\mapsto a]$.

\begin{restatable}{lemma}{probDiaBoxPred}
    \label{lem:probabilistic_diamond_box_predicate}
    Let $\mu\in\Distr(\PRCCS^\omega)$ be a distribution, $L$ be a finite set of channels containing $\Chan(\mu)$ and $a\in \Chan\setminus L$ be a fresh channel. 
    Then we have that 
    \begin{enumerate}
        \item \label{item:probabilistic_diamond_box_predicate_1}
        $\charMay_{\omega}(\mu)=\charMay_{\L}((L)(\mu[\omega\mapsto a]))$, and
        \item \label{item:probabilistic_diamond_box_predicate_2}
        $\charFS_{\omega}(\mu)=\charFS_{\L}((L)(\mu[\omega\mapsto a]))$.
    \end{enumerate}
\end{restatable}

Lemma~\ref{lem:probabilistic_diamond_box_predicate} provides the reduction in the converse direction,  with an intuition similar to that in Example~\ref{ex:reduction}.
Now we are ready to prove the following correspondence theorem.

\begin{theorem}[External characterizations]
\label{thm:model_indenpent_characterizations}
    Let $\mu_1,\mu_2\in \Distr(\PRCCS)$ be two distributions. 
    \begin{enumerate}
        \item \label{item:model_indenpent_characterizaitons_1}
        $\mu_1\PXDiamond\mu_2$ if and only if  $\mu_1\PXMay\mu_2$.
        \item \label{item:model_indenpent_characterizaitons_2}
        $\mu_1\PXBox\mu_2$ if and only if $\mu_1\PXFS\mu_2$.
        \item $\PXFS\;\subsetneq\;\PXMay$.
    \end{enumerate}
\end{theorem}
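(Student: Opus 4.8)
The plan is to treat statements (1) and (2) as mutual inclusions and to deduce (3) as an immediate corollary. For (1), one inclusion is already in hand: Corollary~\ref{cor:probabilistic_may_fs_equipollent} gives $\PXMay\subseteq\PXDiamond$, which is exactly the ``if'' direction. The real work is therefore the completeness direction $\PXDiamond\subseteq\PXMay$ (and symmetrically $\PXBox\subseteq\PXFS$ for (2)). Since $\PXMay$ is by definition the largest equivalence satisfying the $\mathcal{D}$-may predicate, and $\PXDiamond$ is itself an equivalence, it suffices to show that $\PXDiamond$ satisfies this predicate; the inclusion then follows from maximality.

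To verify that $\PXDiamond$ satisfies the $\mathcal{D}$-may predicate, I would fix $\mu_1\PXDiamond\mu_2$ and an arbitrary observer distribution $o\in\mathcal{D}^\omega$, aiming to show $\charMay_\omega(\mu_1\mid o)=\charMay_\omega(\mu_2\mid o)$. By the definition of $\mathcal{D}^\omega$, write $o=o_0[c\mapsto\omega]$ with $o_0\in\mathcal{D}$ and $c\in\Chan(o_0)$. Choose a finite $L\supseteq\Chan(\mu_i\mid o)$ and a fresh $a\notin L$ with $a\neq c$, and apply Lemma~\ref{lem:probabilistic_diamond_box_predicate}(\ref{item:probabilistic_diamond_box_predicate_1}) to $\mu_i\mid o$ to obtain
\[
\charMay_\omega(\mu_i\mid o)=\charMay_\L\bigl((L)((\mu_i\mid o)[\omega\mapsto a])\bigr).
\]
Since $\mu_i$ contains no $\omega$, renaming distributes over composition to give $(\mu_i\mid o)[\omega\mapsto a]=\mu_i\mid o[\omega\mapsto a]$, and composing the two renamings yields $o[\omega\mapsto a]=o_0[c\mapsto a]$, which lies in $\mathcal{D}$ because $\mathcal{D}$ is closed under renaming. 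Then $\mathcal{D}$-compositionality gives $\mu_1\mid o_0[c\mapsto a]\PXDiamond\mu_2\mid o_0[c\mapsto a]$, localizability gives $(L)(\mu_1\mid o_0[c\mapsto a])\PXDiamond(L)(\mu_2\mid o_0[c\mapsto a])$, and probabilistic equipollence equates their $\charMay_\L$ values; chaining these equalities back through the displayed identity yields $\charMay_\omega(\mu_1\mid o)=\charMay_\omega(\mu_2\mid o)$.

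Statement (2) follows by the same argument with the fair characteristic replacing the may one: use Lemma~\ref{lem:probabilistic_diamond_box_predicate}(\ref{item:probabilistic_diamond_box_predicate_2}) together with probabilistic \emph{strong} equipollence of $\PXBox$, giving $\PXBox\subseteq\PXFS$, while $\PXFS\subseteq\PXBox$ is the second half of Corollary~\ref{cor:probabilistic_may_fs_equipollent}. Finally, (3) is immediate: parts (1) and (2) give $\PXFS=\PXBox$ and $\PXMay=\PXDiamond$, and Theorem~\ref{thm:probabilistic_box_diamond} gives $\PXBox\subsetneq\PXDiamond$, whence $\PXFS\subsetneq\PXMay$. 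I expect the main obstacle to be the renaming bookkeeping in the completeness step: one must choose the fresh channel $a$ and the set $L$ so that the hypotheses of Lemma~\ref{lem:probabilistic_diamond_box_predicate} hold ($L\supseteq\Chan(\mu_i\mid o)$, $a\notin L$) while simultaneously ensuring that the doubly-renamed observer $o[\omega\mapsto a]=o_0[c\mapsto a]$ stays inside $\mathcal{D}$ and that $a$ is genuinely fresh for $o_0$, so that the composite renaming simplifies cleanly. Everything else is a routine transport of equalities along $\mathcal{D}$-extensionality and the appropriate equipollence.
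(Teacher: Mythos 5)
Your proposal is correct and takes essentially the same approach as the paper's proof: both establish completeness by showing that $\PXDiamond$ (resp.\ $\PXBox$) satisfies the $\mathcal{D}$-may (resp.\ $\mathcal{D}$-fair) predicate—choosing a finite $L$ covering the relevant channels, a fresh $a$, and chaining Lemma~\ref{lem:probabilistic_diamond_box_predicate} with $\mathcal{D}$-extensionality and (strong) equipollence—then invoke Corollary~\ref{cor:probabilistic_may_fs_equipollent} for the soundness direction and derive statement (3) from Theorem~\ref{thm:probabilistic_box_diamond}. Your additional bookkeeping (decomposing $o=o_0[c\mapsto\omega]$ and using closure of $\mathcal{D}$ under renaming to show $o[\omega\mapsto a]=o_0[c\mapsto a]\in\mathcal{D}$) only spells out what the paper asserts in the single line ``Note that $o[\omega\mapsto a]\in\mathcal{D}$.''
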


\begin{proof}
Assume that $\mu_1,\mu_2\in\Distr(\PRCCS)$ and $\mu_1\PXDiamond\mu_2$.
For any observer distribution $o\in\mathcal{D}^\omega$, we take $L:=\Chan(\mu_1)\cup\Chan(\mu_2)\cup\Chan(o)$ and select a fresh channel $a\in \Chan\setminus L$.
Note that $o[\omega\mapsto a]\in \mathcal{D}$. 
Since the equivalence $\PXDiamond$ is $\mathcal{D}$-extensional, we have $(L)(\mu_1\mid o[\omega\mapsto a])\PXDiamond(L)(\mu_2\mid o[\omega\mapsto a])$.
By the probabilistic equipollence of $\PXDiamond$ and Lemma~\ref{lem:probabilistic_diamond_box_predicate},
\begin{equation}
\charMay_\omega({\mu_1\mid o})
=\charMay_\L((L)(\mu_1\mid o[\omega\mapsto a]))
=\charMay_\L((L)(\mu_2\mid o[\omega\mapsto a]))
=\charMay_\omega({\mu_2\mid o}),
\end{equation}
which implies that $\PXDiamond$ satisfies the probabilistic $\mathcal{D}$-may predicate and $\PXDiamond\;\subseteq\;\PXMay$.
Following a similar argument, we can prove that $\PXBox\;\subseteq\;\PXFS$.
By Corollary~\ref{cor:probabilistic_may_fs_equipollent}, we can conclude that $(\PXDiamond)=(\PXMay)$ and $(\PXBox)=(\PXFS)$.
Finally, by Theorem~\ref{thm:probabilistic_box_diamond} we have $\PXFS\,\subsetneq\,\PXMay$.
\qed
\end{proof}

\paragraph{Two typical testing contexts}
Note that Theorem~\ref{thm:model_indenpent_characterizations} holds for any testing context $\mathcal{D}$.
However, we are mainly concerned with two extreme cases: one where observers have full probabilistic capability (i.e., $\Distr(\PRCCS)$) and the other where observers have no probabilistic choice at all (i.e., $\Delta(\PCCS)$). 
For clarity, we summarize the intuitions and abbreviated notations in Table~\ref{tab:symbols}.
Then Corollary~\ref{cor:probabilistic_may_fs} follows immediately from Theorem~\ref{thm:model_indenpent_characterizations} and Theorem~\ref{thm:probabilistic_box_diamond}.

\begin{table}[hbt]
    \centering
    \begin{tabular}{ccccccc}
        \toprule
        Testing context & Intuition & $\mathcal{D}$-extensionality & $\PXBox$ & $\PXDiamond$ & $\PXFS$ & $\PXMay$\\
        \midrule
        $\Distr(\PRCCS)$ & probabilistic observers & $p$-extensionality & $\PBox$ & $\PDiamond$ & $\PFS$ & $\PMay$ \\
        $\Delta(\PCCS)$ & classical observers & $\Delta$-extensionality & $\PCBox$ & $\PCDiamond$ & $\PCFS$ & $\PCMay$ \\
        \bottomrule
    \end{tabular}
    \caption{Probabilistic testing equivalences in different testing contexts.}
    \label{tab:symbols}
    \vspace{-3mm}
\end{table}

\begin{corollary}
\label{cor:probabilistic_may_fs}
    The following statements are valid.
    \begin{enumerate}
        \item
        $(\PMay)=(\PDiamond)$, $(\PFS)=(\PBox)$, $\PBox\;\subsetneq\;\PDiamond$ and $\PFS\;\subsetneq\;\PMay$.
        \item
        $(\PCMay)=(\PCDiamond)$, $(\PCFS)=(\PCBox)$, $\PCBox\;\subsetneq\;\PCDiamond$ and  $\PCFS\;\subsetneq\;\PCMay$.
    \end{enumerate}
\end{corollary}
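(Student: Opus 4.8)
The plan is to recognize that Corollary~\ref{cor:probabilistic_may_fs} is precisely the specialization of Theorem~\ref{thm:model_indenpent_characterizations} and Theorem~\ref{thm:probabilistic_box_diamond} to the two concrete testing contexts $\mathcal{D}=\Distr(\PRCCS)$ and $\mathcal{D}=\Delta(\PCCS)$, followed by a reading-off of the abbreviated notation recorded in Table~\ref{tab:symbols}. Both named theorems are stated and proved for an \emph{arbitrary} testing context $\mathcal{D}$, so the only genuine work is to confirm that these two particular sets qualify as testing contexts in the sense of the definition preceding Definition~\ref{def:probabilistic_extensionality}.

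First I would check that $\Distr(\PRCCS)$ is a testing context. The chain $\Delta(\PCCS)\subseteq\Distr(\PRCCS)\subseteq\Distr(\PRCCS)$ is immediate from $\PCCS\subseteq\PRCCS$, and the three extended operations — parallel composition, localization, renaming — were defined in Section~\ref{subsec:sytax} as total maps on $\Distr(\PRCCS)$, so closure is automatic.

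Next I would check that $\Delta(\PCCS)$ is a testing context. The inclusions $\Delta(\PCCS)\subseteq\Delta(\PCCS)$ and $\Delta(\PCCS)\subseteq\Distr(\PRCCS)$ are again trivial. For closure I would invoke the Dirac identities $\delta_P\mid\delta_Q=\delta_{P\mid Q}$, $(L)\delta_P=\delta_{(L)P}$, and $\delta_P[f]=\delta_{P[f]}$: since $\PCCS$ is closed under the CCS operators $\mid$, $(L)\cdot$ and renaming, each resulting Dirac distribution again lies in $\Delta(\PCCS)$. This is the only step where care is needed, because the classical context contains no probabilistic-choice terms, and one must be sure the operations never manufacture genuine branching; the three identities above settle this at once, since composing or localizing Dirac distributions yields Dirac distributions. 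I would also note that the strictness witnesses used inside the two theorems (namely $\tau.a$ versus $\tau.a+\tau.\fix X.\tau.X$, and the complementation device of Lemma~\ref{lem:sequence_transform}) all live in $\PCCS$, hence remain available even in the smaller context $\Delta(\PCCS)$, so the strict inclusions survive the specialization.

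With both contexts validated, the conclusions are read off directly. Instantiating Theorem~\ref{thm:model_indenpent_characterizations} at $\mathcal{D}=\Distr(\PRCCS)$ gives $(\PDiamond)=(\PMay)$, $(\PBox)=(\PFS)$, and $\PFS\subsetneq\PMay$, while Theorem~\ref{thm:probabilistic_box_diamond} gives $\PBox\subsetneq\PDiamond$; this is part~1. Instantiating the same two theorems at $\mathcal{D}=\Delta(\PCCS)$ yields $(\PCDiamond)=(\PCMay)$, $(\PCBox)=(\PCFS)$, $\PCFS\subsetneq\PCMay$, and $\PCBox\subsetneq\PCDiamond$, which is part~2. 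No further argument is required; there is essentially no conceptually hard step here, the entire content residing in the routine closure verification for $\Delta(\PCCS)$, which is the one place a careless reader might stumble.
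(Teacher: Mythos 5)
Your proposal is correct and follows exactly the paper's route: the corollary is obtained by instantiating Theorem~\ref{thm:model_indenpent_characterizations} and Theorem~\ref{thm:probabilistic_box_diamond} at the two testing contexts $\Distr(\PRCCS)$ and $\Delta(\PCCS)$, using the notation of Table~\ref{tab:symbols}. Your explicit check that $\Delta(\PCCS)$ is closed under the extended operations (via the Dirac identities) is a small but worthwhile addition that the paper leaves implicit.
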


\subsection{Testing equivalences as congruences}
\label{subsec:box_diamond_congr}
In this section, we establish the congruence property of our testing equivalences.
Since the external characterizations align with the unifying internal characterizations, we focus exclusively on the latter.
Definition~\ref{def:probabilistic_extensionality} implies that $\PXDiamond$ and $\PXBox$ are $\mathcal{D}$-compositional and localizable. 
We further show that they are preserved by the convex combination operation, which can be regarded as a distribution extension of the $\bigoplus$ operation.
Especially, Theorem~\ref{thm:congr} demonstrates that when projected onto $\PRCCS$ or $\PCCS$, our testing equivalences $\Restr{(\PDiamond)}{RCCS}$, $\Restr{(\PBox)}{RCCS}$, $\Restr{(\PCDiamond)}{CCS}$, $\Restr{(\PCBox)}{CCS}$ are all congruences.
\begin{theorem} [Congruence]
    \label{thm:congr}
    The following statements are valid.
    \begin{enumerate}
        \item The equivalences $\PXDiamond$, $\PXBox$, $\PXFS$, and $\PXMay$ are preserved by the $\mathcal{D}$-composition, localization, and convex combination operation.
        \item The equivalences $\Restr{(\PDiamond)}{RCCS}$, $\Restr{(\PBox)}{RCCS}$, $\Restr{(\PCDiamond)}{CCS}$, and $\Restr{(\PCBox)}{CCS}$ are congruences.
        \item The equivalences $\Restr{(\PMay)}{RCCS}$, $\Restr{(\PFS)}{RCCS}$, $\Restr{(\PCMay)}{CCS}$, and $\Restr{(\PCFS)}{CCS}$ are congruences.
    \end{enumerate}
\end{theorem}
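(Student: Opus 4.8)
The plan is to treat $\PXDiamond$ and $\PXBox$ as primary and to read off the remaining claims from the coincidences of Theorem~\ref{thm:model_indenpent_characterizations}. Indeed, Part~3 is nothing but Part~2: since $(\PXMay)=(\PXDiamond)$ and $(\PXFS)=(\PXBox)$ hold for both $\mathcal{D}=\Distr(\PRCCS)$ and $\mathcal{D}=\Delta(\PCCS)$, the congruence of $\Restr{(\PMay)}{RCCS}$, $\Restr{(\PFS)}{RCCS}$, $\Restr{(\PCMay)}{CCS}$, $\Restr{(\PCFS)}{CCS}$ is literally the congruence of their diamond/box counterparts. For Part~1, $\mathcal{D}$-composition and localization are immediate: $\PXDiamond,\PXBox$ are $\mathcal{D}$-extensional by definition and $\PXMay,\PXFS$ by Lemma~\ref{lem:probabilistic_may_fs_extensional}. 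The one new point is closure under convex combination. Here I first note that both characteristics are affine in the distribution: for $\charMay_\L$ this is Lemma~\ref{lem:linearity_of_testing_outcome}(\ref{item:linearity_of_testing_outcome_2}), and for $\charFS_\L$ it follows by decomposing the outer $\rightsquigarrow$ with Corollary~\ref{cor:linearity_of_probabilistic_transition} and splitting the inner $\sup$ with Lemma~\ref{lem:linearity_of_testing_outcome}(\ref{item:linearity_of_testing_outcome_2}), the two independent infima factoring out the weight. Given $\mu_1\PXDiamond\mu_2$ and $\nu_1\PXDiamond\nu_2$, I apply Lemma~\ref{lem:show_membership}(\ref{item:show_membership_2}) to $\R:=\{(p\mu_1+(1-p)\nu_1,p\mu_2+(1-p)\nu_2):p\in[0,1]\}$; since composition and localization are linear in their distribution argument, every $\mathcal{D}$-extensional context distributes over the convex combination, and affineness together with the equipollence of $\PXDiamond$ on the constituents shows $\R^{\mathcal{D}}$ is probabilistically equipollent. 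The $\PXBox$ case is identical with $\charFS_\L$, and the statements for $\PXMay,\PXFS$ then follow from the coincidences above.

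For Part~2 I handle the grammar constructors one at a time, the goal in each case being to feed a suitable relation into Lemma~\ref{lem:show_membership}(\ref{item:show_membership_2}) or~(\ref{item:show_membership_3}). Parallel composition and localization need no induction: from $\mu_1\PXDiamond\mu_2$ and $\nu_1\PXDiamond\nu_2$ the chain $\mu_1\mid\nu_1\PXDiamond\mu_2\mid\nu_1\PXDiamond\mu_2\mid\nu_2$ is obtained from $\mathcal{D}$-compositionality, the commutativity law of Proposition~\ref{prop:property_strong_bisimulation}, and $\Strong^\dag\subseteq\PXDiamond$ (Lemma~\ref{lem:show_membership}(\ref{item:show_membership_1})), using that every process distribution lies in $\mathcal{D}$ for the two contexts of interest; localization is a direct instance of localizability. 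Probabilistic choice reduces to convex combination: because $\bigoplus_i p_i\tau.P_i$ exposes no external action and its only transition fully resolves to $\sum_i p_i\delta_{P_i}$, the $\tau$-law $\delta_{\bigoplus_i p_i\tau.P_i}\PXDiamond\sum_i p_i\delta_{P_i}$ holds, so $P_i\PXDiamond Q_i$ for all $i$ yields $\bigoplus_i p_i\tau.P_i\PXDiamond\sum_i p_i\delta_{P_i}\PXDiamond\sum_i p_i\delta_{Q_i}\PXDiamond\bigoplus_i p_i\tau.Q_i$ by Part~1 and transitivity.

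The substantive constructor is the guarded sum. To prove $\sum_i\alpha_i.P_i\PXDiamond\sum_i\alpha_i.Q_i$ from $P_i\PXDiamond Q_i$, I invoke Lemma~\ref{lem:show_membership}(\ref{item:show_membership_3}) and verify $\charMay_\L(\vartheta[\sum_i\alpha_i.P_i])=\charMay_\L(\vartheta[\sum_i\alpha_i.Q_i])$ for every $1$-ary $\vartheta$. Since $\vartheta$ may contain several occurrences of the hole, I interpolate, rewriting the $P$-side to the $Q$-side one occurrence at a time, and for a single discrepancy apply a \emph{barrier} argument: the subterm is inert until its guard is consumed, so before it fires the two configurations have identical internal transitions; and once it fires, the enclosing guards of $\vartheta$ have already been consumed and the occurrence sits inside a composition/localization context (with the remaining occurrences frozen as constants on both sides), to which $\mathcal{D}$-extensionality and the equipollence of $\PXDiamond$ apply with $P_i\PXDiamond Q_i$. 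The linearity lemmas recombine these agreements across firing schedules, so the suprema coincide; the $\PXBox$ version is the same with $\charFS_\L$, using Corollary~\ref{cor:calculated_by_degenerate_witness}(\ref{item:calculated_by_degenerate_witness_2}) to keep the nested $\inf\sup$ along degenerate witnesses.

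The main obstacle is recursion, where the barrier picture breaks down: $\fix X.T$ can regenerate guards and spawn unboundedly many live copies of the hole, so no finite interpolation suffices. I would resolve this by unfolding, $\fix X.T\Strong T\{\fix X.T/X\}$ (Proposition~\ref{prop:property_strong_bisimulation}), together with an approximation argument: every descendant witnessing a value of $\charMay_\L$ is reached by a \emph{finite} internal transition sequence and therefore exercises only finitely many unfoldings, so $\charMay_\L(\fix X.T[\cdots])$ equals the supremum of the characteristics of the finite unfoldings, each of which is a guarded/parallel context already covered above. The length bound of Lemma~\ref{lem:linearity_of_probabilistic_transition_strengthened} is what lets these finite approximations commute with convex recombination and pass to the limit. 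Making this uniform — in particular choosing the candidate relation to be closed under \emph{simultaneous} substitution of $P$ and $Q$ for the hole, so that the finitely many copies alive at each stage can all be matched at once — is the technically delicate step; everything else is bookkeeping with the membership and linearity lemmas already in hand.
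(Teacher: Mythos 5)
Most of your proposal retraces the paper's own proof: Part~3 is obtained from Part~2 via Theorem~\ref{thm:model_indenpent_characterizations}, closure under convex combination is proved by affineness of the characteristics fed into Lemma~\ref{lem:show_membership}, probabilistic choice reduces to convex combination through $\delta_{\bigoplus_i p_i\tau.P_i}\myrightarrow{\tau}\sum_i p_i\delta_{P_i}$, and your ``barrier'' induction for the guarded sum is in substance the claim proved inside Proposition~\ref{prop:combination_choice_congr}. One caveat there: the paper runs the sum case through Lemma~\ref{lem:show_membership}(\ref{item:show_membership_2}) with the extensional closure $\R^{\mathcal{D}}$, in which the hole occurs exactly once and only under composition/localization; your appeal to Lemma~\ref{lem:show_membership}(\ref{item:show_membership_3}) instead obliges you to handle \emph{every} $1$-ary context, including holes under prefixes and under fixpoints, which your barrier argument explicitly does not cover.

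The genuine gap is recursion, and it is not merely the ``technically delicate step'' you defer: the approximation scheme you sketch cannot prove the $\PXBox$ half of the theorem. If your finite unfoldings keep $\fix X.T$ at the bottom, each is strongly bisimilar to $\fix X.T$ itself (Proposition~\ref{prop:property_strong_bisimulation}), so relating the $S$-unfoldings to the $T$-unfoldings is exactly the statement being proved --- the argument is circular. If instead you cut off with $\mathbf{0}$ (or any inert term), the may characteristic is plausibly the supremum of the approximants, but the fair characteristic is not continuous under this approximation: take $P_3:=\fix X.(\tau.a+\tau.X)$, for which $\charFS_{\L}(P_3)=1$, while every cut-off unfolding $U_n:=\tau.a+\tau.(\tau.a+\tau.(\cdots+\tau.\mathbf{0}))$ admits $\delta_{U_n}\rightsquigarrow\delta_{\mathbf{0}}$ and hence $\charFS_{\L}(U_n)=0$; thus $\sup_n\charFS_{\L}(U_n)=0\neq 1=\charFS_{\L}(P_3)$, and no recombination via the linearity lemmas can repair a failure of continuity of $\inf\sup$ under syntactic approximation. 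The paper's solution is different in kind and is the idea you are missing (Lemma~\ref{lem:add_tau} and Corollary~\ref{cor:fix_point}): first prove congruence for $\tau$-guarded fixpoints --- $S\PDiamond T$ implies $\fix X.\tau.S\PDiamond\fix X.\tau.T$, by an induction on transition sequences tracking $1$-ary distribution contexts, i.e.\ your barrier argument made to work because the $\tau$-guard regenerates after every unfolding --- and then prove the testing-specific law $\fix X.T\PDiamond\fix X.\tau.T$, so that $\fix X.S\PDiamond\fix X.\tau.S\PDiamond\fix X.\tau.T\PDiamond\fix X.T$. Both steps rely only on Corollary~\ref{cor:calculated_by_degenerate_witness}, which is exactly why they transfer verbatim to $\PBox$, $\PCDiamond$, and $\PCBox$, with no limit or continuity argument anywhere.
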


We first show that $\PDiamond$ and $\PXDiamond$ satisfy the above properties.
The following proposition, derived from Lemma~\ref{lem:show_membership}, summarizes the congruence properties of $\PDiamond$ except for the fixpoint operation.

\begin{restatable}{proposition}{combChoCongr}
    \label{prop:combination_choice_congr}
    The following statements are valid.
    \begin{enumerate}
        \item \label{item:combination_choice_congr_1}
        The equivalence $\PXDiamond$ is preserved by the convex combination operation.
        \item \label{item:combination_choice_congr_2}
        The equivalence $\Restr{(\PDiamond)}{RCCS}$ is preserved by the nondeterministic choice, probabilistic choice, parallel composition, and localization operations. 
    \end{enumerate}
\end{restatable}

Thus far, to show that $\Restr{(\PDiamond)}{RCCS}$ is a congruence, we merely need to prove that $\Restr{(\PDiamond)}{RCCS}$ is closed under the fixpoint operation.
Suppose $S$ and $T$ are two $n$-ary RCCS terms.
Define $S \PDiamond T$ if for tuple $\widetilde{P}$ of processes, the relation $S[\widetilde{P}]\PDiamond T[\widetilde{P}]$ always holds.
The other equivalence can be generalized to $n$-ary distributions similarly.
The standard congruence property of fixpoint operations is formalized as follows: given two $n$-ary processes $S, T$, if $S\PDiamond T$, then $\fix X.S\PDiamond\fix X.T$.
When $n=1$, we need the following lemma.

\begin{restatable}{lemma}{addTau}
    \label{lem:add_tau}
    Let $S, T$ be two $1$-ary terms.
    The following statements are valid.
    \begin{enumerate}
        \item If $S\PDiamond T$, then $\fix X.\tau.S\PDiamond\fix X.\tau.T$.
        \item $\fix X.T\PDiamond\fix X.\tau.T$.
    \end{enumerate}
\end{restatable}

The intuition behind this lemma is similar to the classical proofs of congruence for the CCS fixpoint operation, where the proof of congruence is relatively straightforward when the term is $\tau$-guarded. 
As for testing equivalences, a key feature is that the additional $\tau$-actions do not affect external observations, allowing us to conclude that congruence still holds even in the unguarded case.
Formally, assume that $S\PDiamond T$.
By Lemma~\ref{lem:add_tau}, one can derive that $\fix.S=\fix.\tau.S=\fix.\tau.T=\fix.T$.
Now the case of $n=1$ holds, and the following corollary follows immediately.

\begin{restatable}{corollary}{fixPoint}
    \label{cor:fix_point}
    The equivalence $\Restr{(\PDiamond)}{RCCS}$ is closed under the fixpoint operation.
\end{restatable}

Notably, the crucial property about $\PDiamond$ used in the proofs above is Corollary~\ref{cor:calculated_by_degenerate_witness} (\ref{item:calculated_by_degenerate_witness_1}).
Since the corresponding statement w.r.t. $\PBox$, i.e., Corollary~\ref{cor:calculated_by_degenerate_witness} (\ref{item:calculated_by_degenerate_witness_2}), also holds, we can prove that $\PBox$ also satisfies these properties.
Moreover, projecting $\PCBox$ and $\PCDiamond$ onto $\PCCS$ also yields the similar conclusions.
Therefore, we deduce that all statements in Theorem~\ref{thm:congr} are valid.

\subsection{Case study: probabilistic testing equivalences in pCSP}
\label{subsec:case}

The pCSP model~\cite{DBLP:journals/lmcs/DengGHM08,10.1007/978-3-642-04081-8_19,DENG2007359} is a probabilistic extension of CSP~\cite{10.1145/359576.359585}.
Deng et al. provided several characterizations for may and must preorders in pCSP~\cite{DBLP:journals/lmcs/DengGHM08}.
We mentioned earlier that one major advantage of our framework is its ease of adaptation to other models.
To strengthen this claim, we present a case study in this subsection, where we apply our method to pCSP and compare it with the notions proposed therein.

\paragraph{Syntax and semantics} 
We begin by providing a brief introduction to the pCSP model.
Here we reuse the notation introduced earlier: let $\Chan$ denote the set of all visible actions, let $L \subseteq \Chan$ be a finite subset of actions, and $\Act:=\Chan\cup\{\tau\}$. 
Note that we no longer consider the dual channels $\overline{\Chan}$ here.
The syntax of pCSP is defined as follows.
\begin{equation}
    \label{eq:pCSP}
    \begin{aligned}
        P,Q &:=s ~\Big{|}~ P \oplus_p Q,\\
        s,t &:=\mathbf{0} ~\Big{|}~ a.P ~\Big{|}~ P\sqcap Q ~\Big{|}~ s~\Box~t ~\Big{|}~ s \mid_L t.
    \end{aligned}
\end{equation}

In Equation~(\ref{eq:pCSP}), the process $P \oplus_p Q$ with $p \in (0,1)$ represents a probabilistic choice between $P$ and $Q$.
The \emph{state-based processes} generated by $s$ have a CSP-like syntax: the \emph{internal choice term} $P \sqcap Q$ performs an internal action $\tau$ and then nondeterministically evolves into either $P$ or $Q$, while the \emph{external choice term} $s~\Box~t$ disables one component once the other has performed an external action.
We write $\mathsf{pCSP}$, ranged over by $P,Q$, for the set of processes defined by Equation~(\ref{eq:pCSP}), and $\mathsf{sCSP}$, ranged over by $s,t$, for the set of state-based processes.
Each process $P$ in $\mathsf{pCSP}$ induces a distribution $\llbracket P\rrbracket$ over state-based processes, defined by
\begin{align}
    \llbracket s\rrbracket :=\delta_s,\qquad
    \llbracket P\oplus_p Q\rrbracket := p\cdot \llbracket P\rrbracket + (1-p)\cdot\llbracket Q\rrbracket.
\end{align}
This provides a formal description of the meaning of $P \oplus_p Q$.
Based on these notations, we present the operational semantics of pCSP using the pLTS shown in Figure~\ref{fig:pLTS-pCSP}, where $a \in \Chan$, $\alpha \in \Act$, and $\myrightarrow{}~\subseteq \mathsf{pCSP} \times \Act \times \Distr(\mathsf{sCSP})$.
\begin{figure}[h]
    \begin{center}
        \begin{displaymath}
        \frac{}{a.P\myrightarrow{a}\llbracket P\rrbracket} \enspace
        \frac{}{P\sqcap Q \myrightarrow{\tau} \llbracket P\rrbracket} \enspace 
        \frac{}{P\sqcap Q\myrightarrow{\tau} \llbracket Q\rrbracket} \enspace
        \frac{~s_1\myrightarrow{a}\rho~}{~s_1~\Box~s_2\myrightarrow{a} \rho~}
        \frac{~s_2\myrightarrow{a}\rho~}{~s_1~\Box~s_2\myrightarrow{a} \rho~} \enspace
        \frac{~s_1\myrightarrow{\tau}\rho~}{s_1~\Box~s_2\myrightarrow{\tau} \rho~\Box~s_2}
        \end{displaymath}	
        \begin{displaymath}
            \frac{~s_2\myrightarrow{\tau}\rho~}{~s_1~\Box~s_2\myrightarrow{\tau} s_1~\Box~\rho} \quad
            \frac{s_1 \myrightarrow{\alpha} \rho \quad \alpha \notin L}{~s_1\mid_L s_2 \myrightarrow{\alpha} \rho\mid_L s_2 ~}  \quad
            \frac{s_2 \myrightarrow{\alpha} \rho \quad \alpha \notin L}{~s_1\mid_L s_2 \myrightarrow{\alpha} s_1\mid_L \rho ~} \quad
            \frac{s_1 \myrightarrow{\alpha} \rho_1\quad s_2\myrightarrow{\alpha} \rho_2\quad \alpha\in L}{~s_1\mid_L s_2 \myrightarrow{\tau} \rho_1\mid_L \rho_2 ~} 
        \end{displaymath}
    \end{center}
    \caption{pLTS for pCSP.}
    \label{fig:pLTS-pCSP}
\end{figure}

Note that $\mid_L$ and $\Box$ were originally defined as operations on state-based processes, but in Figure~\ref{fig:pLTS-pCSP} we extend them to distributions, in the same way as in Section~\ref{sec:rccs_model}.
There are also some subtle differences between the pCSP and RCCS models, as illustrated by the following example.

\begin{example}
    The prefix and probabilistic choice terms are essentially consistent in both models.
    However, in pCSP, it is not possible to syntactically construct a process of the form $\tau.P$; instead, one can only construct the behaviorally equivalent process $P \sqcap P$, which satisfies $P \sqcap P \myrightarrow{\tau} \llbracket P\rrbracket$.
    Thus, we take the former as syntactic sugar for the latter within pCSP.
    Next, we compare the remaining operations.
    \begin{itemize}
        \item 
        There are two types of non-deterministic choice operations in pCSP.
        The internal choice $P \sqcap Q$ is equivalent to $\tau.P + \tau.Q$ in RCCS.
        The external choice $s_1 ~\Box~ s_2$ can also be translated as the $+$ operator in RCCS when both $s_1$ and $s_2$ are guarded by an external action.
        In other cases, however, the behavior may differ. 
        For example, we have the transition
        $(\tau.P) ~\Box~ Q \myrightarrow{\tau} \llbracket P\rrbracket ~\Box~ Q$,
        while $\tau.P + Q \myrightarrow{\tau} \delta_P$, where the two operations are no longer equivalent.
        Intuitively, in pCSP, $\tau$ is viewed as an internal evolution of the compound process $\tau.P ~\Box~ Q$, allowing the system to retain the branch $Q$.
        In RCCS, by contrast, it is treated as an action generated by $\tau.P$, which causes the branch $Q$ to be lost after the execution.

        \item The behavior of $P \mid_L Q $ is essentially the same as $(L)(P \mid Q)$. 
        However, a synchronization can only occur on channels within $L$ in  the former, while there is no such restriction in the latter.
    \end{itemize}
\end{example}
Despite these differences, we can directly combine the pLTS in Figure~\ref{fig:pLTS-pCSP} with the definition in Figure~\ref{fig:pLTS2} to obtain the distribution-based semantics for pCSP.
The resulting transition relation
$\myrightarrow{}~\subseteq~\Distr(\mathsf{sCSP}) \times \Act \times (0,1] \times \Distr(\mathsf{sCSP})$
still satisfies the desirable properties stated in Section~\ref{subsec:linearity}.
For notation, we continue to follow the previous conventions; for example, $\rightsquigarrow$ denotes the silent transition sequence.
Moreover, the testing framework introduced in Section~\ref{sec:testing_over_distributions} can be applied to pCSP without any modification.

\paragraph{Testing pCSP processes}
To test probabilistic pCSP processes, Deng et al. introduced a special action $\omega \notin \Act$ to report success.
Following the convention in Section~\ref{subsec:external_characterization}, we use $\mathsf{pCSP}^\omega$ and $\mathsf{sCSP}^\omega$ to denote the sets of pCSP tests and state-based tests, respectively.
A \emph{result gathering function} $\mathbb{V} : \mathsf{sCSP}^\omega \to \mathcal{P}^+([0,1])$ associates each state-based test process with a non-empty set of outcomes, where $\mathcal{P}^+([0,1])$ denotes the collection of non-empty subsets of $[0,1]$.
Under the convention $\mathbb{V}(\rho) := \sum_{P \in \Supp(\rho)} \rho(P)\mathbb{V}(P)$, the function is defined inductively as
\begin{equation}
    \mathbb{V}(s):=
    \begin{cases}
        \{1\}, \quad & \text{if }s\myrightarrow{\omega},\\
        \bigcup\{\mathbb{V}(\rho): s\myrightarrow{\alpha}\rho\},\quad &\text{if }s\not\myrightarrow{\omega}\text{ but }s\myrightarrow{},\\
        \{0\}, \quad & \text{if }s\not\myrightarrow{}.
    \end{cases}
\end{equation}

For any pCSP process $P$ and test $T$, define $\mathcal{A}(T,P):=\mathbb{V}(\llbracket T\mid_{\Chan} P\rrbracket)$.
Note that the set $L$ in an expression of the form $P \mid_L Q$ is required to be finite.
Therefore, in this context, the subscript $\Chan$ should be understood as $\Chan \cap \Chan(T, P)$.
Intuitively, $\mathcal{A}(T, P)$ represents the testing outcome of $P$ with respect to the test~$T$.

\begin{example}
    \label{exp:pcsp-outcome}
    Let $P:=a.((b.d~\Box~c.e)\oplus_{\frac{1}{2}} (b.f ~\Box~c.g))$, $Q:=a.((b.d~\Box~c.g)\oplus_{\frac{1}{2}} (b.f ~\Box~c.e))$, and $T:=a.((b.d.\omega\oplus_{\frac{1}{2}}c.e.\omega)\sqcap (b.f.\omega \oplus_{\frac{1}{2}}c.g.\omega))$.
    It is easy to verify (cf. Example 3.3 in \cite{DBLP:journals/lmcs/DengGHM08}) that
    \begin{equation}
        \mathcal{A}(T,P)=\{0, \frac{1}{2}, 1\}, \quad
        \mathcal{A}(T, Q)=\{\frac{1}{2}\},
    \end{equation}
    which are both discrete sets.
    Note that $\omega$ is the only external action that $P \mid_\Chan T$ and $Q \mid_\Chan T$ can perform.
    Under our approach, we obtain
    \begin{equation}
        \PEquip{P \mid_\Chan T}=[0,1],\quad \PEquip{Q \mid_\Chan T}=\{\frac{1}{2}\}.
    \end{equation}
    Observe that in this example, our testing outcomes coincide with the convex closure of those obtained by Deng et al.
\end{example}

The probabilistic testing preorders are defined with two underlying preorders,  namely the Hoare preorder $\le_{\mathrm{Ho}}$ and the Smyth preorder $\le_{\mathrm{Sm}}$, given by
\begin{align}
    X \le_{\mathrm{Ho}} Y \iff \forall x\in X, \exists y \in Y, x\le y,\\
    X \le_{\mathrm{Sm}} Y \iff \forall y\in Y, \exists x \in X, x\le y.
\end{align}
Using these two preorders, the may and must preorders introduced by Deng et al. are defined as follows.
\begin{definition}[cf. Definition 3.2, \cite{DBLP:journals/lmcs/DengGHM08}]
    \label{def:may-must-pcsp}
    The \emph{may} and \emph{must} preorders are defined by
    \begin{align}
        P \sqsubseteq_{\mathrm{may}} Q \iff \text{for all tests }T, \mathcal{A}(T,P) \le_{\mathrm{Ho}} \mathcal{A}(T,Q),\label{eq:hp}\\
        P \sqsubseteq_{\mathrm{must}} Q \iff \text{for all tests }T, \mathcal{A}(T,P) \le_{\mathrm{Sm}} \mathcal{A}(T,Q).\label{eq:sm}
    \end{align}
\end{definition}
These two preorders induce the external characterizations of the probabilistic may and must equivalences, denoted (only in this subsection) by $\PMay$ and $\PMust$, respectively.

\paragraph{Box and Diamond equivalences in pCSP}
Now we extend our unifying internal characterization for testing equivalences to pCSP.
As argued above, probabilistic equipollence (Definition~\ref{def:probabilistic_equipollence}) remains well-defined for relations over $\Distr(\mathsf{sCSP})$.
The only modification needed is in the definition of extensionality.
To stay consistent with Definition~\ref{def:may-must-pcsp}, we restrict our attention here to fully probabilistic testing contexts.
Since the composition and localization operators are replaced here by $\mid_L$, we naturally say that a relation $\R$ over $\Distr(\mathsf{sCSP})$ is \emph{probabilistically extensional} if $(\mu_1\mid_L \nu) ~\R~ (\mu_2 \mid_L \nu)$ holds whenever $\mu_1~\R~\mu_2$, $\nu \in \Distr(\mathsf{sCSP})$, and $L \subseteq \Chan$.
Then we obtain the following internal characterization of probabilistic testing equivalences for pCSP.
\begin{definition}
    The \emph{probabilistic diamond (box, resp.) equivalence} $\PDiamond$ ($\PBox$, resp.) is the largest (strongly, resp.) probabilistically equipollent, probabilistically extensional relation on $\Distr(\mathsf{sCSP})$.
\end{definition}

We now obtain two equivalences over $\Distr(\mathsf{sCSP})$.
Since every pCSP process denotes a distribution, we write $P \PDiamond Q$ if $\llbracket P\rrbracket \PDiamond \llbracket Q \rrbracket$, and $P \PBox Q$ if $\llbracket P\rrbracket \PBox \llbracket Q \rrbracket$.
Note that the right-hand sides of Equations \ref{eq:hp} and \ref{eq:sm} can equivalently be expressed in terms of the supremum and infimum of the testing outcomes, respectively. From this observation, the following correspondence can be deduced. 

\begin{restatable}{lemma}{pcspOut}
    \label{lem:pcsp-outcome}
    Let $\mu\in \Distr(\mathsf{sCSP})$ be a distribution. 
    Then $\max \mathbb{V}(\mu)=\charMay_\omega(\mu)$ and $\min \mathbb{V}(\mu)=\charFS_\omega(\mu)$.
\end{restatable}

Lemma 23 shows that the external characterization given by Deng et al. via the result gathering function coincides with our external characterization based on 
$\charMay$ and $\charFS$ (cf. Definitions~\ref{def:pmay} and Definition~\ref{def:pfs} for the RCCS model).
An illustrative example has already been given in Example \ref{exp:pcsp-outcome}.
With this correspondence at hand, we obtain Theorem \ref{thm:pcsp}, which states that our internal characterization matches precisely the external characterization of Deng et al.
The proof follows the same reasoning as the proof of Theorem~\ref{thm:model_indenpent_characterizations} in Section~\ref{subsec:external_characterization} and is therefore omitted here.

\begin{theorem}
    \label{thm:pcsp}
    Let $P, Q \in \mathsf{pCSP}$ be two processes.
    The following statements hold.
    \begin{enumerate}
        \item $P \PMay Q$ if and only if $P \PDiamond Q$.
        \item $P \PMust Q$ if and only if $P \PBox Q$.
        \item $\PMust ~\subsetneq~ \PMay$.
    \end{enumerate}
\end{theorem}

Interestingly, $\PBox$ would normally correspond to fair testing rather than must testing. 
As we discussed in Example~\ref{ex:classical_testing}, fair and must equivalences are incomparable in the classical setting. 
However, in the present context, $\PBox$ happens to coincide with must testing. 
Indeed, recall that fair equivalence was introduced to address issues of divergence. 
Since all pCSP processes are free of divergent behaviors, the two equivalences coincide, which explains why the second statement of Theorem~\ref{thm:pcsp} holds.
Moreover, following the reasoning of Theorem~\ref{thm:probabilistic_box_diamond}, it is straightforward to show that $\PXBox ~\subsetneq~ \PXDiamond$. 
This, in turn, implies that $\PMust~\subsetneq~\PMay$, indicating that must testing is strictly more discriminating than may testing. 
This reproduces the same conclusion as Proposition 5.2 in \cite{DENG2007359}.

To conclude, this subsection illustrates the transferability of our approach by applying it to the pCSP model. 
In fact, our method does not rely on the syntactic details of the underlying process language; once a suitable pLTS is obtained, the full distribution-based testing framework applies directly.
Regarding probabilistic testing equivalences, we expect them to possess sufficient congruence properties, which motivates the requirement of extensionality. 
In principle, one could examine equivalences preserved by all operators of the new model, but selecting a minimal set of operators yields a cleaner and more concise definition.
	
\section{The world of testing equivalences}
\label{sec:comparison}
In Section~\ref{sec:model_independent}, we proved the correspondence between the unifying internal characterizations and the external characterizations. 
Now we proceed to establish the remaining theorems to complete the entire spectrum as summarized in Figure~\ref{fig:summary}.

\subsection{Comparison with classical testing equivalences}
\label{subsec:comparison_with_classical_testing}
We begin by establishing all the vertical arrows in Figure~\ref{fig:summary}, which illustrate the relationship between our probabilistic testing equivalences and the classical testing equivalences. 
The results can be classified into two categories based on the testing contexts.
\begin{enumerate}
    \item Our probabilistic testing equivalences in context $\Delta(\PCCS)$ are conservative generalizations of their classical counterpart (e.g., $\PCBox$ is the conservative generalization of $\CBox$).
    \item The probabilistic testing equivalences in context $\Distr(\PRCCS)$ are strictly finer than their classical testing counterpart, which intuitively demonstrates that probabilistic testing contexts provide greater distinguishing power than classical ones.
    Notably, the counterexamples that witness the strictness are just CCS processes (see Example~\ref{ex:extensionality_counterexp}).
\end{enumerate}

\paragraph{Internal characterizations}
The following lemma demonstrates that the characterization using testing outcomes is a generalization of (strong) observability.

\begin{restatable}{lemma}{equipClasGen}
\label{lem:equipollence_classical_generalization}
Let $P \in \PCCS$ be a CCS process, then we have
\begin{equation}
    \charMay_\L(P) = \begin{cases}
        1,  & \text{if $P \Downarrow$}, \\
        0,  & \text{otherwise}.
    \end{cases},\quad \text{and} 
    \quad 
    \charFS_\L(P) = \begin{cases}
        1,  & \text{if $P \observable$},\\
        0,  & \text{otherwise}.
    \end{cases}
\end{equation}
\end{restatable}

The intuition behind Lemma~\ref{lem:equipollence_classical_generalization} has already been illustrated in Example~\ref{ex:linearity_testing_outcome}.
In fact, the classical testing outcomes for CCS processes can be viewed as a discretization of probabilistic testing results, retaining only the values $0$ and $1$.
Since $P_1 \PXDiamond P_2$, it follows that $P_1 \CDiamond P_2$.
However, the converse does not hold in general, as both $\PXBox$ and $\PXDiamond$ require $\mathcal{D}$-extensionality, whose discriminating power may persist even when restricted to classical CCS processes.
This is demonstrated in the following example.
    
\begin{example}
\label{ex:extensionality_counterexp}
    Let $Q_5:=a.(\tau.b+\tau.c), Q_6:=a.b+a.c$.
    \begin{itemize}
        \item 
        Let $\R:=\left\{(R[Q_5],R[Q_6]):\text{$\R$ is a $1$-ary CCS term}\right\}$.
    It is easy to show that $\R$ is extensional, equipollent and strongly equipollent, which implies $Q_5\CBox Q_6$ and $Q_5\CDiamond Q_6$.
        \item 
        Let $\R^\Delta:=\left\{(\delta_{R[Q_5]},\delta_{R[Q_6]}):\text{$\R$ is a $1$-ary CCS term}\right\}$.
        Then $\R^\Delta$ is $\Delta$-extensional.
        By Lemma~\ref{lem:equipollence_classical_generalization} and the above conclusion, we also have $Q_5\PCBox Q_6$ and $Q_5\PCDiamond Q_6$ (by Corollary~\ref{cor:probabilistic_may_fs}).
        \item However, neither $Q_5\PBox Q_6$ nor $Q_5\PDiamond Q_6$ holds.
        It suffices to prove the latter one.
        Let $O=\bar{a}.(\frac{1}{2}\tau.\bar{b}.d\oplus \frac{1}{2}\tau.\bar{c}.d)$.
        Suppose $Q_5\PDiamond Q_6$, then by the $p$-extensionality, we have $\charMay_\L((abc)(Q_5\mid O))=\charMay_\L((abc)(Q_6\mid O))$.
        The following transition sequence shows that $\charMay_\L((abc)(Q_5\mid O))=1$.
        \begin{align}
            \delta_{(abc)(Q_5\mid O)}&\myrightarrow{\tau} \delta_{(abc)((\tau.b+\tau.c)\mid (\frac{1}{2}\tau.\bar{b}.d\oplus \frac{1}{2}\tau.\bar{c}.d))}\\
            &\myrightarrow{\tau}\frac{1}{2}\delta_{(abc)((\tau.b+\tau.c)\mid \bar{b}.d)}+\frac{1}{2}\delta_{(abc)((\tau.b+\tau.c)\mid \bar{c}.d)}\\
            &\rightsquigarrow\;\;\frac{1}{2}\delta_{(abc)(\mathbf{0}\mid d)}+\frac{1}{2}\delta_{(abc)(\mathbf 0\mid d)}.
        \end{align}
        On the other hand, since the descendants of $(abc)(Q_6\mid O)$ can only perform $d$, we must eliminate the prefix $ab$ or $ac$ to maximize the probability of performing an external action.
        For example, one such transition sequence takes the form of:
        \begin{align}
            \delta_{(abc)(Q_6\mid O)}&\myrightarrow{\tau} \delta_{(abc)(b\mid (\frac{1}{2}\tau.\bar{b}.d\oplus \frac{1}{2}\tau.\bar{c}.d))}\\
            &\myrightarrow{\tau}\frac{1}{2}\delta_{(abc)(b\mid \bar{b}.d)}+\frac{1}{2}\delta_{(abc)(b\mid \bar{c}.d)}\\
            &\rightsquigarrow\;\;\frac{1}{2}\delta_{(abc)(\mathbf{0}\mid d)}+\frac{1}{2}\delta_{(abc)(b\mid \bar{c}.d)}.
        \end{align}
        After considering all cases, we have $\charMay_\L((abc)(Q_6\mid O))=\frac{1}{2}$, leading to a contradiction.
        Therefore, one has that $\PDiamond\;\not\subseteq\; \CDiamond$ and $\PBox\;\not\subseteq\;\CBox$.
    \end{itemize}
\end{example}

This observation can be formally described as follows.

\begin{restatable}{proposition}{propClasGen}
\label{prop:classical_generalization}
    The following statements are valid.
    \begin{enumerate}
        \item
        $\Restr{(\PDiamond)}{CCS}\subsetneq\Restr{(\PCDiamond)}{CCS}=(\CDiamond)$.
        \item
        $\Restr{(\PBox)}{CCS}\subsetneq\Restr{(\PCBox)}{CCS}=(\CBox)$.
    \end{enumerate}
\end{restatable}

\paragraph{External characterizations}
For external characterizations, the proofs closely resemble Lemma~\ref{lem:equipollence_classical_generalization} and Proposition~\ref{prop:classical_generalization}.
We first show that our predicate-based testing framework generalizes the may and fair testing and then Proposition~\ref{prop:testing_classical_generalization} follows.

\begin{lemma}
\label{lem:testing_classical_generalization}
    Let $P\in\PCCS$ be a CCS process, $O$ be an classical observer, then we have
    \begin{enumerate}
        \item \label{item:testing_classical_generalization_1}
        $\charMay_\omega(P\mid O)=
        \begin{cases}
            1,&\text{if some testing sequence of $P$ by $O$ is DH-successful,}\\
            0,&\text{otherwise.}\\
        \end{cases}$
        \item \label{item:testing_classical_generalization_2}
        $\charFS_\omega(P\mid O)=
        \begin{cases}
            1,&\text{if the test of $P$ by $O$ is fair-successful,}\\
            0,&\text{otherwise.}\\
        \end{cases}$
    \end{enumerate}
\end{lemma}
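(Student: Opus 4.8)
The plan is to exploit the fact that a classical process $P\in\PCCS$ and a classical observer $O$ contain no $\oplus$ operator, so every Part~I transition of $P\mid O$ (and of any of its $\tau$-successors) produces a Dirac distribution; consequently the whole set of states reachable from $P\mid O$ stays $\oplus$-free and coincides with the classical $\myrightarrow{\tau}$-reachable set. First I would record two elementary consequences of this for the Part~II relation $\rightsquigarrow$. (i) The support of every descendant $\nu$ with $\delta_{P\mid O}\rightsquigarrow\nu$ is contained in the set of states $R$ with $P\mid O\Longrightarrow R$; this follows by induction on the length of the witness, since each step only redistributes mass of some $R\in\Supp(\mu)$ along a classical one-step $\tau$-successor. (ii) A \emph{degenerate} internal transition sequence $\delta_{P\mid O}\myrightarrow{\tau^{*}}\nu$ issued from a Dirac distribution is again a Dirac $\delta_R$ with $P\mid O\Longrightarrow R$, because taking $p=1$ at every step turns $\delta_R+(\delta_{R'}-\delta_R)$ into $\delta_{R'}$; conversely every classical $\tau$-path is reproduced this way.

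Next I would settle the may case (item~\ref{item:testing_classical_generalization_1}) through the auxiliary claim that, for any $\oplus$-free process $R$, $\charMay_\omega(R)=1$ when $R\Longrightarrow R'$ for some $R'\in\psi_\omega$ and $\charMay_\omega(R)=0$ otherwise. For the first alternative I replicate the $\tau$-path witnessing $R\Longrightarrow R'$ by a degenerate sequence as in (ii), reaching $\delta_{R'}$ with $\delta_{R'}(\psi_\omega)=1$, so $\sup\mathcal{O}^{R}_{\psi_\omega}=1$. For the second alternative, (i) guarantees that no descendant of $\delta_R$ places any mass on $\psi_\omega$, whence every outcome is $0$. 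To connect this to DH-successfulness I would verify the equivalence ``some testing sequence of $P$ by $O$ is DH-successful'' $\iff$ ``$P\mid O\Longrightarrow R'$ for some $R'\in\psi_\omega$'': the forward direction is immediate from the definition of a DH-successful sequence, and the backward direction is obtained by truncating a $\tau$-path at the \emph{first} state lying in $\psi_\omega$, which yields a valid testing sequence satisfying conditions (a)--(b) and terminating in a success state.

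Finally, for the fair case (item~\ref{item:testing_classical_generalization_2}) I would invoke Corollary~\ref{cor:calculated_by_degenerate_witness}~(\ref{item:calculated_by_degenerate_witness_2}) to restrict the outer infimum to degenerate witnesses, and then use (ii) to rewrite
\begin{equation}
\charFS_\omega(P\mid O)=\inf\left\{\sup\mathcal{O}^{R}_{\psi_\omega}:P\mid O\Longrightarrow R\right\}=\inf\left\{\charMay_\omega(R):P\mid O\Longrightarrow R\right\}.
\end{equation}
Since the may claim shows each $\charMay_\omega(R)\in\{0,1\}$, this infimum equals $1$ exactly when every reachable $R$ can still reach a $\psi_\omega$-state, and equals $0$ otherwise. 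By the reachability characterisation of DH-successfulness above, the condition ``$\charMay_\omega(R)=1$ for all $R$ with $P\mid O\Longrightarrow R$'' is literally the statement that whenever $P\mid O\Longrightarrow P'\mid O'$ some testing sequence of $P'$ by $O'$ is DH-successful, i.e.\ that the test of $P$ by $O$ is fair-successful, which closes the argument.

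I expect the main obstacle to be the bookkeeping in steps~(i) and (ii): making precise that the probabilistic Part~II dynamics on Dirac distributions of $\oplus$-free terms never escapes the classical $\tau$-reachable state space, and that the $0/1$ dichotomy of $\charMay_\omega$ therefore holds. This is the hinge on which both items turn, since it reduces the quantitative characteristics to the purely combinatorial reachability questions governing the classical may and fair predicates; once it is in place, matching the infimum of $\{0,1\}$-values with the universal quantifier in the definition of fair-successful is routine.
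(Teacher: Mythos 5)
Your proof is correct. Note first that the paper contains no standalone proof of this lemma: it only remarks that the argument ``closely resembles'' that of Lemma~\ref{lem:equipollence_classical_generalization}, and your skeleton does match that resembled proof---descendants of $\delta_{P\mid O}$ keep their support inside the classical $\tau$-reachable set (your step (i), the paper's induction on the witness length), degenerate witnesses applied to Dirac distributions of $\oplus$-free terms reproduce exactly the classical $\Longrightarrow$-paths (your step (ii), the paper's construction for the positive case), and hence $\charMay_\omega$ takes only the values $0$ and $1$. You depart from that template in two places, both defensible. First, you prove explicitly that ``some testing sequence of $P$ by $O$ is DH-successful'' is equivalent to ``$P\mid O\Longrightarrow R'$ for some $R'\in\psi_\omega$'' by truncating a $\tau$-path at its \emph{first} state in $\psi_\omega$; this bridge is genuinely required here, since DH-successful testing sequences must avoid intermediate $\omega$-capable states---a side condition with no counterpart in the reachability predicate $\Downarrow$ of Lemma~\ref{lem:equipollence_classical_generalization}---and the paper leaves it tacit. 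Second, in the fair case the paper's template copes with arbitrary, possibly non-Dirac, descendants $\nu$ via the linearity of testing outcomes (Lemma~\ref{lem:linearity_of_testing_outcome}), computing $\sup\mathcal{O}^{\nu}_{\psi_\omega}$ as the convex combination $\sum_{R\in\Supp(\nu)}\nu(R)\,\charMay_\omega(R)$ of $0/1$ values; you instead invoke Corollary~\ref{cor:calculated_by_degenerate_witness}~(\ref{item:calculated_by_degenerate_witness_2}) so that the outer infimum ranges over degenerate witnesses only, whence all descendants are Dirac and $\charFS_\omega(P\mid O)=\inf\left\{\charMay_\omega(R):P\mid O\Longrightarrow R\right\}$ becomes purely combinatorial. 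Both routes are sound; yours trades the linearity lemma for the degenerate-witness corollary and yields a slightly more self-contained reduction of the quantitative characteristics to classical reachability.
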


\begin{restatable}{proposition}{propTestClassGen}
\label{prop:testing_classical_generalization}
    The following statements are valid.
    \begin{enumerate}
        \item 
        $\Restr{(\PMay)}{CCS}\subsetneq\Restr{(\PCMay)}{CCS}=(\May)$
        \item 
        $\Restr{(\PFS)}{CCS}\subsetneq\Restr{(\PCFS)}{CCS}=(\FS)$
    \end{enumerate}
\end{restatable}

A major application of Proposition~\ref{prop:classical_generalization} and~\ref{prop:testing_classical_generalization} is that we can now confirm the validity of
Proposition~\ref{prop:classical_model_independent_characterization}, which formalizes the interrelation between classical diamond/box equivalences and may/fair equivalences.
This proof points out that our study of equivalences on distributions can exactly cover some study of equivalence relations on $\PCCS$.

\begin{proofof}{Proposition~\ref{prop:classical_model_independent_characterization}}
    By Theorem~\ref{thm:congr}, both $\Restr{(\PCBox)}{CCS}$ and $\Restr{(\PCDiamond)}{CCS}$ are congruences.
    Corollary~\ref{cor:probabilistic_may_fs} shows that $\Restr{(\PCFS)}{CCS}=\Restr{(\PCBox)}{CCS}\subsetneq \Restr{(\PCDiamond)}{CCS}=\Restr{(\PCMay)}{CCS}$.
    By Proposition~\ref{prop:testing_classical_generalization} and Proposition~\ref{prop:classical_generalization}, we have $(\FS)=(\CBox)\;\subsetneq\;(\CDiamond)=(\May).$
    \qed
\end{proofof}

\subsection{Comparison with probabilistic weak bisimilarity}
\label{subsec:comparison_weak}
In this subsection, we further explore distribution-based semantics by comparing our probabilistic testing equivalences with the famous probabilistic weak bisimilarity \cite{turrini_PolynomialTimeDecision_2015}.

\subsubsection{Probabilistic weak bisimilarity}
Let $\mathsf{Tr} = \{(A,\alpha,\rho)\mid \text{$A \myrightarrow{\alpha} \rho$ can be derived in the pLTS in Figure \ref{fig:pLTS1}}\}$ be the set of transitions.
    For a transition $\mathsf{tr} = (A,\alpha,\rho)$, we denote by $\mathsf{src}(\mathsf{tr})$ the source process $A$, by $\mathsf{act}(\mathsf{tr})$ the action $\alpha$, and by $\rho_{\mathsf{tr}}$ the evolved distribution $\rho$.
    Let $\mathsf{Tr}(\alpha) = \{\mathsf{tr}\in \mathsf{Tr} \mid \mathsf{act}(\mathsf{tr}) = \alpha\}$.
    An \emph{execution fragment} of some process $A_0$ is a finite or infinite sequence of alternating states and actions $\omega = A_0 \alpha_0 A_1 \alpha_1 A_2 \alpha_2 \cdots$ such that $A_i \myrightarrow{\alpha_i} \rho_i$ and $\rho_i(A_{i+1}) >0$.
    If $\omega$ is finite, we denote by $\mathsf{last}(\omega)$ the last state of $\omega$.
    We denote by $\mathsf{frags}^{*}(A)$ and $\mathsf{frags}(A)$ the set of finite and all execution fragments of $A$, respectively.
    Given $\alpha\in \Act$, we define $\widehat{\alpha} = \alpha$ if $\alpha \in \L$, and $\widehat{\alpha} = \varepsilon$ (the empty string) if $\alpha =\tau$.
    The \emph{trace} of an execution fragment $\omega$ is the sub-sequence of external actions of $\omega$, i.e., $\mathsf{trace}(\omega) = \widehat{\alpha_0} \widehat{\alpha_1}\widehat{\alpha_2}\cdots$.
    
    In \cite{turrini_PolynomialTimeDecision_2015}, the notion of \emph{scheduler} is used  to resolve non-determinism.
    To a process $A$, a scheduler is a function $\sigma:\mathsf{frags}^{*}(A) \to \Distr(\mathsf{Tr}\cup \{\bot\})$ such that for each $\omega \in \mathsf{frags}^{*}(A), \sigma(\omega) \in \Distr(\{\mathsf{tr}\in\mathsf{Tr}\mid \mathsf{src}(\mathsf{tr})= \mathsf{last}(\omega)\}\cup \{\bot\})$. 
    We call a scheduler $\sigma$ (of $A$) $Dirac$ if for each $\omega\in \mathsf{frags}^{*}(A)$, $\sigma(\omega) = \delta_{\mathsf{tr}}$ for some $\mathsf{tr} \in \mathsf{Tr}$ or $\sigma(\omega) = \delta_{\bot}$.
    We call a scheduler $\sigma$ (of $A$) \emph{finite} if there exists a finite natural number $N$ such that $\sigma(\omega) = \delta_{\bot}$ for all $\omega$ with $|\omega| \ge N$.
    A scheduler $\sigma$ and a process $A$ induce a probability distribution $\rho_{\sigma,A}$ over finite execution fragments as follows.
    The basic measurable events are the \emph{cones} of finite execution fragments, where the cone of $\omega$ is defined by $C_{\omega} = \{\omega'\in \mathsf{frags}(A)\mid \text{$\omega$ is a prefix of $\omega'$}\}$.
    The probability $\rho_{\sigma,A}$ of a cone $C_{\omega}$ is defined recursively as follows:
    \begin{equation}
        \rho_{\sigma,A}(C_{\omega})= \begin{cases}
    	1, & \text{if $\omega = A$,} \\
    	0, & \text{if $\omega=B\neq A$}, \\
    	\rho_{\sigma,A}(C_{\omega'})\cdot\sum_{\mathsf{tr}\in \mathsf{Tr}(\alpha)}\sigma(\omega')(\mathsf{tr})\cdot \rho_{\mathsf{tr}}(B), & \text{if $\omega = \omega'\alpha B$.}
    \end{cases}
    \end{equation}
    Finally, for any $\omega \in \mathsf{frags}^{*}(A)$, $\rho_{\sigma,A}(\omega)$
    is defined as $\rho_{\sigma,A}(\omega) = \rho_{\sigma,A}(C_{\omega}) \cdot \sigma(\omega)(\bot)$, where $\sigma(\omega)(\bot)$ is the probability of choosing no further transition  (i.e., terminating) after $\omega$.

    In the context of weak bisimulation, we will encounter distributions with infinite support. 
    We use $\InfDistr(\PRCCS)$ to represent all countable support distributions over probabilistic processes.
    the standard lifting introduced earlier in Definition~\ref{def:lifting} from $\Distr(\PRCCS)$ to $\InfDistr(\PRCCS)$.
    Here, we slightly abuse the notation, as they are essentially consistent, with their specific meaning depending on the distribution provided.
    
    The next definition of weak combined transition is standard \cite{segala_ModelingVerificationRandomized_1995, turrini_PolynomialTimeDecision_2015}.
    The fact that state $A$ can weakly transfer to  distribution $\rho$ by executing an observable action $\alpha$ is defined as follows: if there exists a scheduler $\sigma$, from $A$ by doing $\alpha$ and  finite number of silent actions following $\sigma$, the probability of the final state being $B$ equals $\rho(B)$.
    \begin{definition}[Weak combined transition]
        \label{def:weak_combined_transition}
    	Given a process $A \in \PRCCS$, an action $\alpha \in \Act$ and a distribution $\rho \in \InfDistr(\PRCCS)$.
    	We say that there is a \emph{weak combined transition} from $A$ to $\rho$ labeled by $\alpha$, denoted by $A \stackrel{\alpha}{\Longrightarrow}_{c} \rho$, if there exists a scheduler $\sigma$ such that the following holds for the induced distribution $\rho_{\sigma,A}$:
    	\begin{enumerate}
    		\item $\rho_{\sigma,A}(\mathsf{frags}^{*}(A)) = 1$.
    		\item For each $\omega \in \mathsf{frags}^{*}(A)$, if $\rho_{\sigma,A}(\omega)>0$ then $\mathsf{trace}(\omega) = \widehat{\alpha}$.
    		\item For each process $B$, $\rho_{\sigma,A}\{\omega\in \mathsf{frags}^{*}(A)\mid \mathsf{last}(\omega)=B\} = \rho(B)$.
    	\end{enumerate}
    \end{definition}
    
    The next definition resembles the classical conception for probabilistic automata \cite{segala_ModelingVerificationRandomized_1995}.
    \begin{definition}[Probabilistic weak bisimulation]
    \label{def:PWB}
    	An equivalence $\mathcal{E}$ on $\PRCCS$ is a \emph{probabilistic weak bisimulation} if, for all $(A,B) \in \mathcal{E}$, if $A \myrightarrow{\alpha} \rho_A$, then there exists $\rho_B\in\InfDistr(\PRCCS)$ such that $B \stackrel{\alpha}{\Longrightarrow}_{c} \rho_B$ and $\rho_A\; \mathcal{E}^{\dagger} \;\rho_B$.

    \end{definition}

We write $A \PWeak B$ if there is a probabilistic weak bisimulation $\mathcal{E}$ such that $(A,B) \in \mathcal{E}$.

    \begin{theorem}[\cite{turrini_PolynomialTimeDecision_2015}]
    \label{thm:pwb-equivalence}
    	$\PWeak$ is an equivalence relation, and it is the largest probabilistic weak bisimulation.
    \end{theorem}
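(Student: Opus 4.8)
The plan is to split the statement into three parts: that $\PWeak$ is an equivalence, that $\PWeak$ is itself a probabilistic weak bisimulation, and that it is the largest such. The last is immediate, since by the defining clause of $\PWeak$ any probabilistic weak bisimulation $\mathcal{E}$ satisfies $\mathcal{E}\subseteq\;\PWeak$, so once $\PWeak$ is known to be a probabilistic weak bisimulation it must be the largest one. Reflexivity follows because the identity relation is a probabilistic weak bisimulation: a one-step transition $A\myrightarrow{\alpha}\rho_A$ is matched by the degenerate weak combined transition $A\stackrel{\alpha}{\Longrightarrow}_c\rho_A$, and $\rho_A\,\mathrm{Id}^{\dagger}\,\rho_A$ holds trivially. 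Symmetry is built into the framework: since every probabilistic weak bisimulation is by definition an equivalence, if $(A,B)$ lies in some bisimulation $\mathcal{E}$ then so does $(B,A)$, whence $B\PWeak A$. The two genuine tasks are therefore transitivity and the bisimulation property of $\PWeak$ itself.

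Both tasks rest on a single distribution-level matching lemma that I would isolate first: if $\mathcal{E}$ is a probabilistic weak bisimulation, $\mu\,\mathcal{E}^{\dagger}\,\nu$, and $\mu\stackrel{\alpha}{\Longrightarrow}_c\mu'$ (the lifting of a weak combined transition to a source distribution), then there is $\nu'$ with $\nu\stackrel{\alpha}{\Longrightarrow}_c\nu'$ and $\mu'\,\mathcal{E}^{\dagger}\,\nu'$. The point is that Definition~\ref{def:PWB} only grants a match for a single one-step transition out of a single state, whereas transitivity will force us to re-match an entire weak combined transition. To prove this lemma I would first establish the standard lifting facts — that $\mathcal{E}^{\dagger}$ is again an equivalence, is monotone in $\mathcal{E}$, and respects relational composition — and then analyse the scheduler $\sigma$ witnessing $\mu\stackrel{\alpha}{\Longrightarrow}_c\mu'$. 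Decomposing $\sigma$ into its first step followed by the residual schedulers on the cones, I would match the first one-step transition out of each state of $\nu$ using the single-step clause of Definition~\ref{def:PWB}, assemble an intermediate distribution related by $\mathcal{E}^{\dagger}$ via the linearity of lifting over convex combinations, and then concatenate the matching weak combined transitions supplied by the induction hypothesis. Here I would use that weak combined transitions are closed under sequential composition, so that a $\stackrel{\alpha}{\Longrightarrow}_c$ followed by a family of silent $\stackrel{\varepsilon}{\Longrightarrow}_c$ is again a $\stackrel{\alpha}{\Longrightarrow}_c$.

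With the lemma in hand, transitivity follows by the usual closure argument. Given $A\PWeak B$ and $B\PWeak C$ witnessed by bisimulations $\mathcal{E}_1$ and $\mathcal{E}_2$, I would form the equivalence $\mathcal{E}$ generated by $\mathcal{E}_1\cup\mathcal{E}_2$ (its reflexive, symmetric, transitive closure) and show $\mathcal{E}$ is a probabilistic weak bisimulation: a transition out of a state related by a chain of $\mathcal{E}_1$/$\mathcal{E}_2$ steps is matched by composing matches along the chain, each link now being matched at the level of weak combined transitions using the distribution-level lemma, and the resulting distributions stay related by $\mathcal{E}^{\dagger}$ because $\mathcal{E}_i^{\dagger}\subseteq\mathcal{E}^{\dagger}$ by monotonicity while $\mathcal{E}^{\dagger}$ is transitive. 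As $(A,C)\in\mathcal{E}$, this yields $A\PWeak C$. That $\PWeak$ is itself a probabilistic weak bisimulation is then routine: $\PWeak$ is now an equivalence, and for $(A,B)\in\;\PWeak$ a witnessing bisimulation $\mathcal{E}$ matches $A\myrightarrow{\alpha}\rho_A$ by some $B\stackrel{\alpha}{\Longrightarrow}_c\rho_B$ with $\rho_A\,\mathcal{E}^{\dagger}\,\rho_B$, and monotonicity gives $\rho_A\,\PWeak^{\dagger}\,\rho_B$. The main obstacle is the distribution-level matching lemma of the second paragraph: correctly decomposing and recomposing combined transitions whose induced distributions may have countable (infinite) support, and verifying that $\mathcal{E}^{\dagger}$-relatedness is preserved through these infinite convex manipulations, is the delicate technical core on which everything else hinges.
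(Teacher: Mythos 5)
A preliminary remark on the comparison itself: the paper does not prove Theorem~\ref{thm:pwb-equivalence} at all — it is imported with a citation to \cite{turrini_PolynomialTimeDecision_2015} — so your proposal can only be judged against the standard literature argument (Segala~\cite{segala_ModelingVerificationRandomized_1995}, Turrini--Hermanns), not against anything in the text. Measured that way, your architecture is exactly the standard one and the easy reductions are all sound: reflexivity via the identity bisimulation (a one-step transition is indeed a weak combined transition under the Dirac scheduler that fires it and then stops), symmetry for free because Definition~\ref{def:PWB} requires every bisimulation to be an equivalence, ``largest'' immediate once $\PWeak$ is itself a bisimulation, monotonicity of the lifting of Definition~\ref{def:lifting} for nested equivalences (each coarser class is a disjoint union of finer classes, and distributions have countable support), and everything funneled into the right key lemma: matching of weak combined transitions from $\mathcal{E}^{\dagger}$-related distributions.

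The gap is in your proof of that key lemma. You propose to decompose the witnessing scheduler into ``its first step followed by the residual schedulers on the cones'' and then invoke an induction hypothesis on the residuals. But a weak combined transition need not be induced by a finite-depth scheduler: a scheduler that, at every fragment, continues with probability $\tfrac{1}{2}$ and stops with probability $\tfrac{1}{2}$ witnesses a perfectly good $\stackrel{\varepsilon}{\Longrightarrow}_{c}$, and its residual schedulers are copies of itself. So ``first step plus residuals'' yields no well-founded measure, and the induction as stated never bottoms out; this is not merely ``delicate,'' it is a step that fails. The standard repair — and, tellingly, the very technique this paper uses elsewhere for $\PWeak$, namely the truncated schedulers $\sigma_Q^{\upharpoonright n}$ in the proofs of Lemma~\ref{lem:PWeak_almost_bisimulation} and Lemma~\ref{lem:PWeak_observable_sup} — is to cut the scheduler at depth $n$, where your induction does work, match each truncation, and then pass to the limit. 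That last step is the real content: one needs the set of targets of weak combined transitions from a fixed source to be closed under limits (available for finitely branching systems such as RCCS), and $\mathcal{E}^{\dagger}$-relatedness to be preserved under pointwise limits of distributions. Without this truncation-and-limit argument (or an equivalent direct construction of the matching scheduler), the matching lemma, and with it transitivity and the bisimulation property of $\PWeak$, remains unproved.
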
 

\begin{example}
    Let $Q_1:=\frac{1}{2}\tau.a\oplus\frac{1}{2}\tau.b$, $Q_2:=\fix X.\left(\frac{1}{3}\tau.a\oplus\frac{1}{3}\tau.b\oplus\frac{1}{3}\tau.X\right)$.
    We now proceed to verify the claim that $Q_1 \PWeak Q_2$ in Example~\ref{ex:probabilistic_transition}.
    Define
    \begin{equation}
        \R:=\{(Q_1, Q_2), (a,a), (b,b), (\mathbf{0},\mathbf{0})\}.
    \end{equation}
    Next we need to show that $\R$ is a probabilistic weak bisimulation.
It suffices to consider the pair $(Q_1, Q_2)$, as detailed in Example~\ref{ex:probabilistic_transition} and Example~\ref{ex:oplus_testing_outcome}.
Note also that the infinite transition sequence illustrated in Figure~\ref{fig:infinite_case} can induce a weak combined transition.
Therefore we conclude that $Q_1 \PWeak Q_2$.
\end{example}

The schedule-based approach may involve a weak combined transition with an \emph{infinite} number of execution fragments, while our distribution-based approach uses \emph{finite} probabilistic transition sequences.
Therefore, it is possible to establish properties about probabilistic transition sequences by \emph{induction} on the length of the sequence (see Lemma~\ref{lem:linearity_of_probabilistic_transition_strengthened}).
In contrast, it is rather difficult to obtain similar results for weak combined transitions.

Moreover, the scheduler-based approach can be viewed as a typical method of resolving nondeterminism by means of probability: once a scheduler is fixed, all nondeterministic choices in a process are resolved according to the scheduler’s prescriptions, resulting in a single induced distribution.
For example, consider a simple process $P:=\tau.a + \tau.b$.
A scheduler may resolve the nondeterministic choice by triggering the two branches with probability $\tfrac{1}{2}$ each, thereby inducing the distribution $\tfrac{1}{2}\delta_a + \tfrac{1}{2}\delta_b$.
In this case, its behaviour becomes indistinguishable from that of the probabilistic process $\tfrac{1}{2}\tau.a \oplus \tfrac{1}{2}\tau.b$.
In contrast, our approach always preserves nondeterminism.
To simulate the behaviour described above, we would at least perform the following two probabilistic transitions:
\begin{equation}
    P \xrightarrow{(\tau,\frac{1}{2})} \frac{1}{2} \delta_P +\frac{1}{2}\delta_b \xrightarrow{(\tau,1)} \frac{1}{2}\delta_a+\frac{1}{2}\delta_b,
\end{equation}
where each step retains the underlying nondeterminism.
Consequently, in the three additional ``parallel worlds'' generated by these nondeterministic choices, we obtain the distributions $\delta_a, \delta_b$, and $\frac{1}{2}\delta_a+\frac{1}{2}\delta_b$.
The behavioural differences between the two operators, $+$ and $\oplus$, under our semantics have already been illustrated in Example~\ref{ex:semantics}.
Thus, the mechanisms of the two approaches differ conceptually.
Nevertheless, the resulting behaviours can mutually simulate each other. As explained above, a single scheduler step can be simulated through two or more probabilistic transitions in our framework, whereas a single probabilistic transition in our approach can be reproduced in the scheduler-based semantics using $\bot$.

Finally, it is worth noting that the scheduler-based approach distinguishes between all execution fragments, making it particularly suitable for formalizing trace-related equivalences, which typically possess stronger discriminating power. In fact, $\PWeak$ is strictly finer than the testing equivalences defined earlier, a result that will be established formally in the subsequent sections. Nevertheless, we emphasize that the strength of our approach lies in the use of the silent-transition relation $\rightsquigarrow$, which significantly streamlines the formalization of testing equivalences within our unified testing framework.

\subsubsection{A spectrum of probabilistic equivalences}

We can lift the equivalences $\PWeak$ on $\PRCCS$ to equivalences $(\PWeak)^{\dagger}$ on $\Distr(\PRCCS)$ in the standard way (see Definition \ref{def:lifting}).
For convenience, we often abbreviate them as $\PWeak$ when it is clear from the context that we are discussing equivalences over $\Distr(\PRCCS)$.
We now aim to establish the relationship between $\PWeak$ and our testing equivalences.

\begin{theorem}[Spectrum]
\label{thm:probabilistic_hierarchy}
    For the equivalences $\PWeak$, $\PBox$ and $\PDiamond$ on $\Distr(\PRCCS)$, we have the strict inclusions
    \begin{equation}
        \PWeak \;\subsetneq\; \PBox \;\subsetneq\; \PDiamond.
    \end{equation}
\end{theorem}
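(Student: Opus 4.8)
The plan is to obtain the chain by transitively composing the two strict inclusions already established in this section and the previous one, so essentially no new argument is needed. First I would note that, by the conventions recorded in Table~\ref{tab:symbols}, $\PBox$ and $\PDiamond$ are precisely $\PXBox$ and $\PXDiamond$ instantiated at the testing context $\mathcal{D}=\Distr(\PRCCS)$. Hence the strict inclusion $\PBox\;\subsetneq\;\PDiamond$ is exactly the first statement of Corollary~\ref{cor:probabilistic_may_fs} (itself the specialization of Theorem~\ref{thm:probabilistic_box_diamond} to this context), whose strictness is witnessed by $\tau.a$ and $\tau.a+\tau.\fix X.\tau.X$.

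Next, the strict inclusion $\PWeak\;\subsetneq\;\PBox$ is Theorem~\ref{thm:pweak_vs_pbox}, whose strictness is witnessed by the processes $B_1$ and $B_2$ of Example~\ref{ex:pweak}. Since set inclusion is transitive, composing these two facts yields $\PWeak\;\subseteq\;\PBox\;\subseteq\;\PDiamond$, and each inclusion being strict forces the overall chain to be strict as well. I would therefore simply cite the two results and append the one-line transitivity observation.

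The only point requiring care is that the inclusion relations must be compared on a common carrier set, namely $\Distr(\PRCCS)$. This is immediate once $\PWeak$ has been lifted to $(\PWeak)^{\dagger}$ over $\Distr(\PRCCS)$ via Definition~\ref{def:lifting}, while $\PBox$ and $\PDiamond$ are by definition relations on $\Distr(\PRCCS)$; I would make this alignment explicit before chaining. There is no genuine obstacle in the present statement itself: all the substantive work lies in the cited results---establishing that $\PWeak$ is probabilistically strongly equipollent (Lemma~\ref{lem:PWeak_is_stronglyEquipollent}) and $p$-extensional (Lemma~\ref{lem:pweak_is_extensional}) for the first inclusion, and invoking Lemma~\ref{lem:sequence_transform} to show that $\PBox$ is probabilistically equipollent (hence contained in the largest such relation $\PDiamond$) for the second.
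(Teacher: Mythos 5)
Your proposal is correct and follows exactly the paper's own proof, which likewise derives the chain immediately from Theorem~\ref{thm:pweak_vs_pbox} and Corollary~\ref{cor:probabilistic_may_fs} by transitivity. Your added remark about aligning the carrier set via the lifting $(\PWeak)^{\dagger}$ is a sensible point of care that the paper handles the same way just before this theorem.
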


Before delving into the formal proofs, we present an example to illustrate the underlying intuition.

\begin{example}

\begin{figure}[htb]
\centering
    \begin{minipage}[c]{.6\textwidth}
        \centering
        
\begin{tikzpicture}
\tikzset{vertex/.style = {}}
\tikzset{edge/.style = {->,> = latex'}}
\node[vertex] (a) at (1.5,1) {\footnotesize$B_1$};
\node[vertex] (b) at  (0,0) {\footnotesize$A_1$};
\node[vertex] (c) at  (3,0) {\footnotesize$A_2$};
\draw[edge] (a) to node[left=2, pos=0] {\footnotesize$\frac{1}{2}\tau$} (b);
\draw[edge] (a) to node[right=5, pos=0] {\footnotesize$\frac{1}{2}\tau$} (c);
\node[vertex] (d) at  (-1,-1) {\footnotesize$a$};
\node[vertex] (e) at  (1,-1) {\footnotesize$\mathbf0$};
\draw[edge] (b) to node[left=2, pos=0] {\footnotesize$\frac{3}{4}\tau$} (d);
\draw[edge] (b) to node[right=3, pos=0] {\footnotesize$\frac{1}{4}\tau$} (e);
\node[vertex] (f) at  (2,-1) {\footnotesize$a$};
\node[vertex] (g) at  (4,-1) {\footnotesize$\mathbf0$};
\draw[edge] (c) to node[left=2, pos=0] {\footnotesize$\frac{1}{4}\tau$} (f);
\draw[edge] (c) to node[right=3, pos=0] {\footnotesize$\frac{3}{4}\tau$} (g);
\end{tikzpicture}

    \end{minipage}
\hspace{0.05\textwidth}
    \begin{minipage}[c]{.3\textwidth}
        \centering

\begin{tikzpicture}
\tikzset{vertex/.style = {}}
\tikzset{edge/.style = {->,> = latex'}}
\node[vertex] (a) at  (0,0) {\footnotesize$a$};
\node[vertex] (b) at  (1,1) {\footnotesize$B_2$};
\node[vertex] (c) at  (2,0) {\footnotesize$\mathbf 0$};
\draw[edge] (b) to node[left=2, pos=0] {\footnotesize$\frac{1}{2}\tau$} (a);
\draw[edge] (b) to node[right=3, pos=0] {\footnotesize$\frac{1}{2}\tau$} (c);
\end{tikzpicture}

    \end{minipage}
\caption{Two processes can be probabilistically may equivalent without being probabilistically weak bisimilar.}
\label{fig:exp_weak}
\end{figure}
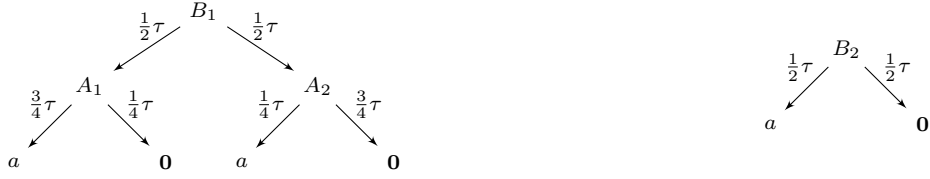

    Let $A_1:=\frac{3}{4}\tau.a\oplus\frac{1}{4}\tau.\mathbf{0}$, $A_2:=\frac{1}{4}\tau.a\oplus\frac{3}{4}\tau.\mathbf{0}$, $B_1:=\frac{1}{2}\tau.A_1\oplus\frac{1}{2}\tau.A_2$, and $B_2:=\frac{1}{2}\tau.a\oplus\frac{1}{2}\tau.\mathbf{0}$, as shown in Figure~\ref{fig:exp_weak}.
\begin{itemize}
    \item We claim that the transition $B_1 \myrightarrow{\tau} \frac{1}{2} \delta_{A_1} + \frac{1}{2} \delta_{A_2}$ cannot be simulated by $B_2$.
Otherwise, since the only possible transition for $B_2$ is $B_2 \myrightarrow{\tau} \frac{1}{2} \delta_a + \frac{1}{2} \delta_b$, or doing nothing, it would follow that $A_1$ must be simulated by $a$, $\mathbf{0}$, or $B_2$.
However, it is easy to verify that $A_1 \not\PWeak \mathbf{0}$ and $A_1 \not\PWeak a$.
Moreover, $A_1$ can reach action $a$ with probability $\frac{3}{4}$, while $B_2$ can reach $a$ with probability at most $\frac{1}{2}$.
Thus, none of these options suffice to simulate the transition.
We conclude that $B_1 \not\PWeak B_2$.
    \item By Lemma~\ref{lem:show_membership}, it is easy to show that $\delta_P\PBox\delta_Q$.
    The intuition is that although $B_2$, $a$, and $\mathbf{0}$ cannot individually simulate $B_1$, the transition 
    \begin{equation}
        B_2\myrightarrow{\tau}\frac{1}{2}\delta_a+\frac{1}{2}\delta_{\mathbf0}=\frac{1}{2}(\frac{3}{4}\delta_a+\frac{1}{4}\delta_\mathbf{0})+\frac{1}{2}(\frac{1}{4}\delta_a+\frac{3}{4}\delta_\mathbf{0})
    \end{equation}
    allows $A_1$ to be simulated by the distribution $\frac{3}{4}\delta_a+\frac{1}{4}\delta_\mathbf{0}$ and $A_2$ by $\frac{1}{4}\delta_a+\frac{3}{4}\delta_\mathbf{0}$.
\end{itemize}
In conclusion, probabilistic weak bisimulation allows one process to simulate another process, whereas probabilistic testing equivalences allow one distribution to simulate another distribution.
From this example, we may conjecture that $\PWeak\;\subsetneq\;\PBox$, where the strictness is already witnessed by $B_1$ and $B_2$.
\label{ex:pweak}
\end{example}

Now we formalize what we mean by ``simulate a distribution by another distribution'' in Example~\ref{ex:probabilistic_transition} and Example~\ref{ex:pweak}.
Given an equivalence relation $\E$ on $\PRCCS$, we define the \emph{$\E$-norm} of a function $g:\PRCCS\to\mathbb R$ as $|g|_{\E}:= \sum_{\mathcal{C} \in \PRCCS / \E} | \sum_{x\in\mathcal{C}}g(x)|$.
Building on this, we say that $\mu_2$ can \emph{$\mathcal{E}$-almost simulate} $\mu_1$ if for any transition $\mu_1 \rightsquigarrow \nu_1$ and any $\epsilon>0$, there exists a transition $\mu_2 \rightsquigarrow \nu_2$ such that $\vert\nu_1 - \nu_2\vert_{\mathcal{E}} \le \epsilon$.
Lemma~\ref{lem:PWeak_almost_bisimulation} states that any distributions $\mu_1, \mu_2$ satisfying $\mu_1 \PWeak \mu_2$ can $\PWeak$-almost simulate each other.

\begin{restatable}{lemma}{pweakAlmostBisim}
\label{lem:PWeak_almost_bisimulation}
    Given two distributions $\mu_1,\mu_2 \in \Distr(\PRCCS)$ and a number $\epsilon > 0$. 
    If $\mu_1 \PWeak \mu_2$ and $\mu_1 \ptran \nu_1$ with a degenerate witness $\pi_1 \in \tau^{*}$, then there exists a distribution $\nu_2$ such that $\mu_2 \ptran \nu_2$ and $|\nu_2 - \nu_1|_{\PWeak} \le \epsilon$.
\end{restatable}

Furthermore, we aim to leverage the power of $\PWeak$-almost simulation to demonstrate that $\PWeak$ is a strongly equipollent relation.
We have the following observations.
\begin{enumerate}
    \item If $P\PWeak Q$ and $P\myrightarrow{\ell}$ for some $\ell\in \L$, by Lemma~\ref{lem:PWeak_almost_bisimulation}, we can obtain a transition sequence $\delta_Q\rightsquigarrow\nu$ such that $\nu(\psi_\L)\ge1-\epsilon$ for any $\epsilon>0$.
    Thus $\charMay_\L(Q)=1=\charMay_\L(P)$.
    \item If $\mu_1\PWeak\mu_2$ and $\mu_1\ptran \nu_1$ with a degenerate witness, again by Lemma~\ref{lem:PWeak_almost_bisimulation}, we can obtain a transition sequence $\mu_2\rightsquigarrow\nu_2$ and $|\nu_2 - \nu_1|_{\PWeak} \le \epsilon$ for any small $\epsilon$.
    Intuitively, up to an arbitrarily small error, for every process $P \in \Supp(\nu_1)$ that contributes to $\nu_1(\psi_\L)$, i.e., $P\in\psi_\L$, there exists a process $Q\in\Supp(\nu_2)$ with almost the same probability $\nu_2(Q)\approx\nu_1(P)$ such that $\charMay_\L(Q) = 1$. 
    As a result, the values $\nu_1(\psi_\L)$ and $\charMay_\L(\nu_2)$ differ only by a negligible constant.
    Under the action of the supremum operator, such discrepancies are eliminated.
    Hence, we obtain $\charMay_\L(\mu_1) = \charMay_\L(\mu_2)$ without difficulty.
    \item 
    We have already established the $p$-equipollence of $\PWeak$.
    Building on this, a similar analysis should convince us that $\charFS_\L(\mu_1) = \charFS_\L(\mu_2)$ holds for any $\mu_1 \PWeak \mu_2$.
\end{enumerate}
Following the above reasoning, we immediately obtain Lemma~\ref{lem:PWeak_is_stronglyEquipollent}.
See Appendix~\ref{app:proof_strong_equipollence} for a formal proof.

\begin{restatable}{lemma}{pweakIsStr}
    \label{lem:PWeak_is_stronglyEquipollent}
    $\PWeak$ is a probabilistically strongly equipollent relation on $\Distr(\PRCCS)$.
\end{restatable}

It is well-known that the weak combined transitions are compositional~\cite{segala_ModelingVerificationRandomized_1995}.
Therefore, we can obtain the $p$-extensionality (i.e., $\Distr(\PRCCS)$-extensionality) of $\PWeak$ and deduce Theorem~\ref{thm:pweak_vs_pbox}.

\begin{restatable}{lemma}{pweakIsExt}
\label{lem:pweak_is_extensional}
    $\PWeak$ is $p$-extensional.
\end{restatable}

\begin{theorem}
\label{thm:pweak_vs_pbox}
    $\PWeak \;\subsetneq\; \PBox$.
\end{theorem}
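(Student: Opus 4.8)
The plan is to establish the two containments separately: first $\PWeak \subseteq \PBox$, then the strictness. For the inclusion I would exploit the maximality characterization of $\PBox$. Recall that $\PBox$ is the instance of $\PXBox$ with testing context $\mathcal{D} = \Distr(\PRCCS)$, and by Definition~\ref{def:probabilistic_box_equivalence} it is the \emph{largest} probabilistically strongly equipollent, $\mathcal{D}$-extensional relation on $\Distr(\PRCCS)$. Hence it suffices to verify that the lifted relation $(\PWeak)^\dagger$ enjoys both closure properties. But these are exactly the two lemmas immediately preceding the theorem: Lemma~\ref{lem:PWeak_is_stronglyEquipollent} gives probabilistic strong equipollence, and Lemma~\ref{lem:pweak_is_extensional} gives $p$-extensionality. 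Since $\PWeak$ is moreover an equivalence (Theorem~\ref{thm:pwb-equivalence}), it sits inside the family whose largest member is $\PBox$, and the inclusion $\PWeak \subseteq \PBox$ drops out by maximality.

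For strictness I would reuse the separating pair already prepared in Example~\ref{ex:pweak}, namely $B_1 := \frac{1}{2}\tau.A_1\oplus\frac{1}{2}\tau.A_2$ and $B_2 := \frac{1}{2}\tau.a\oplus\frac{1}{2}\tau.\mathbf{0}$ (with $A_1 := \frac{3}{4}\tau.a\oplus\frac{1}{4}\tau.\mathbf{0}$ and $A_2 := \frac{1}{4}\tau.a\oplus\frac{3}{4}\tau.\mathbf{0}$). That example records both halves of what I need: $\delta_{B_1}\PBox\delta_{B_2}$ and $B_1\not\PWeak B_2$. The positive half follows from Lemma~\ref{lem:show_membership}, the point being that the single transition of $B_2$ decomposes as the convex combination $\frac{1}{2}\bigl(\frac{3}{4}\delta_a+\frac{1}{4}\delta_{\mathbf 0}\bigr)+\frac{1}{2}\bigl(\frac{1}{4}\delta_a+\frac{3}{4}\delta_{\mathbf 0}\bigr)$, so at the level of distributions $B_2$ can match the branching of $B_1$. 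The negative half is argued pointwise: a weak bisimulation would force $A_1$ to be matched by a single reachable state of $B_2$, i.e.\ one of $a$, $\mathbf{0}$, or $B_2$ itself, and none of these works since $A_1$ reaches $a$ with probability $\frac{3}{4}$ whereas $B_2$ reaches $a$ with probability at most $\frac{1}{2}$, and $A_1\not\PWeak a$, $A_1\not\PWeak\mathbf{0}$. This exhibits a pair in $\PBox\setminus\PWeak$ and completes the argument.

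The conceptual work of the theorem is thus entirely front-loaded into Lemmas~\ref{lem:PWeak_is_stronglyEquipollent} and~\ref{lem:pweak_is_extensional} and Example~\ref{ex:pweak}; once these are granted, the theorem itself is a short assembly. The only bookkeeping point I would be careful about in the inclusion step is to confirm that probabilistic strong equipollence and $p$-extensionality are each preserved under arbitrary unions, so that the phrase ``the largest such relation'' is well-defined and legitimately contains $(\PWeak)^\dagger$. The genuinely hard part—reconciling distribution-level matching against the state-by-state matching demanded by weak bisimulation—has already been absorbed into the $\PWeak$-almost simulation of Lemma~\ref{lem:PWeak_almost_bisimulation}, which underpins Lemma~\ref{lem:PWeak_is_stronglyEquipollent}, so no new technical difficulty arises here.
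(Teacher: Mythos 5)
Your proposal is correct and follows essentially the same route as the paper: the inclusion $\PWeak \subseteq \PBox$ is obtained by combining Lemma~\ref{lem:PWeak_is_stronglyEquipollent} and Lemma~\ref{lem:pweak_is_extensional} with the maximality in Definition~\ref{def:probabilistic_box_equivalence}, and strictness is witnessed by the pair $B_1$, $B_2$ of Example~\ref{ex:pweak}. Your extra remark on closure of the defining properties under unions is a harmless piece of bookkeeping that the paper leaves implicit.
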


\begin{proof}
    Lemma~\ref{lem:PWeak_is_stronglyEquipollent} and Lemma~\ref{lem:pweak_is_extensional} show that $\PWeak$ is a probabilistically strongly equipollent and $p$-extensional relation on $\Distr(\PRCCS)$.
    The strictness is shown in Example~\ref{ex:pweak}.
    \qed
\end{proof}

Theorem~\ref{thm:probabilistic_hierarchy} follows immediately from Theorem~\ref{thm:pweak_vs_pbox} and Corollary~\ref{cor:probabilistic_may_fs}.

\section{Conclusion}\label{sec:conclusion}

In this paper, we have proposed a new distribution-based semantics for the RCCS model and investigated several probabilistic testing equivalences. 
We have formalized a probabilistic testing framework with respect to process predicates and introduced two types of characterizations for testing equivalences: the unifying characterizations ($\PXDiamond$ and $\PXBox$) and the observer-based external characterizations ($\PXMay$ and $\PXFS$).
Both families of equivalences are parametrised by a given testing context~$\mathcal{D}$.
These equivalences can be regarded as generalizations of their classical counterparts under the classical testing context. 
We have also conducted an in-depth comparison between probabilistic weak bisimilarity, probabilistic branching bisimilarity, and our testing equivalences. 
It should be emphasized that our equivalence relations are congruences.
A similar conclusion has been corroborated in our case study on the pCSP model as well, further demonstrating the generalizability of our framework.

Our testing equivalences are defined with respect to two classes of characteristic quantities, $\charMay_\varphi$ and $\charFS_\varphi$.
It is important to note that our definitions (Definition~\ref{def:probabilistic_equipollence} and Definition~\ref{def:prob_may_fs_predicate}) require that the two related distributions to \emph{exactly} match on these characteristics.
However, this is a rather strong requirement, especially for probabilistic models such as probabilistic Turing machines or interactive proof systems~\cite{arora2006computational}, where small errors are often tolerated.
This observation suggests investigating \emph{weaker forms} of testing equivalence that permit small discrepancies in the characteristic values of related distributions.
For example, if two processes differ by no more than $0.001$ in their probability of eventual success under all tests, we may consider them interchangeable in most practical contexts. 
Exploring the properties and implications of such relaxed testing equivalences presents a promising, practical direction for future work.

Our new framework can be extended in several aspects. Probabilistic must equivalence is not the focus of this paper, but the testing framework developed in Section~\ref{sec:testing_over_distributions} remains applicable to its definition and analysis.
For instance, we can define a new characteristic $\chi^{\mathsf{must}}_{\varphi}$, which may take the form of $\inf\mathcal O^{\mu}_{\varphi}$, to provide internal and external characterizations of probabilistic must equivalence. 
Furthermore, the discussion of divergence in Section~\ref{subsec:structure_of_testing_outcome} indicates that the predicate-based framework can also be applied to \emph{divergence-sensitive} equivalences.
The probabilistic transition sequence introduced in Section~\ref{sec:rccs_model} can serve as a foundation for alternative characterizations of other probabilistic equivalences. 
The linearity and continuity of testing outcomes are also notable properties, offering a natural connection to recent research on metrics~{\cite{du2022behavioural,dal2023contextual,spork2024spectrum}}. 
Investigating a metric-based characterization of probabilistic testing equivalences thus represents another fruitful direction for future work.

One important direction for future work concerns decision algorithms for the proposed equivalences.
In classical testing equivalence, the may equivalence checking problem between finite-state processes can be reduced to the trace equivalence problem of nondeterministic automata, and is therefore in $\mathbf{PSPACE}$.
It is natural to explore whether similar ideas can be adapted to the probabilistic setting.
Indeed, several existing studies have already provided characterizations of probabilistic trace equivalence~\cite{bernardo2014revisiting,bernardo2022probabilistic_2022}.
However, the presence of probabilistic branching means that even after fixing a sequence of nondeterministic choices, a probabilistic trace typically becomes a nontrivial labelled tree rather than a simple sequence of actions.
Determining whether two processes generate the same set of such labelled trees is, intuitively, substantially more difficult.
Consequently, designing decision procedures for may or fair equivalences in probabilistic concurrent systems appears to be significantly more challenging.


\section*{Acknowledgments}
We thank BASICS members for their instructive discussions and feedback.
We also thank the anonymous referees for their valuable comments and instructive suggestions. Their efforts have greatly helped improve the quality of this manuscript.
The support from the National Science Foundation of China (62572319) is acknowledged.

\bibliographystyle{elsarticle-num} 
\bibliography{contents/testing}

\appendix
\setcounter{section}{0}
\setcounter{subsection}{0}
\renewcommand{\thesection}{\Alph{section}}
\renewcommand{\thesubsection}{\thesection.\arabic{subsection}}

\section{Proofs for Section \ref{sec:rccs_model}}
\label{appendix:proof_rccs}

\LinOfProbTransStr*
\begin{proof}
    The cases are trivial for $p = 0$ or $p = 1$. 
    Now we assume $p\in (0,1)$.
    \begin{enumerate}
        \item 
        Prove by induction on $|\pi_1|+|\pi_2|$.
        The base case is trivial.
        Assume that $|\pi_1|+|\pi_2\vert=k+1>0$.
        Then at least one of $|\pi_1|,|\pi_2|$ are non-zero. 
        W.l.o.g., we assume that $|\pi_1|>0$.
        Then the original transition sequence $\mu_1\myrightarrow{\pi}\nu_1$ can be divided into $\mu_1\myrightarrow{\pi_1'}\nu_1'\xrightarrow{(\alpha,q)}\nu_1$, where $\pi_1=\pi_1'\circ (\alpha,q)$ and $q\in(0,1]$.
        Since $|\pi_1'|+|\pi_2\vert=k$, by applying the induction hypothesis we have $p\mu_1+(1-p)\mu_2\myrightarrow{\pi'} p\nu_1'+(1-p)\nu_2$ and $\vert\pi'\vert\le k$.
        By definition, $\nu_1'\xrightarrow{(\alpha,q)}\nu_1$ implies there exists some $P\in\Supp(\nu_1')$ such that $P\myrightarrow{\alpha}\rho$ and $\nu_1=\nu_1'+\nu_1'(P)q(\rho-\delta_P)$.
        Note that $p\nu_1'(P)+(1-p)\nu_2(P)\neq 0$ for $p\in(0,1)$.
        We can define $r:=\frac{\nu_1'(P)pq}{p\nu_1'(P)+(1-p)\nu_2(P)}$.
        Still considering the aforementioned transition $P\myrightarrow{\alpha}\rho$, we have
        \begin{equation}
            \begin{aligned}
            p\nu_1'+(1-p)\nu_2\xrightarrow{(\alpha,r)}\nu:=p\nu_1'+(1-p)\nu_2+[p\nu_1'+(1-p)\nu_2](P)r(\rho-\delta_P)=p\nu_1+(1-p)\nu_2.
        \end{aligned}
        \end{equation}
        Let $\pi:=\pi'\circ (\alpha,r)$.
        We have $p\mu_1+(1-p)\mu_2\myrightarrow{\pi} p\nu_1+(1-p)\nu_2$ and $\vert\pi\vert\le k+1$.
        \item 
        Prove by induction on $|\pi|$.
        In the base case where $\vert\pi\vert=0$, set $\nu_1=\mu$, $\nu_2=\mu$, and the result follows.
        When $\vert\pi\vert=k+1$, we can divide the transition sequence corresponding to $p\mu_1+(1-p)\mu_2\myrightarrow{\pi} \nu$ into $p\mu_1+(1-p)\mu_2\myrightarrow{\pi'} \nu'\myrightarrow{(\alpha,q)}\nu$, where $\pi=\pi'\circ(\alpha,q)$ and $q\in(0,1]$.
        Since $\vert\pi'\vert=k$, by the induction hypothesis, there exists $\nu_1',\nu_2'\in\Distr(\PRCCS)$ such that $\mu_1\myrightarrow{\pi_1'}\nu_1'$, $\mu_2\myrightarrow{\pi_2'}\nu_2'$ and $\nu'=p\nu_1'+(1-p)\nu_2'$.
        Since $ \nu'\myrightarrow{(\alpha,q)}\nu$, there exists a process $P\in\Supp(\nu')$ such that $P\myrightarrow{\alpha}\rho$ and $\nu=\nu'+\nu'(P)q(\rho-\delta_P)$. 
        For $i=1,2$, let $\nu_i:=\nu_i'+\nu_i'(P)q(\rho-\delta_P)$.
        If $P\in\Supp(\nu_i')$, then $\nu_i'\xrightarrow{(\alpha,q)}\nu_i$; otherwise, $\nu_i'\myrightarrow{\varepsilon}\nu_i'=\nu_i$ holds trivially.
        These two cases can be unified as $\nu_i'\myrightarrow{\pi''_i}\nu_i$, where $\vert\pi''_i\vert\le 1$.
        Let $\pi_i:=\pi_i'\circ\pi_i''$.
        Then $\vert\pi_i\vert\le k+1$ and $\mu_i\myrightarrow{\pi_i}\nu_i$.
        We are done by verifying 
        \begin{equation}
            \begin{aligned}
            \nu&=\nu'+\nu'(P)q(\rho-\delta_P)\\
            &=[p\nu_1'+(1-p)\nu_2']+[p\nu_1'+(1-p)\nu_2'](P)q(\rho-\delta_P)\\
            &=p[\nu_1'+\nu_1'(P)q(\rho-\delta_P)]+(1-p)[\nu_2'+\nu_2'(P)q(\rho-\delta_P)]\\
            &=p\nu_1+(1-p)\nu_2.
        \end{aligned}
        \end{equation}
    \end{enumerate}
    \qed
\end{proof}
\section{Proofs for Section~\ref{sec:testing_over_distributions}}
\label{appendix:framework}

\linOfTestOut*

\begin{proof}
    For any $q\in \mathcal{O}^{p\mu_1+(1-p)\mu_2}_\varphi$, by definition, there exists a distribution $\nu$ such that $p\mu_1+(1-p)\mu_2\rightsquigarrow\nu$ and $\nu(\varphi)=q$.
    According to Corollary~\ref{cor:linearity_of_probabilistic_transition}~(\ref{item:linear_decomposition}), there exist two distributions $\nu_1,\nu_2\in \Distr(\PRCCS)$ such that $\mu_1\rightsquigarrow\nu_1$, $\mu_2\rightsquigarrow\nu_2$ and $\nu=p\nu_1+(1-p)\nu_2$. 
    Then $q=\nu(\varphi)=p\nu_1(\varphi)+(1-p)\nu_2(\varphi)\in p\mathcal{O}^{\mu_1}_{\varphi}+(1-p)\mathcal{O}^{\mu_2}_{\varphi}$.
    By the arbitrariness of $q$, we have $\mathcal{O}^{p\mu_1+(1-p)\mu_2}_\varphi\subseteq p\mathcal{O}^{\mu_1}_{\varphi}+(1-p)\mathcal{O}^{\mu_2}_{\varphi}$.
    
    For the other direction, consider any distributions $\nu_1$ and $\nu_2$ to which $\mu_1\rightsquigarrow\nu_1$ and $\mu_2\rightsquigarrow\nu_2$ hold.
    Again by Corollary~\ref{cor:linearity_of_probabilistic_transition}~(\ref{item:linear_combination}), we have $p\mu_1+(1-p)\mu_2\rightsquigarrow \nu:=p\nu_1+(1-p)\nu_2$.
    Hence, $p\nu_1(\varphi)+(1-p)\nu_2(\varphi)\in \mathcal{O}^{p\mu_1+(1-p)\mu_2}_\varphi$, which implies
    $p\mathcal{O}^{\mu_1}_{\varphi}+(1-p)\mathcal{O}^{\mu_2}_{\varphi}\subseteq\mathcal{O}^{p\mu_1+(1-p)\mu_2}_\varphi$.
    This finishes the proof of Equation~\ref{item:linearity_of_testing_outcome_1}.
    Since the supremum and infimum operations are commutative with the Minkowski sum and non-negative scalar product, Equation~\ref{item:linearity_of_testing_outcome_2} follows immediately.
    \qed
\end{proof}

\bdByDegenWit*

\begin{proof}
    Assume that $\mu\rightsquigarrow\nu$ is witnessed by $\pi$.
    We prove this lemma by induction on $\vert\pi\vert$.
    In the base case $\vert\pi\vert=0$, the conclusion holds by setting $\nu_1=\nu_2=\mu$.
    Assume that $\vert\pi\vert=k+1$.
    We can divide the transition sequence into $\mu\xrightarrow{(\tau,p)}\mu'\myrightarrow{\pi'}\nu$, where $\pi=(\tau,p)\circ\pi'$ and $p\in(0,1]$.
    By definition, there exists a process $P\in\Supp(\mu)$ such that $P\myrightarrow{\tau}\rho$ and $\mu'=\mu+\mu(P)p(\rho-\delta_P)$.
    Define $\mu_1':=\mu,\;\mu_2'=\mu+\mu(P)(\rho-\delta_P)$.        
    Note that $\mu'=(1-p)\mu_1'+p\mu_2'$.
    By Lemma~\ref{lem:linearity_of_probabilistic_transition_strengthened}, 
    there exist $\nu_1', \nu_2'$ such that
    $\mu'_1\myrightarrow{\pi_1''}\nu_1'$, $\mu'_2\myrightarrow{\pi_2''}\nu_2'$ and $\nu=(1-p)\nu_1'+p\nu_2'$,
    where $\vert\pi_1''\vert,\vert\pi_2''\vert\le k$. 
    Assume, without loss of generality, that $\nu_1'(\varphi)\leq \nu(\varphi)\leq \nu_{2}'(\varphi)$.
    Since $\vert\pi_1''\vert,\vert\pi_2''\vert\le k$,
    by the induction hypothesis, there exist two distributions $\nu_1,\nu_2$ such that (i) $\mu'_1\myrightarrow{\pi_1'}\nu_1$, $\mu'_2\myrightarrow{\pi_2'}\nu_2$ with $\pi_1',\pi_2'$ degenerate, and (ii) $\nu_1(\varphi)\le\nu_1'(\varphi)\le\nu(\varphi)\le\nu_2'(\varphi)\le\nu_2(\varphi)$.
    Let $\pi_1:=\pi_1'$ and $\pi_2:=\tau\circ\pi_2'$, which are both degenerate.
    Then $\mu\myrightarrow{\pi_1}\nu_1$ and $\mu\myrightarrow{\pi_2}\nu_2$ hold, which finishes the proof.
    \qed
\end{proof}

\section{Proofs for Section \ref{sec:model_independent}}
\label{appendix:proof_model_independent}

\propStrongBisim*

\begin{proof}
    The proofs of the first three statements are all standard.
    We only prove the last statement.
    For any distributions $\mu_1, \mu_2\in\Distr(\PRCCS)$ such that $\mu_1\Strong^\dag \mu_2$ and $\mu_1\rightsquigarrow\nu_1$ with witness $\pi$, we prove by induction on $\vert\pi\vert$.
    The case where $\pi=\varepsilon$ is trivial.
    Assume that $\vert\pi\vert=k+1$ and $\pi=(\tau,p)\circ\pi'$.
    We can rewrite the transition sequence as $\mu_1\xrightarrow{(\tau,p)}\mu_1'\myrightarrow{\pi'}\nu_1$. 
    Assume that the first transition is induced by $P\myrightarrow{\tau}\rho_P$ for some $P\in \Supp(\mu_1)$.
    Then $\mu_1'=\mu_1(P)p\rho_P+(1-\mu_1(P)p)\overline{\rho_P}$, where $\overline{\rho_P}:=\frac{\mu_1-\mu_1(P)p\delta_P}{1-\mu_1(P)p}$.
    According to Lemma~\ref{lem:linearity_of_probabilistic_transition_strengthened}, one has that $\rho_P\myrightarrow{\pi_1'}\nu_P$, $\overline{\rho_P}\myrightarrow{\pi_2'}\overline{\nu_P}$ for some $\nu_P$ and $\overline{\nu_P}$ such that $\vert\pi_1'\vert\le k$, $\vert\pi_2'\vert\le k$, and $\nu_1=\mu_1(P)p\nu_P+(1-\mu_1(P)p)\overline{\nu_P}$. 
    Note that $\mu_2([P])=\mu_1([P])\ge \mu_1(P)>0$.
    We decompose $\mu_2$ into 
    \begin{equation}
        \mu_2=\sum_{Q\in[P]}\frac{\mu_1(P)p}{\mu_1([P])}\mu_2(Q)\delta_Q+(1-\mu_1(P)p)\overline{\mu_2},
    \end{equation}
    where $\overline{\mu_2}$ is a distribution.
    Note that for all $Q\in[P]$, $\delta_Q\myrightarrow{\tau}\rho_Q\Strong^\dag\rho_P$ holds by definition. 
    By the induction hypothesis, we have $\rho_Q\rightsquigarrow\nu_Q\Strong^\dag\nu_P$ for some $\nu_Q$.
    Similarly, it can be verified that $\overline{\rho_P}\Strong^\dag\overline{\mu_2}$, which implies $\overline{\mu_2}\rightsquigarrow\overline{\nu_2}\Strong^\dag\overline{\nu_P}$ for some $\overline{\nu_2}$.
    Therefore, by Corollary~\ref{cor:linearity_of_probabilistic_transition}, 
    \begin{equation}
        \begin{aligned}
            \mu_2\rightsquigarrow\nu_2&:=\sum_{Q\in[P]}\frac{\mu_1(P)p}{\mu_1([P])}\mu_2(Q)\nu_Q+(1-\mu_1(P)p)\overline{\nu_2}\\
            &\Strong^\dag\, \sum_{Q\in[P]}\frac{\mu_1(P)p}{\mu_1([P])}\mu_2(Q)\nu_P+(1-\mu_1(P)p)\overline{\nu_P}\\
            &=\nu_1,
        \end{aligned}
    \end{equation}
    where $\Strong^\dag$ holds since the lifting must be closed under the convex combination.
    \qed
\end{proof}

\seqTrans*

\begin{proof}
    For any transition sequence $\mu\rightsquigarrow\nu$, we partition the finite set $\Supp(\nu)$ into $\Supp(\nu)\cap\psi_\L=\left\{A_1,\dots,A_n\right\}$ and $\Supp(\nu)\,\cap\,\overline{\psi_\L}=\left\{B_1,\dots,B_m\right\}$.
    Let $P\in\PRCCS$ be a process, define $P^L:=(L)(P\mid Q)$.
    The transition sequence $\mu\rightsquigarrow\nu$ can induce a corresponding one $\mu^L\rightsquigarrow\nu^L$, where
    $\Supp(\nu^L)=\{A_1^L,\dots,A_n^L\}\uplus\{B_1^L,\dots,B_m^L\}$ and $\nu^L(P^L)=\nu(P)$ for any $P\in\Supp(\nu)$.
    We also have the following observations.
    \begin{enumerate}
        \item For $1\le i\le m$, since $b\notin L$, we have $B_i^L\myrightarrow{b}$, which implies $B_i^L\in\psi_\L$ and $\charMay_\L(B_i^L)=1$.
        \item For $1\le i\le n$, $A_i\in\psi_\L$.
        There exists an external action $l_i\in L\cup \overline{L}$ and a distribution $\rho_i$ such that $A_i\myrightarrow{l_i}\rho_i$.
        Since $Q\myrightarrow{\overline{l_i}}\delta_{\mathbf0}$, we have $\delta_{A_i^L}\myrightarrow{\tau}(L)(\rho_i\mid \delta_{\mathbf 0})$ and $\charMay_\L((L)(\rho_i\mid \delta_{\mathbf 0}))=0$.
    \end{enumerate}
    Note that $\nu^L=\sum_{i=1}^n\nu(A_i)\delta_{A_i^L}+\sum_{i=1}^m\nu(B_i)\delta_{B_i^L}$.
    By Corollary~\ref{cor:linearity_of_probabilistic_transition}~(\ref{item:linear_combination}), we have 
    \begin{equation}
        \mu^L\rightsquigarrow\nu^L\rightsquigarrow\nu':=\sum_{i=1}^n\nu(A_i)(L)(\rho_i\mid \delta_{\mathbf 0})+\sum_{i=1}^m\nu(B_i)\delta_{R_i^L}.
    \end{equation}
    By Lemma~\ref{lem:linearity_of_testing_outcome}, we have 
    \begin{equation}
        \begin{aligned}
            \charFS_\L(\mu^L)
            \le\sum_{i=1}^n\nu(A_i)\charMay_\L((L)(\rho_i\mid \delta_{\mathbf 0}))+\sum_{i=1}^m\nu(B_i)\charMay_\L(B_i^L)
            =\sum_{i=1}^m\nu(B_i)
            =1-\nu(\psi_\L).
        \end{aligned}
    \end{equation}
    Hence, $\charMay_\L(\mu)=\sup\left\{\nu(\psi_\L):\mu\rightsquigarrow\nu\right\}\le 1-\charFS_\L(\mu^L)$.
    
    For the other direction, consider any transition sequence $\mu^L\rightsquigarrow\nu$ witnessed by $\pi$.
    By Corollary~\ref{cor:calculated_by_degenerate_witness} (\ref{item:calculated_by_degenerate_witness_2}), we assume that $\pi\in \tau^{*}$.
    According to the evolution of $Q$, all the processes in the support of any distribution in $\mu^L\rightsquigarrow\nu$ can be divided into two types: 
    \begin{equation}
        \mathcal{X}_1:=\{P^L:P\in\PRCCS\},\quad\text{and}\quad \mathcal{X}_2:=\{(L)(P\mid \mathbf{0}):P\in\PRCCS\}.
    \end{equation}
    It can be calculated similarly that $\sup\PEquip{{P}}=0$ for $P\in\mathcal{X}_2\cap\Supp(\nu)$ and $\sup\PEquip{{P}}=1$ for $P\in\mathcal{X}_1\cap\Supp(\nu)$.
    By Lemma~\ref{lem:linearity_of_testing_outcome}, $\charMay_\L(\nu)=1-\nu(\mathcal{X}_2)$.
    Define a new predicate $\psi_L:=\left\{P^L:P\myrightarrow{\ell}\text{ for some }\ell\in\L\right\}$.
    Note that processes only evolve from type $\mathcal{X}_1$ to type $\mathcal{X}_2$, but not vice versa. 
    Therefore, the probability of distributions satisfying $\mathcal{X}_1$ is non-increasing along the whole transition sequence. 
    Now we construct a new transition sequence starting from $\mu^L$ such that every distribution is defined over $\mathcal{X}_1$. 
    Let $\nu_0:=\nu$ and $\pi_0:=\pi$.
    Repeat the following procedure for $j=0,1,\dots$:
    \begin{enumerate}
        \item Divide $\mu^L\myrightarrow{\pi_j}\nu_j$ into $\mu^L\myrightarrow{\pi_j^1}\mu'\myrightarrow{\tau}\mu''\myrightarrow{\pi_j^2}\nu_j$ such that 
        (i) $\pi_j=\pi_j^1\circ\tau\circ\pi_j^2$,
        (ii) $\mu''(\mathcal{X}_1)=\nu_j(\mathcal{X}_1)$, and (iii) $\mu'(\mathcal{X}_1)>\mu''(\mathcal{X}_1)$. 
        If no such division exists, break from this procedure.
        Assume that $\mu'\myrightarrow{\tau}\mu''$ is caused by $P^L\myrightarrow{\tau}\rho$ for some process $P^L\in\psi_L$ and distribution $\rho$ such that $\rho(\mathcal{X}_2)=1$.
        This indicates that this $\tau$ action arises from the interaction between $P$ and $Q$.
        \item Since $\mu'=(\mu'-\mu'(P^L)\delta_{P^L})+\mu'(P^L)\delta_{P^L}$ and $\mu'\myrightarrow{\tau}\mu''$, we have $\mu''=(\mu'-\mu'(P^L)\delta_{P^L})+\mu'(P^L)\rho$.
        By Corollary~\ref{cor:linearity_of_probabilistic_transition}~(\ref{item:linear_decomposition}), there exists $\mu_1,\mu_2$ such that $\frac{\mu'-\mu'(P^L)\delta_{P^L}}{1-\mu'(P^L)}\rightsquigarrow\mu_1$, $\rho\rightsquigarrow\mu_2$, and $\nu_j=(1-\mu'(P^L))\mu_1+\mu'(P^L)\mu_2$. 
        Replacing $\delta_{P^L}\rightsquigarrow\rho\rightsquigarrow\mu_2$ by $\delta_{P^L}\rightsquigarrow\delta_{P^L}$ and again by this lemma, we obtain a new transition sequence $\mu'\myrightarrow{\pi^3_j} \nu_{j+1}:=(1-\mu'(P^L))\mu_1+\mu'(P^L)\delta_{P^L}$ for some word $\pi_j^3$.
        Define $\pi_{j+1}:=\pi_j^1\circ\pi_j^3$, and we obtain a new transition sequence $\mu^L\myrightarrow{\pi_{j+1}}\nu_{j+1}$.
        \item We have maintained $(\nu_{j+1}-\nu_j)(\psi_L\cup\mathcal{X}_2)=\mu'(P^L)(\delta_{P^L}-\mu_2)(\psi_L\cup\mathcal{X}_2)=\mu'(P^L)(1-1)=0.$
        Besides, by $\mu''(\mathcal{X}_1)=\nu_j(\mathcal{X}_1)$ and monotonicity, we have
        \begin{equation}
            \mu_1(\mathcal{X}_1)=\frac{\mu'-\mu'(P^L)\delta_{P^L}}{1-\mu'(P^L)}(\mathcal{X}_1)=\frac{\mu'(\mathcal{X}_1)-\mu'(P^L)}{1-\mu'(P^L)}. 
        \end{equation}
        Therefore, we can derive that
        $\nu_{j+1}(\mathcal{X}_1)=(1-\mu'(P^L))\mu_1(\mathcal{X}_1)+\mu'(P^L)\delta_{P^L}(\mathcal{X}_1)=\mu'(\mathcal{X}_1).$
        Although $\pi_{j+1}$ may be longer than $\pi_j$, the action used to split $\mu^L\myrightarrow{\pi_{j+1}}\nu_{j+1}$ during next iteration  can only appear within segment $\mu^L\myrightarrow{\pi_j^1}\mu'$.
    \end{enumerate}
    This process can be repeated at most a finite number of times; let us assume it is repeated $k$ times at most.
    Eventually, we will obtain a sequence $\mu^L\myrightarrow{\pi_k}\nu_k$ such that  i) $\nu_k(\mathcal{X}_1)=\mu^L(\mathcal{X}_1)=1$ and ii) 
    \begin{equation}
        \nu_k(\psi_L)=\nu_k(\psi_L\cup\mathcal{X}_2)=\nu_0(\psi_L\cup\mathcal{X}_2)\ge\nu(\mathcal{X}_2)=1-\charMay_\L(\nu).
    \end{equation}
    By monotonicity, $\mu^L\myrightarrow{\pi_k}\nu_k$ takes the form of $\mu^L=:\mu^L_0\myrightarrow{\tau}\mu_1^L\myrightarrow{\tau}\cdots\myrightarrow{\tau}\mu_n^L:=\nu_k$.
    Thus, it induces a transition sequence $\mu=:\mu_0\myrightarrow{\tau}\mu_1\myrightarrow{\tau}\cdots\myrightarrow{\tau}\mu_n$.
    Since $P\in\psi_\L$ if and only if $P^L\in\psi_L$, we have
    $\mu_n(\psi_\L)=\nu_k(\psi_L)\ge 1-\charMay_\L(\nu)$, which implies $1-\charMay_\L(\nu)\le \charMay_\L(\mu)$.
    Finally, taking all $\nu$ such that $\mu^L\rightsquigarrow\nu$ with degenrate witness into consideration, we have $1-\charMay_\L(\mu)\le\inf\left\{\sup \PEquip{\nu}:\mu^L\rightsquigarrow\nu\right\}=\charFS_\L(\mu^L)$.
    Finally, we have $\charMay_\L(\mu)+\charFS_\L(\mu^L)=1$.
    \qed
\end{proof}

\showMem*

\begin{proof}
Proposition~\ref{prop:property_strong_bisimulation} shows that i) $\Strong^\dag$ is $\mathcal{D}$-extensional; ii) if $\mu_1\Strong^\dag\mu_2$ and $\mu_1\rightsquigarrow\nu_1$, then $\mu_2\rightsquigarrow\nu_2\Strong^\dag\nu_1$ for some $\nu_2$ and $\nu_1(\psi_\L)=\nu_2(\psi_L)$.
Hence, $\Strong^\dag$ is probabilistically equipollent, which implies $\Strong^\dag\,\subseteq\,\PXDiamond$.

It is trivial that $\Strong^\dag\R^{\mathcal{D}}\Strong^\dag$ is $\mathcal{D}$-extensional.
Moreover, if $\R^{\mathcal{D}}$ is probabilistically equipollent, then $\Strong^\dag \R^{\mathcal{D}}\Strong^\dag$ is also probabilistically equipollent.
Since $\mu\,\Strong^\dag\,(\mu\mid\mathbf{0})$ holds for any $\mu\in \Distr(\mu)$, the inclusions $\R\,\subseteq\;\Strong^\dag
\R^{\mathcal{D}}\Strong^\dag\;\subseteq\;\PXDiamond$ hold.

As for statement (\ref{item:show_membership_3}), note that $\R$ is $\mathcal D$-extensional.
If $\R$ is probabilistically equipollent, then $\R \subseteq \PXDiamond$ holds.
Let $\vartheta=\delta_X$ and the result follows.
\qed
\end{proof}

\probMayFsExt*

\begin{proof}
We first prove two claims.
\begin{claim}\label{lem:strong_apply_to_test}
    Suppose $\mu_1,\mu_2\in\Distr(\PRCCS^\omega)$ are two distributions such that $\mu_1\Strong^\dag\mu_2$.
    Then 
    \begin{enumerate}
        \item 
        $\charMay_\omega(\mu_1)=\charMay_\omega(\mu_2)$, and
        \item
        $\charFS_\omega(\mu_1)=\charFS_\omega(\mu_2)$.
    \end{enumerate}
\end{claim}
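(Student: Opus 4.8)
The plan is to reduce both equalities to a single saturation observation about the success predicate $\psi_\omega$, and then to transport internal transition sequences across the two distributions using Proposition~\ref{prop:property_strong_bisimulation}~(4). First I would record the key fact that $\Strong^\dag$-related distributions assign equal probability to $\psi_\omega$. Since $\psi_\omega$ collects exactly those processes able to fire $\omega$, and Definition~\ref{def:p_strong} forces $P\Strong Q$ to preserve the immediate $\omega$-capability (any $\omega$-move of $P$ is matched by one of $Q$ and vice versa), the set $\psi_\omega$ is a union of $\Strong$-equivalence classes. Consequently, whenever $\nu_1\Strong^\dag\nu_2$, Definition~\ref{def:lifting} gives $\nu_1(\mathcal C)=\nu_2(\mathcal C)$ for every class $\mathcal C$, and summing over the classes contained in $\psi_\omega$ yields $\nu_1(\psi_\omega)=\nu_2(\psi_\omega)$.

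For the may characteristic I would argue by a symmetric inequality. Fix any $\nu_1$ with $\mu_1\rightsquigarrow\nu_1$. By Proposition~\ref{prop:property_strong_bisimulation}~(4) there is a $\nu_2$ with $\mu_2\rightsquigarrow\nu_2$ and $\nu_2\Strong^\dag\nu_1$, so the saturation fact gives $\nu_1(\psi_\omega)=\nu_2(\psi_\omega)\le\charMay_\omega(\mu_2)$. Taking the supremum over all such $\nu_1$ yields $\charMay_\omega(\mu_1)\le\charMay_\omega(\mu_2)$, and interchanging the roles of $\mu_1$ and $\mu_2$ (using the symmetry of $\Strong^\dag$) gives the reverse inequality, hence part~(1).

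For the fair characteristic I would reuse part~(1) to dispose of the inner supremum. Writing $\charFS_\omega(\mu_i)=\inf\{\charMay_\omega(\nu):\mu_i\rightsquigarrow\nu\}$, fix any $\nu_1$ with $\mu_1\rightsquigarrow\nu_1$ and obtain, again via Proposition~\ref{prop:property_strong_bisimulation}~(4), a matching $\nu_2$ with $\mu_2\rightsquigarrow\nu_2$ and $\nu_2\Strong^\dag\nu_1$. Part~(1) applied to the pair $(\nu_1,\nu_2)$ gives $\charMay_\omega(\nu_1)=\charMay_\omega(\nu_2)\ge\charFS_\omega(\mu_2)$; taking the infimum over $\nu_1$ yields $\charFS_\omega(\mu_1)\ge\charFS_\omega(\mu_2)$, and symmetry closes the argument. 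The only genuinely delicate point is the nested sup--inf structure of $\charFS_\omega$: transporting a single descendant probability is not by itself enough, and it is precisely the fact that Proposition~\ref{prop:property_strong_bisimulation}~(4) matches descendants \emph{up to $\Strong^\dag$} (so that part~(1) applies to each matched pair and preserves the inner supremum) that makes the transport go through. Everything else is routine supremum/infimum bookkeeping.
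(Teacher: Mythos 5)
Your proposal is correct and follows essentially the same route as the paper's own proof: transporting internal transition sequences via Proposition~\ref{prop:property_strong_bisimulation}~(4), observing that $\Strong^\dag$-related distributions agree on $\psi_\omega$ (since $\psi_\omega$ is a union of $\Strong$-classes), and then handling the sup for part~(1) and the nested inf--sup for part~(2) by symmetry. Your write-up is in fact somewhat more explicit than the paper's terse argument, particularly in spelling out why the saturation of $\psi_\omega$ under $\Strong$ and the class-wise agreement from Definition~\ref{def:lifting} combine to give $\nu_1(\psi_\omega)=\nu_2(\psi_\omega)$.
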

\begin{proof}
    By Proposition~\ref{prop:property_strong_bisimulation}, if $\mu_1\rightsquigarrow\nu_1$ for some $\nu_1$, then $\mu_2\rightsquigarrow\nu_2\Strong^\dag \nu_1$ for some $\nu_2$.
    By definition, $\nu_1(\psi_\omega)=\nu_2(\psi_\omega)$.
    Therefore, $\PTest{\mu_1}\subseteq\PTest{\mu_2}$.
    By symmetry, $\PTest{\mu_1}=\PTest{\mu_2}$ and $\charMay_\omega(\mu_1)=\charMay_\omega(\mu_2)$.
    Now we also have $\charMay_\omega(\nu_1)=\charMay_\omega(\nu_2)$
    Therefore, $\charFS_\omega(\mu_1)=\charFS_\omega(\mu_2)$ holds.
    \qed
\end{proof}

\begin{claim}\label{lem:boundary_commutativity}
    Let $\mu_1,\mu_2\in\PRCCS^\omega$ be two distributions and $L\subset \Chan$ be a finite set of channels. 
    The following equalities hold.
    \begin{enumerate}
        \item 
        $\charMay((L)\mu_1\mid\mu_2)=\charMay(\mu_1\mid(L)\mu_2)$, and
        \item 
        $\charFS((L)\mu_1\mid\mu_2)=\charFS(\mu_1\mid(L)\mu_2)$.
    \end{enumerate}
\end{claim}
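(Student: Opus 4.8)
The plan is to prove both equalities at once by exhibiting an isomorphism between the internal transition systems underlying the two configurations that also preserves the success predicate $\psi_\omega$. Since $\charMay_\omega$ and $\charFS_\omega$ are defined purely through the internal transition sequences $\rightsquigarrow$ and the satisfaction probabilities of $\psi_\omega$ (Corollary~\ref{cor:calculated_by_degenerate_witness}), such an isomorphism forces the two characteristics to coincide.

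Concretely, I would define $\Phi\big((L)A \mid B\big) := A \mid (L)B$ on processes and lift it linearly to distributions by $\Phi_*(\nu) := \sum_{P} \nu(P)\,\delta_{\Phi(P)}$. First I would observe that every process reachable from $(L)\mu_1 \mid \mu_2$ by internal transitions has the form $(L)A \mid B$ --- the localization scope never migrates, is never discharged, and never disappears under $\tau$-moves --- and symmetrically that $\mu_1 \mid (L)\mu_2$ reaches exactly the processes $A \mid (L)B$, so that $\Phi$ is a bijection between the two reachable sets.

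The heart of the argument is a step-for-step correspondence of transitions: $(L)A \mid B \myrightarrow{\tau} \rho$ if and only if $\Phi\big((L)A \mid B\big) \myrightarrow{\tau} \Phi_*(\rho)$. I would prove this by case analysis on the derivation of the $\tau$-move. A silent move of the left or the right component crosses either localization unchanged because $\tau \notin L$. The only delicate case is a synchronization on some label $\ell$, and here the key point is that communication on a channel $c$ is permitted exactly when $c \notin L$ in both terms: localizing either side suppresses synchronization on $L$-channels equally, so the two configurations offer precisely the same internal synchronizations, with matching residuals under $\Phi$. I would then lift this single-step correspondence to the combined transitions $\xrightarrow{(\tau,p)}$ of Figure~\ref{fig:pLTS2}, using linearity of $\Phi_*$ and the identity $\Phi_*(\mu)(\Phi(P)) = \mu(P)$, and hence to entire sequences $\rightsquigarrow$ by induction on length. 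Finally, because $\omega \notin L$, both $(L)A \mid B$ and $A \mid (L)B$ can fire $\omega$ exactly when $A$ or $B$ can, so $\nu(\psi_\omega) = \Phi_*(\nu)(\psi_\omega)$ for every reachable $\nu$. Putting these together yields $\mathcal{O}^{(L)\mu_1 \mid \mu_2}_{\psi_\omega} = \mathcal{O}^{\mu_1 \mid (L)\mu_2}_{\psi_\omega}$ and, more strongly, identical reachability-plus-satisfaction structures, from which both $\charMay_\omega$ and $\charFS_\omega$ agree.

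I expect the main obstacle to be precisely why this is not a one-line consequence of strong bisimilarity: the two terms $(L)\mu_1 \mid \mu_2$ and $\mu_1 \mid (L)\mu_2$ are in general not $\Strong$-equivalent, since they differ in their external $L$-behaviour (the un-localized component, which differs between the two terms, may still emit on $L$), so Proposition~\ref{prop:property_strong_bisimulation} cannot be invoked. The proof must instead rest on the fact that $\charMay_\omega$ and $\charFS_\omega$ are blind to external actions, and the care lies in the synchronization case, where one must confirm that the set of available $\tau$-transitions --- not merely the set of reachable processes --- is genuinely the same on both sides.
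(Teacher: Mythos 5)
Your proposal is correct and is essentially the paper's own proof: the paper defines exactly your map $\Phi$ (written $P := (L)P_1 \mid P_2 \;\mapsto\; P^* := P_1 \mid (L)P_2$), proves by induction that internal transition sequences stay inside this syntactic shape and commute with the map, and uses $\omega \notin \Chan$ to get preservation of $\psi_\omega$. The only difference is bookkeeping: the paper establishes just a one-directional correspondence (degenerate witness on the source side, arbitrary internal sequence on the target side) and closes the argument with Corollary~\ref{cor:calculated_by_degenerate_witness} plus symmetry, whereas you build the full two-sided isomorphism of the underlying transition systems; both routes are sound.
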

\begin{proof}
    Define a special set of processes $\mathcal{X}=\{(L)P_1\mid P_2\}\subseteq\PRCCS^\omega$.
    For $P:= (L)P_1 \mid P_2\in \mathcal{X}$, define $P^*:= P_1 \mid (L)P_2$.
    Moreover, if $P\in\psi_\omega$, then $P^*\in\psi_\omega$ since $\omega\not\in\Chan$.
    For $\mu\in\Distr(\mathcal{X})$, we can also define $\mu^*:=\{(P^*,\mu(P)):P\in\Supp(\mu)\}$ such that $\mu(\psi_\omega)=\mu^*(\psi_\omega)$.
    By induction, we can prove that if $\mu\in\Distr(\mathcal{X})$ and $\mu \xrightarrow{\tau^{*}} \nu$ for some distribution $\nu$, then $\nu\in\Distr(\mathcal{X})$ and $\mu^*\rightsquigarrow\nu^*$.
    By Corollary~\ref{cor:calculated_by_degenerate_witness} we have 
    \begin{equation}
        \charMay_\omega(\mu)
        =\sup\left\{\nu(\psi_\omega):\mu\xrightarrow{\tau^{*}}\nu\right\}\le\sup\left\{\nu^*(\psi_\omega):\mu^*\rightsquigarrow \nu^*\right\}\le\charMay_\omega(\mu^*).
    \end{equation}
    By symmetry, we obtain that $\charMay_\omega(\mu)= \charMay_\omega(\mu^*)$.
    Again by Corollary~\ref{cor:calculated_by_degenerate_witness}, we have 
    \begin{equation}
        \charFS_\omega(\mu)\ge\inf \left\{\sup \PTest{\nu^*}:\mu^*\rightsquigarrow \nu^*\right\}\ge \charFS_\omega(\mu^*).
    \end{equation}
    By symmetry, $\charFS_\omega(\mu)=\charFS_\omega(\mu^*)$.
    Since $\mu_1\mid(L)\mu_2=((L)\mu_1\mid\mu_2)^*$, the both results follow.
    \qed
\end{proof}

    Now back to the proof of Lemma~\ref{lem:probabilistic_may_fs_extensional}.
    Assume that $\mu_1,\mu_2\in \Distr(\PRCCS)$, $\nu\in\mathcal{D}$ and $L\in \Chan$ is finite.
    We can obtain that $(\mu_i\mid \nu)\mid o\Strong^\dag \mu_i\mid (\nu\mid o)$ by applying Proposition~\ref{prop:property_strong_bisimulation}.
    Since $\mathcal{D}$ is closed under the parallel composition and localization operations, both $\nu\mid o$ and $(L)o$ are also contained in $\mathcal{D}^\omega$.
    If $\mu_1\PXMay \mu_2$, by Lemma~\ref{lem:strong_apply_to_test} and Lemma~\ref{lem:boundary_commutativity}, the following equalities hold:
    \begin{enumerate}
        \item 
        $\charMay_\omega((\mu_1\mid\nu)\mid o)
        =\charMay_\omega(\mu_1\mid(\nu\mid o))
        =\charMay_\omega(\mu_2\mid(\nu\mid o))
        =\charMay_\omega((\mu_2\mid\nu)\mid o),$
        \item 
        $\charMay_\omega((L)\mu_1\mid o)
        =\charMay_\omega(\mu_1\mid (L)o)
        =\charMay_\omega(\mu_1\mid (L)o)
        =\charMay_\omega((L)\mu_2\mid o).$
    \end{enumerate}
    Therefore, $(\mu_1\mid\nu)\PXMay (\mu_2\mid\nu)$ and $(L)\mu_1\PXMay(L)\mu_2$, which implies that the equivalence $\PXMay$ is $\mathcal{D}$-extensional.
    Similarly, the equivalence $\PXFS$ is also $\mathcal{D}$-extensional.
    \qed
\end{proof}

\extEqip*

\begin{proof}
    For any $\nu$ such that $\mu\mid o_L\myrightarrow{\pi}\nu$ and $\pi\in\tau^{*}$, the process in $\Supp(\nu)$ can be divided into $\mathcal{X}_1:=\{P\mid O_L:P\in\PRCCS\}$ and $\mathcal{X}_2:=\left\{P\mid \omega:P\in \PRCCS\right\}$.
    Since $\mathcal{X}_2\subseteq\psi_\omega$ and $\mathcal{X}_1\subseteq\overline{\psi_\omega}$, we have $\nu(\psi_\omega)=\nu(\mathcal{X}_2)$.
    Let $\psi_L=\left\{P\mid O_L:P\in\psi_\L\right\}$.
    Note that the probability of satisfying $\mathcal{X}_1$ is non-increasing along the whole transition sequence.
    Repeat the procudure in the proof of Lemma~\ref{lem:sequence_transform}, we can obtain a transition sequence $\mu\mid o_L\rightsquigarrow \nu_k$ in $k$ steps for some $k$, such that i) all distributions are of type $\mathcal{X}_1$, and ii) $\nu_k(\psi_L\cup\mathcal{X}_2)=\nu(\psi_L\cup\mathcal{X}_2)$.
    Since $\nu_k(\mathcal{X}_2)=0$, from ii), we have 
    \begin{equation}
        \nu_k(\psi_L)=\nu_k(\psi_L\cup\mathcal{X}_2)=\nu(\psi_L\cup\mathcal{X}_2)\ge\nu(\mathcal{X}_2)=\nu(\psi_\omega).
    \end{equation}
    Furthermore, $\mu\mid o_L\rightsquigarrow\nu_k$ can induce a corresponding transition sequence $\mu\rightsquigarrow\nu'$ such that $\nu_k=\nu'\mid o_L$.
    Since $P\in\psi_\L\iff P\mid O_L\in\psi_L$, we have
    $\nu'(\psi_\L)=\nu_k(\psi_L)\ge\nu(\psi_\omega)$.
    Therefore, by Corollary~\ref{cor:calculated_by_degenerate_witness}, we have 
    \begin{equation}
        \charMay_\L(\mu)=\sup\left\{\nu'(\psi_\L):\mu\rightsquigarrow\nu'\right\}\ge \sup\left\{\nu(\psi_\omega):\mu\mid o_L\myrightarrow{\tau^{*}}\nu\right\}=\charMay_\omega(\mu\mid o_L).
    \end{equation}
    
    As for the other direction, any transition sequence in the form of $\mu\rightsquigarrow\nu$ can induce a corresponding one $\mu\mid o_L\rightsquigarrow\nu\mid o_L$ such that $(\nu\mid o_L)(\psi_L)=\nu(\psi_\L)$.
    \begin{enumerate}
        \item 
        For any process $P\in\psi_\L\cap\Supp(\nu)$, we have $P\myrightarrow{\ell}\rho_P$ for some $\rho_P\in\Distr(\PRCCS)$ and $\ell\in L\cup\overline{L}$.
        Thus, $\delta_{P\mid O_L}\myrightarrow{\tau}\rho_P\mid\delta_\omega$ and $(\rho_P\mid\delta_\omega)(\psi_\omega)=1$.
        \item 
        If $P\in\overline{\psi_L}\cap\Supp(\nu)$, then $\nu(\psi_\omega)=0$.
    \end{enumerate}
    Since $\nu\mid o_L=\sum_{P\in\Supp(\nu)}\nu(P)\delta_{P\mid O_L}$, by Corollary~\ref{cor:linearity_of_probabilistic_transition}, we have
    \begin{equation}
        \mu\mid o_L\rightsquigarrow\nu\mid o_L\rightsquigarrow \nu':=\sum_{P\in\psi_\L\cap\Supp(\nu)}\nu(P)(\rho_P\mid \delta_\omega)+\sum_{P\in\overline{\psi_\L}\cap\Supp(\nu)}\nu(P)\delta_{P\mid O_L}.
    \end{equation}
    Clearly, $\nu'(\psi_\omega)=\sum_{P\in\psi_\L\cap\Supp(\nu)}\nu(P)=\nu(\psi_\L)$.
    Therefore, 
    \begin{equation}
        \charMay_\omega\left({\mu\mid o_L}\right)=\sup\left\{\nu'(\psi_\omega):\mu\mid o_L\rightsquigarrow \nu'\right\}\ge \sup\left\{\nu(\psi_\L):\mu\rightsquigarrow\nu\right\}=\charMay_\L(\mu),
    \end{equation}
    which implies $\charMay_\L(\mu)=\charMay_\omega\left({\mu\mid o_L}\right)$.
    
    For the second equality, observe that
    \begin{enumerate}
        \item 
        It is trivial that $\sup\PTest{P}=1$ for any $P\in\mathcal{X}_2$.
        \item 
        If $P\myrightarrow{\ell}\rho$, then $\delta_{P\mid O_L}\myrightarrow{\tau}\rho\mid \delta_\omega\in\Distr(\mathcal{X}_2)$.
        Therefore, $\sup\PTest{Q}=1$ for any $Q\in\psi_L$.
    \end{enumerate}
    For any $\nu$ such that $\mu\mid o_L\myrightarrow{\pi}\nu$ and $\pi\in\tau^{*}$, we can obtain a transition sequence $\mu\mid o_L\rightsquigarrow\nu_k$ for some $k$, such that i) all distributions are of type $\mathcal{X}_1$, and ii) $\sup\PTest{\nu_k}=\sup\PTest{\nu}$.
    The invariant ii) can be maintained because processes can only evolve from $\psi_L$ to distributions in $\Distr(\mathcal{X}_2)$, but the linear functional $\mu\mapsto\sup\PTest{\mu}$ is equal to 1 for both.
    Moreover, there exists a transition sequence $\mu\rightsquigarrow\nu'$ such that $\nu_k=\nu'\mid o_L$.
    We have proved that $\charMay_\L({\nu'})=\charMay_\omega({\nu'\mid o_L})$, which implies $\charMay_\omega(\nu)=\charMay_\omega({\nu_k})=\sup\PEquip{\nu'}\ge\inf\left\{\sup\PEquip{\nu'}:\mu\rightsquigarrow\nu'\right\}=\charFS_\L(\mu)$.
    Therefore, by Corollary~\ref{cor:calculated_by_degenerate_witness}, $\charFS_\omega(\mu\mid o_L)\ge\charFS_\L(\mu)$.

    The other direction is relatively straightforward.
    For any transition sequence $\mu\rightsquigarrow\nu$, consider the corresponding sequence $\mu\mid o_L\rightsquigarrow \nu\mid o_L$ induced by it.
    We have proved that 
    \begin{equation}
        \charMay_\L({\nu})=\charMay_\omega({\nu\mid o_L})\ge\charFS_\omega(\mu\mid o_L).
    \end{equation}
    Therefore, $\charMay_\L({\mu})\ge\charFS_\omega(\mu\mid o_L)$, which completes the proof.
    \qed
\end{proof}

\corProbMayFsEqip*

\begin{proof}
Note that the distribution $o_L$ defined in Lemma~\ref{lem:external_equipollent} is in $\Delta(\PCCS^\omega)\subseteq\mathcal{D}^\omega$.
\begin{enumerate}
    \item 
    For any distributions $\mu_1,\mu_2\in\Distr(\PRCCS)$ such that $\mu_1\PXMay\mu_2$, 
    we define $L:=\Chan(\mu_1)\cup\Chan(\mu_2)$.
    Since $L$ must be finite, by Lemma~\ref{lem:external_equipollent}, we have
    \begin{equation}
        \charMay_\L({\mu_1})=\charMay_\omega({\mu_1\mid o_L})=\charMay_\omega(\mu_2\mid o_L)=\charMay_\L({\mu_2}).
    \end{equation}
    Hence, the equivalence $\PXMay$ is probabilistically equipollent.
    \item 
    For any distributions $\mu_1,\mu_2\in\Distr(\PRCCS)$ such that $\mu_1\PXFS\mu_2$, we can prove in a similar manner that 
    \begin{equation}
        \charFS_\L({\mu_1})=\charFS_\omega({\mu_1\mid o_L})=\charFS_\omega(\mu_2\mid o_L)=\charFS_\L({\mu_2}).
    \end{equation}
    Therefore, the equivalence $\PXFS$ is probabilistically strongly equipollent.
\end{enumerate}
By Lemma~\ref{lem:probabilistic_may_fs_extensional}, we have $\PXMay\,\subseteq\,\PXDiamond$ and $\PXFS\,\subseteq\,\PXBox$.
\qed
\end{proof}

\probDiaBoxPred*

\begin{proof}
    For any transition sequence $\mu\rightsquigarrow\nu$, we can proved by induction that $(L)\mu[\omega\mapsto a]\rightsquigarrow(L)\nu[\omega\mapsto a]$.
    For any process $P\in\Supp(\nu)$, $a$ is the only external action $(L)P[\omega\mapsto a]$ can perform.
    Since $P\myrightarrow{\omega}$ if and only if $(L)P[\omega\mapsto a]\myrightarrow{a}$, we have $\nu(\psi_\omega)=(L)\nu[\omega\mapsto a](\psi_\L)\le \charMay_\L((L)\mu[\omega\mapsto a])$.
    Therefore, $\charMay_\omega(\mu)\le\charMay_\L({(L)\mu[\omega\mapsto a]})$.

    Conversely, if $(L)\mu[\omega\mapsto a] \rightsquigarrow \nu'$, it is not difficult to deduce by induction that $\nu'=(L)\nu''$ for some $\nu''\in\Distr(\PRCCS)$ such that $\mu[\omega\mapsto a]\rightsquigarrow \nu''$.
    Note that the transformation $[a\mapsto \omega]$ satisfies the same properties with $[\omega\mapsto a]$ and under the condition that $a\notin\Chan(\mu)$, it is a left inverse of $[\omega\mapsto a]$.
    Hence, we can obtain that $\mu\rightsquigarrow \nu:=\nu''[a\mapsto\omega]$ and 
    \begin{equation}
        \charMay_\omega({\mu})\ge\nu(\psi_\omega)=(L)\nu[\omega\mapsto a](\psi_\L)=\nu'(\psi_\L).
    \end{equation}
    Therefore, $\charMay_\L((L)\mu[\omega\mapsto a])\le \charMay_\omega({\mu})$, which further implies $\charMay_\L((L)\mu[\omega\mapsto a])= \charMay_\omega({\mu})$.

    After establishing the above conclusions, we repeat the previous analysis. 
    Since $\charMay_\L((L)\nu[\omega\mapsto a])= \charMay_\omega({\nu})$, $\charFS_\omega(\mu)=\charFS_\L((L)\mu[\omega\mapsto a])$ follows.
    \qed
\end{proof}

\combChoCongr*

\begin{proof}
For any convex combination operation given by $I$ and $\{p_i\}_I$, let 
\begin{equation}
    \R:=\left\{\left(\sum_{i\in I}p_i\mu_{1,i}, \sum_{i\in I}p_i\mu_{2,i}\right) : \mu_{1,i}\PXDiamond\mu_{2,i}\text{~for all~}i\in I\right\}.
\end{equation}
Note that the convex combination operation is commutative with the localization and composition operations.
Given any pair $(\mu_1, \mu_2)\in \R^\mathcal{D}$, we can assume w.l.o.g. that $\mu_1:=\sum_{i\in I}p_i\nu_{1,i}$ and $\mu_2:=\sum_{i\in I}p_i\nu_{2,i}$, where $(\nu_{1,i},\nu_{2,i}) \in \{(\mu_{1,i},\mu_{2,i})\}^\mathcal{D}$.
Then by Lemma~\ref{lem:linearity_of_testing_outcome},
\begin{equation}
\charMay_\L(\mu_1)
= \sum_{i\in I}p_i\charMay_\L(\nu_{1,i})
= \sum_{i\in I}p_i\charMay_\L(\nu_{2,i})
= \charMay_\L(\mu_2),
\end{equation}
where the second equality holds by $\mu_{1,i}\PXDiamond\mu_{1,i}$ and the $\mathcal{D}$-extensionality of $\PXDiamond$.
By Lemma~\ref{lem:show_membership} (\ref{item:show_membership_2}), $\R\;\subseteq\;\PXDiamond$, which shows that the equality $\PXDiamond$ is preserved by the convex combination operation.

By the extensionality of $\PDiamond$, it is trivial to show that $\Restr{(\PDiamond)}{RCCS}$ is closed under the parallel composition and localization operations.
We abbreviate $\R^{\Distr(\PRCCS)}$ to $\R^\circ$.
\begin{enumerate}
    \item
    Let $\R:=\left\{\left(\delta_{\sum_{i\in I}a_i.P_{1,i}}, \delta_{\sum_{i\in I}a_i.P_{2,i}}\right) : P_{1,i}\PDiamond P_{2,i}\text{~for all~}i\in I\right\}$.
    For any $\mu_1,\mu_2$ such that $\mu_1\;\R^\circ\;\mu_2$, we assume that $\mu_1=\vartheta\left[\sum_{i\in I}a_i.P_{1,i}\right]$ and $\mu_2=\vartheta\left[\sum_{i\in I}a_i.P_{2,i}\right]$, where $\vartheta$ is a $1$-ary distribution containing only one occurrence of $X$.
    Observe that if $\mu_1\myrightarrow{\alpha}\nu_1$, then there exist two $1$-ary distributions $\varsigma_1, \varsigma_2$, $i\in I$ and $p\in [0,1]$ such that $\nu_1=p\varsigma_1\left[\sum_{i\in I}a_i.P_{1,i}\right]+(1-p)\varsigma_2\left[P_{1,i}\right]$ and $\mu_2\myrightarrow{\alpha}p\varsigma_1\left[\sum_{i\in I}a_i.P_{1,i}\right]+(1-p)\varsigma_2\left[P_{2,i}\right]$.
    \begin{claim}
        If $\mu_1\xrightarrow{\pi}\nu_1$ and $\pi\in\tau^{*}$, then $\nu_1(\psi_\L)\le \charMay_\L(\mu_2)$.
    \end{claim}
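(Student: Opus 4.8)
The plan is to prove the claim by induction on $\vert\pi\vert$, leaning on the single-step decomposition just recorded together with the two defining properties of $\PDiamond$ — it is probabilistically equipollent and $\mathcal{D}$-extensional — plus the fact, established in part~(\ref{item:combination_choice_congr_1}) of Proposition~\ref{prop:combination_choice_congr}, that $\PXDiamond$ is preserved by convex combinations. For the base case $\vert\pi\vert=0$ we have $\nu_1=\mu_1$, and I would argue that membership in $\psi_\L$ (the ability to fire an external action immediately) depends only on the surrounding context and on the shared guards $a_i$, never on the continuations $P_{1,i}$ versus $P_{2,i}$; hence $\mu_1(\psi_\L)=\mu_2(\psi_\L)\le\charMay_\L(\mu_2)$, the last inequality because $\mu_2\rightsquigarrow\mu_2$ trivially.

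For the inductive step, split $\mu_1\xrightarrow{\pi}\nu_1$ as $\mu_1\xrightarrow{(\tau,q)}\mu_1'\xrightarrow{\pi'}\nu_1$ with $\vert\pi'\vert=\vert\pi\vert-1$, and apply the single-step observation to the first move, writing $G_1:=\sum_{i\in I}a_i.P_{1,i}$ and $G_2:=\sum_{i\in I}a_i.P_{2,i}$: this gives $\mu_1'=p\varsigma_1[G_1]+(1-p)\varsigma_2[P_{1,i}]$ together with a matching move $\mu_2\xrightarrow{(\tau,q)}\mu_2':=p\varsigma_1[G_2]+(1-p)\varsigma_2[P_{2,i}]$. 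Now I would invoke the decomposition half of Lemma~\ref{lem:linearity_of_probabilistic_transition_strengthened} on $\mu_1'\xrightarrow{\pi'}\nu_1$ to obtain $\varsigma_1[G_1]\xrightarrow{\pi_a}\eta_1$ and $\varsigma_2[P_{1,i}]\xrightarrow{\pi_b}\eta_2$ with $\nu_1=p\eta_1+(1-p)\eta_2$. The crucial point here is the length bound $\vert\pi_a\vert,\vert\pi_b\vert\le\vert\pi'\vert$ supplied by that lemma (both witnesses are still in $\tau^{*}$, since the reconstructed words only reuse actions of $\pi'$), which is exactly what lets the induction hypothesis fire.

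On the summand $\varsigma_1[G_1]$, which is again a $\R^\circ$-pair of the form (1-ary distribution)$[G_1]$, the induction hypothesis gives $\eta_1(\psi_\L)\le\charMay_\L(\varsigma_1[G_2])$. On the summand $\varsigma_2[P_{1,i}]$, the essential observation is that $\R^\circ$ only ever wraps the hole in parallel compositions and localizations, so $\varsigma_2$ places $P_{1,i}$ under parallel/localization alone; iterated $\mathcal{D}$-extensionality of $\PDiamond$ (with the parallel siblings lying in $\Delta(\PRCCS)\subseteq\mathcal{D}$) then yields $\delta_{C[P_{1,i}]}\PDiamond\delta_{C[P_{2,i}]}$ for each context $C\in\Supp(\varsigma_2)$, and preservation under convex combination upgrades this to $\varsigma_2[P_{1,i}]\PDiamond\varsigma_2[P_{2,i}]$. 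Probabilistic equipollence gives $\charMay_\L(\varsigma_2[P_{1,i}])=\charMay_\L(\varsigma_2[P_{2,i}])$, whence $\eta_2(\psi_\L)\le\charMay_\L(\varsigma_2[P_{1,i}])=\charMay_\L(\varsigma_2[P_{2,i}])$ since $\eta_2$ is reachable from $\varsigma_2[P_{1,i}]$. Recombining, $\nu_1(\psi_\L)=p\eta_1(\psi_\L)+(1-p)\eta_2(\psi_\L)\le p\charMay_\L(\varsigma_1[G_2])+(1-p)\charMay_\L(\varsigma_2[P_{2,i}])=\charMay_\L(\mu_2')$ by the linearity of $\sup\mathcal{O}^{\cdot}_{\psi_\L}$ in Lemma~\ref{lem:linearity_of_testing_outcome}, and $\charMay_\L(\mu_2')\le\charMay_\L(\mu_2)$ by monotonicity along $\mu_2\rightsquigarrow\mu_2'$, closing the induction.

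The main obstacle, and the only genuinely non-routine ingredient, is the treatment of the $\varsigma_2[P_{1,i}]$ summand: one must verify that the contexts generated by $\R^\circ$ keep the hole in an active parallel/localization position so that $\mathcal{D}$-extensionality is applicable, and that this structural invariant is preserved by transitions — in particular that a transition never pushes the hole under a prefix or a recursion operator, so the single occurrence of the hole is never duplicated. Once this is granted, the remaining work is bookkeeping with the linearity lemmas, the length bound in Lemma~\ref{lem:linearity_of_probabilistic_transition_strengthened}, and the monotonicity of $\charMay_\L$ along internal transitions; together with the symmetric inequality this establishes $\charMay_\L(\mu_1)=\charMay_\L(\mu_2)$, i.e.\ the probabilistic equipollence of $\R^\circ$ needed to apply Lemma~\ref{lem:show_membership}.
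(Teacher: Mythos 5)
Your proposal is correct and follows essentially the same route as the paper's own proof: induction on $\vert\pi\vert$, splitting off the first $\tau$-step and matching it on the $\mu_2$ side via the single-step observation, decomposing the remaining sequence with Lemma~\ref{lem:linearity_of_probabilistic_transition_strengthened} (using its length bound), applying the induction hypothesis to the $\varsigma_1$-summand and $\mathcal{D}$-extensionality plus equipollence to the $\varsigma_2$-summand, then recombining via Lemma~\ref{lem:linearity_of_testing_outcome} and monotonicity of $\charMay_\L$ along internal transitions. The only differences are that you spell out details the paper leaves implicit — the base case, the fact that the decomposed witnesses remain in $\tau^{*}$, and the convex-combination step upgrading $\delta_{C[P_{1,i}]}\PDiamond\delta_{C[P_{2,i}]}$ to $\varsigma_2[P_{1,i}]\PDiamond\varsigma_2[P_{2,i}]$ — all of which are sound.
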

    \begin{claimproof}
        If $\vert\pi\vert=0$, the result follows trivially.
        Otherwise, assume that $\mu_1\myrightarrow{\tau}\mu_1'\myrightarrow{\pi'}\nu_1$.
        Based on our observation, there exist $\varsigma_1, \varsigma_2$, $i\in I$ and $p\in [0,1]$ such that $\mu_1'=p\varsigma_1\left[\sum_{i\in I}a_i.P_{1,i}\right]+(1-p)\varsigma_2\left[P_{1,i}\right]$ and $\mu_2\myrightarrow{\alpha}\mu_2':=p\varsigma_1\left[\sum_{i\in I}a_i.P_{1,i}\right]+(1-p)\varsigma_2\left[P_{2,i}\right]$.
        By Lemma~\ref{lem:linearity_of_probabilistic_transition_strengthened}, there exists two distributions $\nu_1',\overline{\nu_1}'$ such that
        $\varsigma_1\left[\sum_{i\in I}a_i.P_{1,i}\right]\myrightarrow{\pi_1}\nu_1'$, $\varsigma_2\left[P_{1,i}\right]\myrightarrow{\pi_2}\overline{\nu_1}'$, $\nu_1=p\nu_1'+(1-p)\overline{\nu_1}'$ and $\vert\pi_1\vert, \vert\pi_2\vert\le \vert\pi'\vert$.
        \begin{enumerate}
            \item Note that $\varsigma_1\left[\sum_{i\in I}a_i.P_{1,i}\right]\;\R^\circ\; \varsigma_1\left[\sum_{i\in I}a_i.P_{2,i}\right]$.
            By the induction hypothesis, one has that $\nu_1'(\psi_\L)\le \sup\PEquip{\varsigma_1\left[\sum_{i\in I}a_i.P_{2,i}\right]}$.
            \item 
            Since $\varsigma_2$ consists of only composition and localization operations, by the extensionality, we have $\varsigma_2\left[P_{1,i}\right]\PDiamond\varsigma_2\left[P_{2,i}\right]$, which implies that
            $\overline{\nu_1}'(\psi_\L)\le \sup\PEquip{\varsigma_2[P_{1,i}]}=\sup\PEquip{\varsigma_2[P_{2,i}]}$.
        \end{enumerate}
        By Lemma~\ref{lem:linearity_of_testing_outcome}, we have $\nu_1(\psi_\L)=p\nu_1'(\psi_\L)+(1-p)\overline{\nu_1}'(\psi_\L)\le  \sup\PEquip{\mu_2'}\le \charMay_\L(\mu_2).$
        \qed
    \end{claimproof}
    By symmetry and Corollary~\ref{cor:calculated_by_degenerate_witness}, one has that $\R^\circ$ is probabilistically equipollent.
    Therefore, $\R\subseteq\PDiamond$, which implies $\Restr{(\PDiamond)}{RCCS}$ is closed under the nondeterministic choice operation.
    \item 
    Let $\R:=\left\{\left(\delta_{\bigoplus_{i\in I}p_i.P_i}, \delta_{\bigoplus_{i\in I}p_i.Q_i}\right)\right\}$.
    Note that $\delta_{\bigoplus_{i\in I}p_i.P_i}\myrightarrow{\tau}\sum_{i\in I}p_i\delta_{P_i}$ and we have proved that $\PDiamond$ is preserved by convex combinations.
    Similarly, we can prove that $\R\subseteq\PDiamond$.
    Therefore, $\Restr{(\PDiamond)}{RCCS}$ is closed under the probabilistic choice operation.
\end{enumerate}

In conclusion, the equality $\Restr{(\PDiamond)}{RCCS}$ is preserved by the nondeterministic choice, probabilistic choice, parallel composition, and localization operations. 
\qed
\end{proof}

\addTau*

\begin{proof}
    We first prove the following claims.
    \begin{claim} Assume that $\vartheta$ is a $1$-ary distribution.
        \begin{enumerate}
        \item \label{item:algebraic_property_of_R1}
        If $\vartheta[\fix X.\tau.S]\myrightarrow{\alpha}\nu$, then there exists two $1$-ary distributions $\varsigma_1,\varsigma_2$ such that $\varsigma_1\PDiamond\varsigma_2$, $\nu=\varsigma_1[\fix X.\tau.S]$ and $\vartheta[\fix X.\tau.T]\myrightarrow{\alpha}\varsigma_2[\fix X.\tau.T]$.
        \item \label{item:algebraic_property_of_R2}
        $\vartheta[\fix X.\tau.S](\psi_\L)=\vartheta[\fix X.\tau.T](\psi_\L)$.
    \end{enumerate}
    \end{claim}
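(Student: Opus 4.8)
The plan is to reduce both parts of the claim to a single \emph{transition-factorization} property stated at the level of open terms, and then to lift it to distributions. Concretely, I would first prove: for every $1$-ary term $R$, if $R[\fix X.\tau.S]\myrightarrow{\alpha}\rho$, then there are $1$-ary distributions $\sigma_1,\sigma_2$ with $\sigma_1\PDiamond\sigma_2$, $\rho=\sigma_1[\fix X.\tau.S]$, and $R[\fix X.\tau.T]\myrightarrow{\alpha}\sigma_2[\fix X.\tau.T]$. The proof is by induction on the \emph{height of the derivation} of $R[\fix X.\tau.S]\myrightarrow{\alpha}\rho$; plain structural induction on $R$ fails, because unfolding a fixpoint $\fix Y.R'$ with $Y\neq X$ enlarges the term. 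I would split on the shape of $R$ following the rules of Figure~\ref{fig:pLTS1}: when $R=X$, the only move is the unfolding $\fix X.\tau.S\myrightarrow{\tau}\delta_{S[\fix X.\tau.S]}$, so I set $\sigma_1=\delta_S$, $\sigma_2=\delta_T$, and $\sigma_1\PDiamond\sigma_2$ is exactly the hypothesis $S\PDiamond T$; when $R$ is a nondeterministic or probabilistic prefix sum the residual context is untouched, so $\sigma_1=\sigma_2$; the parallel, localization, and fixpoint ($Y\neq X$) cases apply the induction hypothesis to the strictly smaller premise derivation and rebuild $\sigma_1,\sigma_2$ using that substitution commutes with $\mid$, $(L)\cdot$, and fixpoint unfolding.

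The relation $\sigma_1\PDiamond\sigma_2$ in the composite cases is obtained from the inductively given relation on the components together with the congruence of $\PDiamond$ under parallel composition, localization, and convex combination, which is already available from Proposition~\ref{prop:combination_choice_congr} and the $\mathcal{D}$-extensionality of $\PDiamond$ (reducing the $1$-ary statements to processes by substituting an arbitrary process for $X$). A point worth stressing is that no circularity arises: in the fixpoint case $\fix Y.R'$ the equivalence $\sigma_1\PDiamond\sigma_2$ is inherited \emph{directly} from the induction hypothesis on the unfolded premise, so I never invoke congruence of $\PDiamond$ under the fixpoint operator, which is precisely what Lemma~\ref{lem:add_tau} is in the business of establishing. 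Throughout I rely on $\fix X.\tau.S$ and $\fix X.\tau.T$ being $\tau$-guarded: a substituted copy can only perform $\tau$, hence it can neither contribute an external action nor take part in a synchronization in a single step, which makes the leaf behaviour of the factorization clean.

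With the open-term statement in hand, Part~\ref{item:algebraic_property_of_R1} follows by unwinding the distribution-level transition of Figure~\ref{fig:pLTS2}: a step $\vartheta[\fix X.\tau.S]\xrightarrow{(\alpha,p)}\nu$ activates some process $P=R[\fix X.\tau.S]$ with $R\in\Supp(\vartheta)$, and substituting the factorization $\rho=\sigma_1[\fix X.\tau.S]$ into $\nu=\vartheta[\fix X.\tau.S]+\vartheta[\fix X.\tau.S](P)\,p\,(\rho-\delta_P)$ lets me read off $\varsigma_1$ and $\varsigma_2$ as the corresponding updates of $\vartheta$; these differ only in the $\sigma_1$ versus $\sigma_2$ summand, so $\varsigma_1\PDiamond\varsigma_2$ again by preservation under convex combination. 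Part~\ref{item:algebraic_property_of_R2} then falls out of the instance $\alpha=\ell\in\L$ of the open-term statement: it shows $R[\fix X.\tau.S]\in\psi_\L$ if and only if $R[\fix X.\tau.T]\in\psi_\L$, and summing the common weights $\vartheta(R)$ over $\Supp(\vartheta)$ yields $\vartheta[\fix X.\tau.S](\psi_\L)=\vartheta[\fix X.\tau.T](\psi_\L)$.

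I expect the main obstacle to be the fixpoint case of the open-term induction: I must check that $(R'[\fix X.\tau.T])\{\fix Y.(R'[\fix X.\tau.T])/Y\}=(R'\{\fix Y.R'/Y\})[\fix X.\tau.T]$ (so the $T$-side move reassembles through the fixpoint rule), that the derivation height genuinely decreases, and that the two substitutions $[\fix X.\tau.S]$ and $[\fix X.\tau.T]$ stay synchronized. A secondary nuisance is the weight bookkeeping in the lifting step when $[\fix X.\tau.S]$ identifies distinct terms of $\Supp(\vartheta)$; this is routine and is handled by distributing the update over all terms collapsing to $P$, or by passing to a term-indexed presentation of the transition.
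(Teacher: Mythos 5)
Your proposal is correct and takes essentially the same route as the paper: the paper likewise reduces the distribution-level step to a process-level factorization for some $R\in\Supp(\vartheta)$ with $R[\fix X.\tau.S]\myrightarrow{\alpha}\rho$ (which it dismisses as ``standard by induction on the inference''), lifts it back to $\varsigma_i:=\vartheta+\vartheta(R)(\varrho_i-\delta_R)$ using closure of $\PDiamond$ under convex combination, and obtains the second item from the first by summing weights over $\Supp(\vartheta)$. Your write-up simply makes explicit what the paper leaves implicit --- the derivation-height (rather than structural) induction, the role of $\tau$-guardedness of $\fix X.\tau.S$ in the $R=X$ and synchronization cases, the non-use of fixpoint congruence (avoiding circularity), and the bookkeeping when substitution collapses distinct support terms.
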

    \begin{claimproof}
        Assume that $\vartheta[\fix X.\tau.S]\myrightarrow{\alpha}\nu$ is induced by $R\in \Supp(\vartheta)$ such that $R[\fix X.\tau.S]\myrightarrow{\alpha}\rho$ and $\nu=\vartheta[\fix X.\tau.S]+\vartheta(R)(\rho-\delta_R[\fix X.\tau.S])$.
        It suffices to prove that there exists two $1$-ary distributions $\varrho_1,\varrho_2$ such that $\varrho_1\PDiamond\varrho_2$, $\rho=\varrho_1[\fix X.\tau.S]$ and $R[\fix X.\tau.T]\myrightarrow{\alpha}\varrho_2[\fix X.\tau.T]$, which is standard by induction on the inference of $R[\fix X.\tau.S]\myrightarrow{\alpha}\rho$.
        Let $\varsigma_i:=\vartheta+\vartheta(R)(\varrho_i-\delta_R)$.
        Since the equality $\PDiamond$ is closed under the convex combination operation, one has that $\varsigma_1\PDiamond\varsigma_2$,  $\nu=\varsigma_1[\fix X.\tau.S]$ and $\vartheta[\fix X.\tau.T]\myrightarrow{\alpha}\vartheta[\fix X.\tau.T]+\vartheta(R)(\varrho_2[\fix X.\tau.T]-\delta_R[\fix X.\tau.T])=\varsigma_2[\fix X.\tau.T]$.
        
        By statement (\ref{item:algebraic_property_of_R1}), one can conclude that for any $\ell\in\L$,  $R[\fix X.\tau.S]\myrightarrow{\ell}$ if and only if $R[\fix X.\tau.T]\myrightarrow{\ell}$.
        Therefore,
        \begin{equation}
            \vartheta[\fix X.\tau.S](\psi_\L)=\sum_{R[\fix X.\tau.S]\myrightarrow{\ell}}\vartheta(R)=\sum_{R[\fix X.\tau.T]\myrightarrow{\ell}}\vartheta(R)=\vartheta[\fix X.\tau.T](\psi_\L).
        \end{equation}
        \qed
    \end{claimproof}

    Let $\R := \left\{(\vartheta[\fix X.\tau.S], \vartheta[\fix X.\tau.T]) : S \PDiamond T, \text{~and~}\vartheta\text{ is $1$-ary}\right\}$.
    For any $1$-ary distribution $\vartheta$ such that $\vartheta[\fix X.\tau.S]\myrightarrow{\tau^k}\nu_1$, we prove by induction that $\nu_1(\psi_\L)\le \charMay_\L({\vartheta[\mu X.\tau.T]})$.
    \begin{enumerate}
        \item 
        $k=0$.
        Then $\nu_1=\vartheta[\fix X.\tau.S]$.
        By statement (\ref{item:algebraic_property_of_R2}), one has that $\vartheta[\fix X.\tau.S](\psi_\L)=\vartheta[\fix X.\tau.T](\psi_\L)$.
        Hence, $\nu_1(\psi_L)\le \sup\PEquip{\vartheta[\fix X.\tau.S]}=\charMay_\L({\vartheta[\mu X.\tau.T]})$.
        \item 
        $k=i+1$.
        By statement (\ref{item:algebraic_property_of_R1}), we can w.l.o.g. assume that $\vartheta[\fix X.\tau.S]\myrightarrow{\tau}\varsigma_1[\fix X.\tau.S]\myrightarrow{\tau^i}\nu_1$ and $\vartheta[\fix X.\tau.T]\myrightarrow{\tau}\varsigma_2[\fix X.\tau.T]$ for some $1$-ary distributions $\varsigma_1, \varsigma_2$ such that $\varsigma_1\PDiamond\varsigma_2$.
        Applying the induction hypothesis to $\varsigma_1[\fix X.\tau.S]\myrightarrow{\tau^i}\nu_1$, we can obtain that $\nu_1(\psi_\L)\le\sup\PEquip{\varsigma_1[\mu X.\tau.T]}$.
        Since $\varsigma_1\PDiamond \varsigma_2$, we have $\varsigma_1[\fix X.\tau.T]\PDiamond\varsigma_2[\fix X.\tau.T]$.
        Then
        \begin{equation}
            \nu_1(\psi_\L)\le \sup\PEquip{\varsigma_1[\mu X.\tau.T]}=\sup\PEquip{\varsigma_2[\mu X.\tau.T]}\le \sup\PEquip{\vartheta[\mu X.\tau.T]}=\charMay_\L({\vartheta[\mu X.\tau.T]}).
        \end{equation}
    \end{enumerate}
    By symmetry and Corollary~\ref{cor:calculated_by_degenerate_witness}, one has $\charMay_\L({\vartheta[\mu X.\tau.S]})=\charMay_\L({\vartheta[\mu X.\tau.T]})$.
    By Lemma~\ref{lem:show_membership}, we conclude that $\fix X.\tau.S\PDiamond\fix X.\tau.T$.
    
    Let $\R:=\{(\vartheta[\fix X.T], \vartheta[\mu X.\tau.T]) : \mu \text{ is $1$-ary}\}$.
    It suffices to prove that $\R$ is probabilistically equipollent.
    By induction on the inference, the following statements are valid.
    \begin{enumerate}
        \item If $\vartheta[\fix X.T]\myrightarrow{\alpha}\nu$, then there exists a $1$-ary distribution $\varsigma$ such that $\nu=\varsigma[\mu X.T]$ and $\vartheta[\fix X.\tau.T]\rightsquigarrow\myrightarrow{\alpha}\varsigma[\fix X.\tau.T]$.
        \item If $\vartheta[\fix X.\tau.T]\myrightarrow{\alpha}\nu$, then there exists a $1$-ary distribution $\varsigma$ such that $\nu=\varsigma[\fix X.\tau.T]$ and $\vartheta[\fix X.T]\myrightarrow{\alpha}\varsigma[\fix X.T]$ or $\vartheta[\fix X.T]\Strong^\dag \varsigma[\fix X.T]$.
        \item $\vartheta[\fix X.\tau.T](\psi_\L)\le \vartheta[\fix X.T](\psi_\L)\le \charMay_\L({\vartheta[\fix X.\tau.T]})$.
    \end{enumerate}
    Therefore, for any transition sequence $\vartheta[\fix X.T]\myrightarrow{\tau^k}\nu_1$, we can prove by induction on $k$ that $\nu_1(\psi_\L)\le\charMay_\L({\vartheta[\fix X.\tau.T]})$.
    Hence, by Corollary~\ref{cor:calculated_by_degenerate_witness}, $\charMay_\L({\vartheta[\fix X.T]})\le \charMay_\L({\vartheta[\fix X.\tau.T]})$.
    The analog statement in the symmetric case also holds.
    By Lemma~\ref{lem:show_membership}, we have $\fix X.T\PDiamond \fix X.\tau.T$.
    \qed
\end{proof}

\fixPoint*

\begin{proof}
Consider any $n$-ary terms $S, T$ such that $S\PDiamond T$.
For any processes $P_2,\dots, P_n\in\PRCCS$, it is trivial that $S[X,P_2,\dots, P_n]\PDiamond T[X,P_2,\dots, P_n]$.
By the conclusion of case $n=1$, we have $\mu X.S[X,P_2,\dots,P_n]\PDiamond\mu X.T[X,P_2,\dots,P_n]$, which implies $\fix X.S\PDiamond\fix X.T$.
\qed
\end{proof}

\pcspOut*

\begin{proof}
    It suffices to consider an arbitrary process $s \in \mathsf{sCSP}$.
    We can first prove by induction that
    \begin{enumerate}
        \item\label{item:pcsp_outcome_1} if $\delta_s \myrightarrow{\tau^*} \nu$, then $\nu(\psi_\omega)\le \max\mathbb{V}(\nu) \in \mathbb{V}(s)$, and
        \item if $p \in\mathbb{V}(s)$, then there exists a distribution $\nu$ such that $\delta_s\rightsquigarrow \nu$ and $\mathbb{V}(\nu)=\{p\}$ and $\nu(\psi_\omega)=p$.
    \end{enumerate}
    By definition, this yields $\max \mathbb{V}(s)=\charMay_\omega(s)$, and therefore the first statement follows.
    For the second statement, since$\mathbb{V}(s)$ is a finite set, we can take $\delta_s\rightsquigarrow \nu$ with $\mathbb{V}(\nu)=\{\min \mathbb{V}(s)\}$.
    Since $\charMay_\omega(\nu)=\max \mathbb{V}(\nu)=\min \mathbb{V}(s)$, we have  $\min \mathbb{V}(s)\ge \charFS_\omega(s)$.
    Moreover, consider any transition sequence $\delta_s \myrightarrow{\tau^*} \nu$.
    By Property~(\ref{item:pcsp_outcome_1}) and the first statement above, we have $\charMay_\omega(\nu)=\max\mathbb{V}(\nu)\ge \min \mathbb{V}(s)$.
    hen Corollary~\ref{cor:calculated_by_degenerate_witness} implies $\min \mathbb{V}(s) \le \charFS_\omega(s)$. 
    This proves the second statement.
\end{proof}

\section{Proofs for Section \ref{sec:comparison}}
\label{appendix:proof_comparison}

\equipClasGen*

\begin{proof}
    Given a CCS process $P$, we use $\mathsf{Reach}_{\tau}(P) = \left\{P': P \Longrightarrow P'\right\}$ to denote the set of processes reachable from $P$ through internal actions.

    For $\charMay_\L(P)$, there are two possible cases according to whether $P \Downarrow$.
    \begin{enumerate}
        \item If $P \Downarrow$, then there exists $P' \in \psi_\L$ with $P \Longrightarrow P'$. Now we see that $\delta_{P'}(\psi_\L) = 1$ for some transition sequence $\delta_{P} \rightsquigarrow \delta_{P'}$, and thus $
        \charMay_\L(P) = 1$.
        \item If $P \not\Downarrow$, then $P' \notin \psi_\L$ for all $P'$ with $P \Longrightarrow P'$, which implies that $\mathsf{Reach}_{\tau}(P)\subseteq \overline{\psi_\L}$. Consider any transition sequence $\delta_{P} \rightsquigarrow \nu$ with witness $\pi$, we can prove that $\Supp(\nu) \subseteq \mathsf{Reach}_{\tau}(P)$ by induction on the length of $\pi$. 
        Then $\nu(\psi_\L) = \sum_{P' \in \psi_\L} \nu(P') = 0$.
        Therefore, $\charMay_\L(P)=0$.
    \end{enumerate}
    \indent We then prove the equality for $\charFS_\L(P)$. 
    There are two possible cases according to whether $P \observable$.
    \begin{enumerate}
        \item If $P \observable$, then $P' \Downarrow$ for all $P'$ with $P \Longrightarrow P'$. 
        According to the above conclusion, 
        $\charMay_\L({P'})=1$ for all $P'$ for all $P'\in\mathsf{Reach}_{\tau}(P)$.
        Consider any transition sequence $\delta_{P} \rightsquigarrow \nu$.
        According to Lemma \ref{lem:linearity_of_testing_outcome}, we have 
        \begin{equation}
            \charMay_\L(\nu)= \sum_{P'\in\Supp(\nu)}\nu(P')\charMay_\L({P'})
        = \sum_{P'\in\Supp(\nu)}\nu(P')\cdot 1 = 1.
        \end{equation}
        By the arbitrariness of the transition sequences, we have $\charFS_\L(\mu)=
        \inf \left\{\charMay_\L(\nu): \delta_{P} \rightsquigarrow \nu\right\} = 1$.
        \item If $P \not\observable$, then $P' \not\Downarrow$ for some $P'$ with $P \Longrightarrow P'$. 
        By our first equality, we have
        $\charMay_\L({P'})=0$.
        Now we see that $\charMay_\L({P'})=0$ for the transition sequence $\delta_{P}\rightsquigarrow\delta_{P'}$.
        Thus, $\charFS_\L(\mu)=0$.
        \qed
    \end{enumerate}
\end{proof}

\propClasGen*

\begin{proof}
    Define the $\Delta$-lifting of a relation $\R$ over $\PCCS$ by $\R^{\Delta}:=\{(\delta_P,\delta_Q) : P~\R~Q\}$.
    Then $\R^\Delta$ can be viewed as a relation over $\Distr(\PRCCS)$.
    By definition, we have the following observations.
    \begin{enumerate}
        \item \label{item:delta_extensionality_1} 
        If $\R$ is a $\Delta$-extensional relation on $\Distr(\PRCCS)$, then $\Restr{(\Strong^\dag\R\Strong^\dag)}{CCS}$ is extensional on $\PCCS$. 
        \item \label{item:delta_extensionality_2} 
        If $\E$ is an extensional equivalence on $\PCCS$ then $\Strong^\dag\E^{\Delta}\Strong^\dag$ is $\Delta$-extensional on $\Distr(\PRCCS)$.
    \end{enumerate}
    By Lemma~\ref{lem:equipollence_classical_generalization}, we can deduce that $\Restr{(\PCDiamond)}{CCS}$ is an extensional, equipollent relation on $\PCCS$, which is contained in $\CDiamond$.
    Again by this lemma, we obtain that $\Strong^\dag(\CDiamond)^{\Delta}\Strong^\dag$ is $\Delta$-extensional and probabilistic equipollent and thus contained in $\PCDiamond$.
    By definition, we have $(\CDiamond)\subseteq\Restr{((\Strong^\dag(\CDiamond)^{\Delta}\Strong^\dag)}{CCS}\,\subseteq\,\Restr{(\PCDiamond)}{CCS}.$
    Therefore, $\Restr{(\PCDiamond)}{CCS}=(\CDiamond)$.
    Since $p$-extensionality implies $\Delta$-extensionality, one has that $\Restr{(\PDiamond)}{CCS} \,\subseteq \,\Restr{(\PCDiamond)}{CCS}$.
    Similarly, $\Restr{(\PBox)}{CCS} \,\subseteq \,\Restr{(\PCBox)}{CCS}=(\CBox)$ also holds.
    The strictness is witnessed by Example~\ref{ex:extensionality_counterexp}.
    \qed
\end{proof}

\propTestClassGen*

\begin{proof}
    By Corollary~\ref{cor:probabilistic_may_fs} and Proposition~\ref{prop:classical_generalization}, we can easily prove the two strict containment relations.
    Next, we will show that $\May$ and $\Restr{(\PCMay)}{CCS}$ are coincident.
    \begin{enumerate}
        \item 
        Given any $P,Q\in\PCCS$ such that $P\May Q$, for any $o\in\Delta(\PCCS)^{\omega}$, it is clear that $o=\delta_O$ for some $O\in\PCCS^{\omega}$.
        By the definition of $\May$ and Lemma~\ref{lem:testing_classical_generalization}, we have $\charMay_\omega({\delta_P\mid o})=\charMay_\omega({P\mid O})=\charMay_\omega({Q\mid O})=\charMay_\omega({\delta_Q\mid o})$.
        Therefore, $\delta_P\PCMay\delta_Q$ holds, which implies $\May\subseteq\Restr{(\PCMay)}{CCS}$.
        \item 
        Given any $P,Q\in\PCCS$ such that $P~\Restr{(\PCMay)}{CCS}~Q$ and any classical observer $O\in\PCCS^{\omega}$, the equality $\charMay_\omega({P\mid O})=\charMay_\omega({Q\mid O})$ holds by definition.
        By Lemma~\ref{lem:testing_classical_generalization}, $P\May Q$ and thus $\Restr{(\PCMay)}{CCS}\subseteq\May$.
    \end{enumerate}
    Following a similar argument, we can prove that $\Restr{(\PCFS)}{CCS}=(\FS)$.
    \qed
\end{proof}

\pweakAlmostBisim*

\begin{proof}
	Assume that $\mu_1 \ptran \nu_1$ is witnessed by $\pi_1 \in\tau^{*}$. 
    We prove this lemma by induction on $|\pi_1|$.
	\begin{enumerate}
		\item In the base case $|\pi_1| = 0$, we have $\nu_1 = \mu_1$. Since $\mu_1 \PWeak \mu_2$ , $\mu_1(\mathcal{C}) = \mu_2(\mathcal{C})$ holds for all $\mathcal{C} \in \PRCCS / \PWeak$ (see Definition \ref{def:lifting}), which implies that $|\mu_2 - \mu_1|_{\PWeak} = 0$.
		Now by setting $\nu_2 = \mu_2$, we see that $\mu_2 \ptran \nu_2$ and $|\nu_2 - \nu_1|_{\PWeak} = |\mu_2 - \mu_1|_{\PWeak} = 0$, thus the results follows.
		\item Assume that the statements holds for $|\pi_1| = k \ge 0$.
		When $|\pi_1| = k+1$, we can divide the transition sequence into $\mu_1 \myrightarrow{\pi_1'} \nu_1' \myrightarrow{\tau} \nu_1$, where $\pi_1 = \pi_1' \circ \tau$ and $\pi_1'$ is a degenerate witness.
		Since $|\pi_1'| = k$ and $\pi_1'$ is a degenerate witness, by the induction hypothesis, for any number $\epsilon > 0$, there exists $\nu_2'$ such that $\mu_2 \ptran \nu_2'$ and $|\nu_2' - \nu_1'|_{\PWeak} \le \epsilon$.
		Since $\nu_1' \myrightarrow{\tau} \nu_1$, there exists a process $P \in \Supp(\nu_1')$ such that $P \myrightarrow{\tau} \rho_P$ and $\nu_1 = \nu_1' + \nu_1'(P)(\rho_P - \delta_{P})$.
		Since $|\nu_2' - \nu_1'|_{\PWeak} \le \epsilon$, we have $|\nu_2'([P]) - \nu_1'([P])| \le \epsilon$.
		Thus there exists a set of processes $\{Q_1, \cdots, Q_m\} \subseteq \Supp(\nu_2')$ and a number $p \in (0,1]$ such that $Q_i \in [P]$ for all $i \in [m]$ and $|p\cdot(\sum_{i=1}^{m} \nu_2'(Q_i)) - \nu_1'(P)| \le \epsilon$.
		
	Now consider any process $Q \in \{Q_1, \cdots, Q_m\}$. 
        Since $P \PWeak Q$ and $P \myrightarrow{\tau} \rho_P$, there exists $\rho_{Q}\in\InfDistr(\PRCCS)$ such that $Q \stackrel{\tau}{\Longrightarrow}_{c} \rho_Q$ and $\rho_P \PWeak \rho_Q$. 
    Suppose the weak combined transition $Q \stackrel{\tau}{\Longrightarrow}_{c} \rho_Q$ is induced by  a scheduler $\sigma_Q$ defined on $\mathsf{frags}^{*}(Q)$, we define a set of schedulers $\{\sigma_Q^{\upharpoonright n}\}_{n \in \mathsf{N}}$ as follows:
    \begin{equation}
        \sigma_Q^{\upharpoonright n}(\omega)= \begin{cases}
        \sigma_Q(\omega), & \text{if $|\omega| < n$,} \\
        \delta_{\bot}, & \text{if $|\omega| \ge n$.}
    \end{cases}
    \end{equation}
	Now each scheduler $\sigma_Q^{\upharpoonright n}$ will induce a new weak combined transition $Q \stackrel{\tau}{\Longrightarrow}_{c} \rho_Q^{\upharpoonright n}$ for some distribution $\rho_Q^{\upharpoonright n}\in\Distr(\PRCCS)$.
		By definition, for each $Q' \in \PRCCS$, we have $\lim_{n\to\infty}\rho_Q^{\upharpoonright n}(Q') = \rho_{Q}(Q') $.
		Thus for any $\epsilon > 0$, there exists $N$ such that the distribution $\rho_Q^{\upharpoonright N}$ (induced by the scheduler $\sigma_Q^{\upharpoonright N}$) satisfying that 
		$|\rho_Q^{\upharpoonright N} - \rho_{Q}|_{\PWeak} \le \epsilon$.
		We further show how to construct a probabilistic transition sequence
		$\delta_Q \ptran \rho_Q^{\upharpoonright N}$ according to the scheduler $\sigma_Q^{\upharpoonright N}$.
		Consider any fragment $\omega$ with $|\omega| < N$, w.l.o.g, assume that $\mathsf{last}(\omega) = B$ and $B$ can perform $k$ different internal transitions $\{\mathsf{tr}_{i} = B \myrightarrow{\tau} \rho_i \mid i \in [k]\}$. 
		Suppose $\sigma_Q^{\upharpoonright N}(\omega)(\mathsf{tr}_{i}) = q_i$ for all $i \in [k]$ and $\sigma_Q^{\upharpoonright N}(\omega)(\bot) = 1 - \sum_{i \in [k]}q_i$, then $B$ will arrive at the distribution $\mu_B = \sum_{i \in [k]}q_i \rho_i + (1 - \sum_{i \in [k]}q_i) \delta_B$ in next step according to the scheduler $\sigma_Q^{\upharpoonright N}$.
        Since $\delta_B\ptran\rho_i$ for all $i\in[k]$,
        by Corollary~\ref{cor:linearity_of_probabilistic_transition}~(\ref{item:linear_combination}), 
        \begin{equation}
            \delta_B=\sum_{i \in [k]}q_i \delta_B + (1 - \sum_{i \in [k]}q_i) \delta_B\ptran \sum_{i \in [k]}q_i \rho_i + (1 - \sum_{i \in [k]}q_i) \delta_B=\mu_B.
        \end{equation}
        Repeat the above procedure and by induction on the length of $\omega$, we can prove that $\delta_Q \ptran \rho_Q^{\upharpoonright N}$.
		
		Now we have proved that for each $Q_i \in \{Q_1, \cdots, Q_m\}$ and any $\epsilon > 0$, there exists a distribution $\rho_{Q_i}$ such that $\delta_{Q_i} \ptran \rho_{Q_i}$ and $|\rho_P - \rho_{Q_i}| \le \epsilon$.
		Since $\nu_2' = \sum_{i \in [m]}\nu_2'(Q_i) \delta_{Q_i} + (1-r) \frac{\nu_2' - \sum_{i \in [m]}\nu_2'(Q_i) \delta_{Q_i}}{1-r}$ (where $r = 1 - \sum_{i \in [m]}\nu_2'(Q_i)$) and $\delta_{Q_i} \ptran (1-p)\delta_{Q_i} + p \rho_{Q_i}$, according to Corollary \ref{cor:linearity_of_probabilistic_transition}, we have
		$
			\nu_2'\ptran\nu_2' + p\sum_{i \in [m]}\nu_2'(Q_i) (\rho_{Q_i} - \delta_{Q_i}). 
		$
		Let $\nu_{2} = \nu_2' + p\sum_{i \in [m]}\nu_2'(Q_i) (\rho_{Q_i} - \delta_{Q_i})$, then we have 
            \begin{equation}
                \begin{aligned}
			&\vert\nu_2 - \nu_1|_{\PWeak} \\
			 =~  &\vert[\nu_2' + p\sum_{i \in [m]}\nu_2'(Q_i) (\rho_{Q_i} - \delta_{Q_i})] - [\nu_1' + \nu_1'(P)(\rho_P - \delta_{P})]\vert_{\PWeak} \\
			\le~ & |\nu_2' - \nu_1'|_{\PWeak} 
			+
			|p\sum_{i \in [m]} \nu_2'(Q_i) - \nu_1'(P)|\cdot |\rho_P - \delta_{P}|_{\PWeak}
			+
			p\sum_{i \in [m]}\nu_2'(Q_i) \cdot|(\rho_{Q_i} - \delta_{Q_i}) - (\rho_P - \delta_{P})|_{\PWeak} \\
			\le~ &\epsilon + \epsilon \cdot 2 +1\cdot (\epsilon + 0)\\
			=~ &4\epsilon
		\end{aligned}
            \end{equation}
	\qed
	\end{enumerate}
\end{proof}

\subsection{Proof of Lemma~\ref{lem:PWeak_is_stronglyEquipollent}}
\label{app:proof_strong_equipollence}

Given an induced probability distribution $\rho_{\sigma,A}$ of process $A$ under scheduler $\sigma$ and an external action $\ell \in \L$, we say an execution fragment $\omega\in \mathsf{frags}^{*}(A)$ is \emph{$\ell$-observable} if $\rho_{\sigma,A}(\omega)>0$ and $\mathsf{trace}(\omega) = \ell$, and say $\omega$ is \emph{$\ell$-fireable} if $\rho_{\sigma,A}(\omega)>0$, $\mathsf{trace}(\omega) = \epsilon$, and $\sigma(\omega)(\mathsf{tr}) > 0$ for some $\mathsf{tr}$ with $act(\mathsf{tr})  = \ell$. The $\ell$-observable probability of $\rho_{\sigma,A}$ is then defined by 
\begin{equation}
    \mathsf{P}^{\ell}(\rho_{\sigma,A}) :=
\sum\{
\rho_{\sigma,A}(\omega)
\mid \text{$\omega$ is an $\ell$-observable execution fragment}
\}.
\end{equation}

\begin{lemma}
\label{lem:PWeak_observable_sup}
	Let $P,Q \in \PRCCS$ be two processes with $P \PWeak Q$. 
	If $P \in \psi_\L$, then $\charMay_\L({Q})= 1$.
\end{lemma}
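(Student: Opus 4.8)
The plan is to use the weak combined transition guaranteed by the bisimulation to exhibit a family of genuine internal transition sequences from $\delta_Q$ whose descendants concentrate, in the limit, on states that can fire an external action.

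First I would unfold the hypotheses. Since $P\in\psi_\L$, there exist $\ell\in\L$ and $\rho_P$ with $P\myrightarrow{\ell}\rho_P$. Applying Definition~\ref{def:PWB} to $P\PWeak Q$ yields a distribution $\rho_Q$ and a scheduler $\sigma$ on $\mathsf{frags}^{*}(Q)$ realizing the weak combined transition $Q\stackrel{\ell}{\Longrightarrow}_{c}\rho_Q$. By the defining conditions of Definition~\ref{def:weak_combined_transition}, $\sigma$ terminates almost surely and every terminating fragment has trace exactly $\widehat{\ell}=\ell$; in particular no reachable fragment with trace $\varepsilon$ is a termination point, so on trace-$\varepsilon$ fragments $\sigma$ fires only $\tau$- and $\ell$-transitions, and the event ``$\ell$ is eventually fired'' has probability $1$.

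The core idea is to truncate $\sigma$ at the moment $\ell$ is about to be fired. I would define a scheduler $\sigma'$ that agrees with $\sigma$ on the $\tau$-transitions of every trace-$\varepsilon$ fragment but reroutes the total $\ell$-firing probability to termination (and sets $\sigma'(\omega)=\delta_{\bot}$ once $\mathsf{trace}(\omega)\neq\varepsilon$). Then every fragment that terminates under $\sigma'$ is $\ell$-fireable, so its last state lies in $\psi_\L$; hence $\rho_{\sigma',Q}$ is supported entirely on $\psi_\L$, and its total terminating mass equals the probability that $\ell$ is eventually fired under $\sigma$, namely $1$. Because $\sigma'$ need not have a finite horizon, I would pass to the finite truncations ${\sigma'}^{\upharpoonright N}$ exactly as in the proof of Lemma~\ref{lem:PWeak_almost_bisimulation}: each ${\sigma'}^{\upharpoonright N}$ performs only $\tau$-steps, so by the combination clause of Corollary~\ref{cor:linearity_of_probabilistic_transition}~(\ref{item:linear_combination}) it induces a bona fide internal transition sequence $\delta_Q\rightsquigarrow\nu_N$, where $\nu_N:=\rho_{{\sigma'}^{\upharpoonright N},Q}$.

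It remains to estimate $\nu_N(\psi_\L)$. The mass of $\nu_N$ splits into the part that has already terminated (on $\psi_\L$ states, inherited from $\sigma'$) and the part forcibly cut off at horizon $N$. The former is the probability of firing $\ell$ within $N$ steps, which increases to $1$ as $N\to\infty$ by almost-sure termination; the latter tends to $0$. Hence $\nu_N(\psi_\L)\to 1$, and since $\delta_Q\rightsquigarrow\nu_N$ for every $N$, we obtain $\charMay_\L(Q)=\sup\{\nu(\psi_\L):\delta_Q\rightsquigarrow\nu\}\ge\sup_N\nu_N(\psi_\L)=1$; together with $\charMay_\L(Q)\le 1$ this gives the claim. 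The main obstacle is the bookkeeping of the scheduler surgery: I must verify carefully that rerouting the $\ell$-branch to $\bot$ leaves the $\tau$-dynamics a legitimate scheduler, that all of its terminal states genuinely belong to $\psi_\L$, and that the truncation error vanishes — this is precisely where the $\ell$-fireable/$\ell$-observable vocabulary and the almost-sure termination of $\sigma$ do the work.
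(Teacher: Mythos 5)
Your proposal is correct and follows essentially the same route as the paper's proof: extract the scheduler for $Q\stackrel{\ell}{\Longrightarrow}_{c}\rho_Q$, perform the same ``stop at the $\ell$-fireable point'' scheduler surgery, truncate to a finite horizon so the resulting $\tau$-only schedulers induce genuine internal transition sequences $\delta_Q\rightsquigarrow\nu_N$ (via the construction of Lemma~\ref{lem:PWeak_almost_bisimulation}), and let $\nu_N(\psi_\L)\to 1$. The only difference is that you cut at the $\ell$-fireable points before truncating to horizon $N$ while the paper truncates first and then cuts, an immaterial reordering of the same two operations.
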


\begin{proof}
	Assume that $P,Q \in \PRCCS$ are two processes with $P \PWeak Q$.
	If $P \in \psi_\L$, then $P \myrightarrow{\ell} \rho_P$ for some external action $\ell \in \L$. Since $P \myrightarrow{\ell} \rho_P$ and $P \PWeak Q$, there exists distribution $\rho_{Q}\in\InfDistr(\PRCCS)$ such that $Q \stackrel{\ell}{\Longrightarrow}_{c} \rho_Q$ and $\rho_P \PWeak \rho_Q$.
	Suppose the weak combined transition $Q \stackrel{\ell}{\Longrightarrow}_{c} \rho_Q$ is induced by  a scheduler $\sigma_Q$ defined on $\mathsf{frags}^{*}(Q)$, we define a family of schedulers $\{\sigma_Q^{\upharpoonright n}\}_{n \in \mathsf{N}}$ as follows:
	\begin{equation}
	    \sigma_Q^{\upharpoonright n}(\omega)= \begin{cases}
		\sigma_Q(\omega), & \text{if $|\omega| < n$,} \\
		\delta_{\bot}, & \text{if $|\omega| \ge n$.}
	\end{cases}
	\end{equation}
	Now each scheduler $\sigma_Q^{\upharpoonright n}$ will induce a probability distribution $\rho_{\sigma_Q^{\upharpoonright n},Q}$ over finite execution fragments.
	According to Definition \ref{def:weak_combined_transition}, we have 
	\begin{equation}
	    \lim_{n\to\infty} \mathsf{P}^{\ell}(\rho_{\sigma_Q^{\upharpoonright n},Q})
	~=~ \rho_{\sigma_Q,Q}\{\omega\in \mathsf{frags}^{*}(Q)\mid \mathsf{trace}(\omega)=\ell\}
	~=~ 1.
	\end{equation}
	For each finite scheduler $\sigma_Q^{\upharpoonright n}$, by cutting the $\ell$-fireable prefix for all $\ell$-observable execution fragments, we obtain a new finite scheduler $(\sigma')_Q^{\upharpoonright n}$ defined as follows:
	\begin{equation}
	    (\sigma')_Q^{\upharpoonright n}(\omega)= \begin{cases}
		\sigma_Q^{\upharpoonright n}(\omega), & \text{if $|\omega| < n$, $\mathsf{trace}(\omega) = \epsilon$, and $\omega$ is not $\ell$-fireable,} \\
		\delta_{\bot}, & \text{otherwise.}
	\end{cases}
	\end{equation}
	According to the induced probability distribution $\rho_{(\sigma')_Q^{\upharpoonright n},Q}$ over finite execution fragments, we can define a distribution $\rho_Q^{\upharpoonright n}$ over $\PRCCS$ as follows:
	\begin{equation}
	    \rho_Q^{\upharpoonright n}(Q') :=
	\rho_{(\sigma')_Q^{\upharpoonright n},Q}\{\omega\in \mathsf{frags}^{*}(Q)\mid \mathsf{last}(\omega)=Q'\} \text{ for all process $Q' \in \PRCCS$.}
	\end{equation}
	Since each $\ell$-observable execution fragment $\omega$ has an $\ell$-fireable prefix $\omega'$, we see that 
	\begin{align}
		\mathsf{P}^{\ell}(\rho_{\sigma_Q^{\upharpoonright n},Q}) 
		\le
		\rho_{(\sigma')_Q^{\upharpoonright n},Q}\{\omega\in \mathsf{frags}^{*}(Q)\mid \mathsf{last}(\omega) \in \psi_\L\}=\sum\{
		\rho_Q^{\upharpoonright n}(Q')
		\mid \text{$Q' \in \PRCCS\cap \psi_\L$} \} 
		= \rho_Q^{\upharpoonright n}(\psi_\L). 
	\end{align}
	Thus 
	\begin{equation}
	    \lim_{n\to\infty} \rho_Q^{\upharpoonright n}(\psi_\L)
	~=~ \lim_{n\to\infty} \mathsf{P}^{\ell}(\rho_{\sigma_Q^{\upharpoonright n},Q}) 
	~=~ 1.
	\end{equation}
	Then for each $n \in \mathsf{N}$, by a similar argument as in the proof of Lemma \ref{lem:PWeak_almost_bisimulation}, we can construct a probabilistic transition sequence $\delta_Q \ptran \rho_Q^{\upharpoonright n}$ according to the finite scheduler $(\sigma')_Q^{\upharpoonright n}$.
	Since $\delta_Q \ptran \rho_Q^{\upharpoonright n}$ for all $n \in \mathsf{N}$ and $\lim_{n\to\infty} \rho_Q^{\upharpoonright n}(\psi_\L) = 1$, we conclude that $\charMay_\L({Q})= 1$.
    \qed
\end{proof}

\begin{lemma}
\label{lem:PWeak_is_equipollent}
    $\PWeak$ is a probabilistically equipollent relation on $\Distr(\PRCCS)$.
\end{lemma}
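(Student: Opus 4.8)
The plan is to prove the two-sided equality $\charMay_\L(\mu_1)=\charMay_\L(\mu_2)$ for every pair $\mu_1\PWeak\mu_2$; by symmetry it suffices to show $\charMay_\L(\mu_1)\le\charMay_\L(\mu_2)$. By Corollary~\ref{cor:calculated_by_degenerate_witness}~(\ref{item:calculated_by_degenerate_witness_1}) we have $\charMay_\L(\mu_1)=\sup\{\nu_1(\psi_\L):\mu_1\myrightarrow{\tau^{*}}\nu_1\}$, so I would fix an arbitrary degenerate transition $\mu_1\myrightarrow{\tau^{*}}\nu_1$ together with an arbitrary $\epsilon>0$, and aim to establish $\nu_1(\psi_\L)\le\charMay_\L(\mu_2)+\epsilon$; letting $\epsilon\to 0$ and taking the supremum over such $\nu_1$ then yields the claim.

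First I would apply Lemma~\ref{lem:PWeak_almost_bisimulation} to $\mu_1\ptran\nu_1$, obtaining a distribution $\nu_2$ with $\mu_2\ptran\nu_2$ and $|\nu_2-\nu_1|_{\PWeak}\le\epsilon$; unfolding the $\PWeak$-norm this reads $\sum_{\mathcal{C}\in\PRCCS/\PWeak}|\nu_2(\mathcal{C})-\nu_1(\mathcal{C})|\le\epsilon$. The heart of the argument is the bridge to observability. Let $\Psi$ denote the union of all $\PWeak$-classes that meet $\psi_\L$. Then $\psi_\L\subseteq\Psi$, and by Lemma~\ref{lem:PWeak_observable_sup} every $Q\in\Psi$ satisfies $\charMay_\L(Q)=1$: indeed $Q$ lies in some class containing a process $P\in\psi_\L$, so $P\PWeak Q$ with $P\in\psi_\L$. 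Using the linearity of the may-characteristic (Lemma~\ref{lem:linearity_of_testing_outcome}) and the fact that $\Psi$ is a union of whole classes, I would estimate
\[
\charMay_\L(\nu_2)=\sum_{Q\in\Supp(\nu_2)}\nu_2(Q)\charMay_\L(Q)\ge\nu_2(\Psi)\ge\nu_1(\Psi)-\epsilon\ge\nu_1(\psi_\L)-\epsilon,
\]
where the middle inequality holds because $|\nu_2(\Psi)-\nu_1(\Psi)|\le\sum_{\mathcal{C}\subseteq\Psi}|\nu_2(\mathcal{C})-\nu_1(\mathcal{C})|\le\epsilon$.

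Finally, since $\mu_2\ptran\nu_2$, every descendant of $\nu_2$ is also a descendant of $\mu_2$, so $\charMay_\L(\nu_2)\le\charMay_\L(\mu_2)$ by monotonicity of the supremum defining $\charMay_\L$. Combining the two displays gives $\charMay_\L(\mu_2)\ge\nu_1(\psi_\L)-\epsilon$, which is exactly what was needed. The main obstacle is controlling the slack introduced by the almost-simulation when moving from the concrete probability $\nu_1(\psi_\L)$ to the characteristic $\charMay_\L(\nu_2)$; the key enabling insight is that Lemma~\ref{lem:PWeak_observable_sup} makes observability a $\PWeak$-invariant in the strong form that \emph{every} member of a class meeting $\psi_\L$ has may-characteristic exactly $1$. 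This lets me absorb the $\epsilon$-error at the level of equivalence classes through the $\PWeak$-norm, rather than having to match individual processes, after which the outer supremum and the limit $\epsilon\to 0$ erase the residual discrepancy.
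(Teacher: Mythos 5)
Your proposal is correct and follows essentially the same route as the paper's proof: reduce to degenerate witnesses via Corollary~\ref{cor:calculated_by_degenerate_witness}, invoke Lemma~\ref{lem:PWeak_almost_bisimulation} to get an $\epsilon$-close descendant $\nu_2$ of $\mu_2$, use Lemma~\ref{lem:PWeak_observable_sup} to make observability a $\PWeak$-class invariant with characteristic $1$, and conclude by linearity (Lemma~\ref{lem:linearity_of_testing_outcome}) plus monotonicity of the supremum and $\epsilon\to 0$. Your formulation with the class-union $\Psi$ is in fact a slightly cleaner rendering of the paper's $\bigcup_{i}\mathcal{C}_i$ step, but the argument is the same.
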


\begin{proof}
Let $\mu_1,\mu_2 \in \Distr(\PRCCS)$ be two distributions such that $\mu_1 \PWeak \mu_2$.
Consider any distribution $\nu_1$ such that $\mu_1 \ptran \nu_1$ with witness $\pi$. By Corollary \ref{cor:calculated_by_degenerate_witness}, we can assume, w.l.o.g., that the witness $\pi$ is degenerate.
According to Lemma \ref{lem:PWeak_almost_bisimulation}, for any $\epsilon > 0$, there exists a distribution $\nu_2$ such that $\mu_2 \ptran \nu_2$ and $|\nu_2 - \nu_1|_{\PWeak} \le \epsilon$.
Assume that $(\psi_\L\cap \Supp(\nu_1))/\PWeak=\{\mathcal{C}_i\}_{i\in[n]}$.
Since $|\nu_2 - \nu_1|_{\PWeak} = \sum_{\mathcal{C} \in \PRCCS / \PWeak} | \nu_2(\mathcal{C}) -\nu_1(\mathcal{C})| \le \epsilon$, we have 
\begin{equation}
    \nu_2(\psi_\L)\ge\sum_{i\in [n]} |\nu_2(\mathcal{C}_i)|\ge \sum_{i\in [n]} |\nu_1(\mathcal{C})| - \sum_{i\in [n]} | \nu_2(\mathcal{C}) -\nu_1(\mathcal{C})| \ge\nu_1(\psi_L) - \epsilon.
\end{equation}
According to Lemma \ref{lem:PWeak_observable_sup}, $\charMay_\L({Q})= 1$ holds for all $Q \in \bigcup_{i \in [n]} \mathcal{C}_i$. 
By Lemma \ref{lem:linearity_of_testing_outcome}, we have 
\begin{equation}
	\charMay_\L(\mu_2)\ge\charMay_\L(\nu_2)\ge \sum_{Q\in\bigcup_{i \in [n]} \mathcal{C}_i}\nu_2(Q)\charMay_\L({Q})
	= \sum_{Q\in\bigcup_{i \in [n]} \mathcal{C}_i}\nu_2(Q)=\nu_2(\psi_\L)\ge \nu_1(\psi_L) - \epsilon
\end{equation}
Since $\nu_1$ and $\epsilon$ are arbitrary, we have $\charMay_\L(\mu_1) \le \charMay_\L(\mu_2)$, and $\charMay_\L(\mu_1) = \charMay_\L(\mu_2)$ holds by symmetry.
\qed
\end{proof}

\pweakIsStr*

\begin{proof}
    For any $\mu_1,\mu_2$ such that $\mu_1\PWeak\mu_2$ and a probabilistic transition sequence $\mu_1\rightsquigarrow\nu_1$ with a degenerate witness $\pi\in\tau^{*}$, by Lemma~\ref{lem:PWeak_almost_bisimulation}, we can deduce that there exists a distribution $\nu_2$ such that $\mu_2\rightsquigarrow\nu_2$ and $\vert \nu_1-\nu_2\vert_{\PWeak}\le \epsilon$.
    Lemma~\ref{lem:PWeak_is_equipollent} shows that for any $P\in \PRCCS$ and $Q\in [P]$, we have $\charMay_\L({Q})=\charMay({P})$.
    By Lemma~\ref{lem:linearity_of_testing_outcome}, for $i=1,2$, we have
    \begin{equation}
        \charMay_\L({\nu_i})=\sum_{[P]\in\PRCCS/\PWeak}\sum_{Q\in [P]}\nu_i(Q)\charMay_\L({Q})
        =\sum_{[P]\in\PRCCS/\PWeak}\charMay_\L({P})\cdot\nu_i([P]).
    \end{equation}
    Therefore, the following equalities are valid.
    \begin{equation}
        \begin{aligned}
        \vert\charMay_\L({\nu_1})-\charMay_\L({\nu_2})\vert&=\left\vert\sum_{[P]\in\PRCCS/\PWeak}\charMay_\L({P})\cdot(\nu_1([P])-\nu_2([P]))\right\vert\le\vert\nu_1-\nu_2\vert_{\PWeak}\le\epsilon,
    \end{aligned}
    \end{equation}
    which implies that $\charFS_\L(\mu_2)\le\charFS_\L(\mu_1)$.
    By symmetry, $\charFS_\L(\mu_2)=\charFS_\L(\mu_1)$ holds.
    Hence, we can conclude that $\PWeak$ is probabilistically strongly equipollent.
    \qed
\end{proof}

\subsection{Proof of Lemma~\ref{lem:pweak_is_extensional}}
Lemma~\ref{lem:communication}, a rephrasing of the concatenation lemma in \cite{segala_ModelingVerificationRandomized_1995}, is used to prove Lemma~\ref{lem:pweak_is_extensional}.

\begin{lemma}[\cite{segala_ModelingVerificationRandomized_1995}]
    \label{lem:communication}
    Let $P,Q\in \PRCCS$ be two processes.
    If $P\stackrel{\ell}{\Longrightarrow}_{c}\rho_P$ and $Q\stackrel{\overline{\ell}}{\Longrightarrow}_{c}\rho_{Q}$ for some $\ell\in \L$ and distributions $\rho_P,\rho_Q\in\InfDistr(\PRCCS)$, then $P\mid Q \Longrightarrow_{c}\rho_P\mid \rho_Q$.
\end{lemma}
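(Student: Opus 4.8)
The plan is to prove the statement by directly constructing, from the two witnessing schedulers, a single scheduler on $P\mid Q$ that realizes the synchronized silent weak combined transition. Fix a scheduler $\sigma_P$ on $\mathsf{frags}^{*}(P)$ witnessing $P\stackrel{\ell}{\Longrightarrow}_{c}\rho_P$ and a scheduler $\sigma_Q$ on $\mathsf{frags}^{*}(Q)$ witnessing $Q\stackrel{\overline{\ell}}{\Longrightarrow}_{c}\rho_Q$, in the sense of Definition~\ref{def:weak_combined_transition}. The crucial structural fact I would exploit is that every execution fragment $\omega$ with $\rho_{\sigma_P,P}(\omega)>0$ satisfies $\mathsf{trace}(\omega)=\ell$, so the visible action $\ell$ occurs exactly once along each such fragment; symmetrically for $\sigma_Q$ and $\overline{\ell}$. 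This lets me split each component computation unambiguously into a silent prefix, the single visible step, and a silent suffix.

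The composed scheduler $\sigma$ on $\mathsf{frags}^{*}(P\mid Q)$ would operate in phases, using the fact that a fragment of $P\mid Q$ projects onto a $P$-fragment and a $Q$-fragment. In the first phase $\sigma$ replays the $\tau$-moves that $\sigma_P$ schedules before its $\ell$-step, embedding each $P$-transition $S\myrightarrow{\tau}\rho$ into $S\mid T\myrightarrow{\tau}\rho\mid\delta_T$ by the parallel rule of Figure~\ref{fig:pLTS1}, with $Q$ frozen; the second phase does the symmetric thing for $\sigma_Q$'s silent prefix with $P$ frozen; the third phase fires the synchronization, turning the scheduled $\ell$-transition $S\myrightarrow{\ell}\rho_1$ and $\overline{\ell}$-transition $T\myrightarrow{\overline{\ell}}\rho_2$ into the single step $S\mid T\myrightarrow{\tau}\rho_1\mid\rho_2$ via the communication rule; the final phase replays the two silent suffixes. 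The one delicate point in the definition is that, at a given history, $\sigma_P$ may assign positive mass simultaneously to $\ell$-transitions, to $\tau$-transitions, and to $\bot$; I would therefore define $\sigma$ by decomposing $\sigma_P(\cdot)$ (and $\sigma_Q(\cdot)$) into its $\tau$-mass, its visible-action mass, and its stopping mass, routing the first into the replay phases and the second into the synchronization phase.

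With $\sigma$ in hand, I would verify the three clauses of Definition~\ref{def:weak_combined_transition} for the induced distribution $\rho_{\sigma,P\mid Q}$. Termination with probability one follows because the composed computation halts exactly when both component computations halt, and each of $\sigma_P,\sigma_Q$ halts with probability one. Every positive-probability fragment of $P\mid Q$ has empty trace, since all embedded $P$- and $Q$-moves are silent and the sole occurrences of $\ell$ and $\overline{\ell}$ are consumed by the synchronization into a $\tau$; hence $\widehat{\tau}=\varepsilon$, matching the bare $\Longrightarrow_{c}$. The remaining clause is that the last-state distribution equals $\rho_P\mid\rho_Q$, and this is the main obstacle: because the schedulers are randomized and history-dependent, I must show the induced joint distribution over final states $P'\mid Q'$ factorizes as $\rho_P(P')\,\rho_Q(Q')$.

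To overcome this I would argue that the $P$- and $Q$-components evolve independently under $\sigma$ apart from the single synchronizing step, which itself contributes the product distribution $\rho_1\mid\rho_2$; consequently the $P$-marginal of $\rho_{\sigma,P\mid Q}$ reproduces the last-state distribution of $\rho_{\sigma_P,P}$, namely $\rho_P$, the $Q$-marginal reproduces $\rho_Q$, and the two are conditionally independent. A cleaner route that stays inside the paper's own toolbox is to pass to the finite cut-off schedulers $\sigma_P^{\upharpoonright n}$ and $\sigma_Q^{\upharpoonright n}$ (as in the proofs of Lemma~\ref{lem:PWeak_almost_bisimulation} and Lemma~\ref{lem:PWeak_observable_sup}), convert each into an honest probabilistic transition sequence, assemble the synchronized product step by step using Corollary~\ref{cor:linearity_of_probabilistic_transition}~(\ref{item:linear_combination}) to take the required convex combinations, and then let $n\to\infty$; since $\rho_{\sigma_P^{\upharpoonright n},P}\to\rho_P$ and $\rho_{\sigma_Q^{\upharpoonright n},Q}\to\rho_Q$ pointwise, the products converge to $\rho_P\mid\rho_Q$, and the limiting scheduler witnesses $P\mid Q\Longrightarrow_{c}\rho_P\mid\rho_Q$.
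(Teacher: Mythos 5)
You should first note a mismatch of expectations: the paper contains \emph{no proof} of this lemma at all. It is imported from Segala's work and explicitly described in the text as ``a rephrasing of the concatenation lemma in \cite{segala_ModelingVerificationRandomized_1995}'', so the only baseline to compare against is the standard scheduler-composition argument from that literature --- which is essentially what you reconstruct. Your overall route is the right one: compose the two witnessing schedulers, embed component $\tau$-steps via the parallel rules of Figure~\ref{fig:pLTS1}, fuse the single $\ell$- and $\overline{\ell}$-occurrences into one communication $\tau$ so that every positive-probability composed fragment has trace $\varepsilon$, and then verify the three clauses of Definition~\ref{def:weak_combined_transition}. You also correctly flag the two genuinely delicate points (mixed scheduler mass and factorization of the final distribution into $\rho_P \mid \rho_Q$).

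Two soft spots deserve attention before this counts as a complete proof. First, the strict phase structure is not directly implementable by a scheduler in this paper's sense, because a scheduler is a function of the execution fragment alone: when $\sigma_P(\omega_P)$ splits its mass between $\tau$-transitions and $\ell$-transitions, ``routing the $\ell$-mass into the synchronization phase'' is a commitment that the fragment does not record, and worse, fragments record only states and actions, so a synchronization $\tau$-step can be indistinguishable from an embedded component $\tau$-step (e.g., when the partner's target distribution puts weight back on its current state), making ``which phase we are in'' unreadable from the history. The standard repairs are either to work in Segala's probabilistic-execution formalism, where this bookkeeping is part of the object, or to abandon phases for a per-history normalization: at a pre-synchronization composed history, if $\sigma_P$ assigns mass $p_\ell$ to $\ell$-firings and $\sigma_Q$ assigns $q_{\overline{\ell}}$ to $\overline{\ell}$-firings, schedule synchronization mass $s$, left-embedded mass $a$, and right-embedded mass $b$ so that $s/(s+a)=p_\ell$ and $s/(s+b)=q_{\overline{\ell}}$; then, conditional on either component moving, its move is distributed exactly according to its own scheduler, the two marginals reproduce $\rho_{\sigma_P,P}$ and $\rho_{\sigma_Q,Q}$, and independence yields the product form you need. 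Second, your ``cleaner'' fallback via the cut-off schedulers $\sigma^{\upharpoonright n}$ and Corollary~\ref{cor:linearity_of_probabilistic_transition} only produces $\ptran$-sequences reaching the approximants $\rho_P^{\upharpoonright n}\mid\rho_Q^{\upharpoonright n}$; no $\ptran$-sequence and no finite scheduler ever reaches the limit $\rho_P\mid\rho_Q$ (compare Example~\ref{ex:oplus_testing_outcome}, where the supremum $\tfrac12$ is unattained), and the ``limiting scheduler'' is nothing other than the fully composed infinite scheduler of your first construction. So the cut-off argument can serve to \emph{verify}, by continuity, that the composed scheduler's last-state distribution equals the product, but it cannot replace the construction itself. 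With the first point repaired as above, the rest of your verification of clauses 1--3 goes through.
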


\pweakIsExt*

\begin{proof}
    It suffices to prove that the original $\PWeak$ on $\PRCCS$ is closed under the parallel composition and localization operations.
    Define $\E:=\left\{(P_1\mid Q,P_2\mid Q):P_1\PWeak P_2\text{~and~} Q\in\PRCCS\right\}\cup\PWeak$ on $\PRCCS$.
    Note that $\E$ is an equivalence.
    We will prove that $\E$ is a probabilistic weak bisimulation.
    
    First, we claim that for any distribution $\mu_1,\mu_2,\nu\in\InfDistr(\PRCCS)$ such that $\mu_1(\PWeak)^\dag\mu_2$, $(\mu_1\mid\nu)~\E^\dag~(\mu_2\mid\nu)$ holds.
    In fact, given $A,B\in\PRCCS$, let $[A]\mid B$ be the set $\left\{ A' \mid B :A\PWeak A'\right\}$.
    Since $\mu_1(\PWeak)^\dag\mu_2$, we have
    \begin{equation}
        (\mu_1\mid \nu)([A]\mid B)=\mu_1([A]_{\PWeak})\cdot\nu(B)=\mu_2([A]_{\PWeak})\cdot\nu(B)=(\mu_2\mid\nu)([A]\mid B).
    \end{equation}
    Note that $[A]\mid B \subseteq [A\mid B]_{\E}$ holds by definition.
    Therefore, $\left\{[A]\mid B: A,B\in \PRCCS\right\}$ is a refinement of $\PRCCS/\E$.
    Furthermore, for any $\mathcal{C}\in\PRCCS/\E$, we have 
    \begin{equation}
        (\mu_1\mid \nu)(\mathcal{C})=\sum_{[A]\mid B\in \mathcal{C}}(\mu_1\mid \nu)([A]\mid B)=\sum_{[A]\mid B\in \mathcal{C}}(\mu_2\mid \nu)([A]\mid B)=(\mu_2\mid \nu)(\mathcal{C}),
    \end{equation}
    which implies that $(\mu_1\mid\nu)~\E^\dag~(\mu_2\mid\nu)$

    Given $P_1,P_2,Q\in \PRCCS$ such that $P_1\PWeak P_2$, if $P_1\mid Q\myrightarrow{\alpha}\mu_1$ for some distribution $\mu_1$, then there are three cases.
    \begin{enumerate}
        \item 
        $Q\myrightarrow{\alpha}\rho$ for some $\rho$ and $\mu_1=\delta_{P_1}\mid\rho$.
        Then $P_2\mid Q\stackrel{\alpha}{\Longrightarrow}_{c}\mu_2:=\delta_{P_2}\mid \rho$.
        Since $\delta_{P_1}\PWeak\delta_{P_2}$, by our claim, we have $\mu_1~\E^{\dag}~\mu_2$.
        \item 
        $P_1\myrightarrow{\alpha}\rho_1$ for some $\rho_1$ and $\mu_1=\rho_1\mid\delta_Q$.
        Since $P_1\PWeak P_2$, there exists $\rho_2\in\InfDistr(\PRCCS)$ such that $P_2 \stackrel{\alpha}{\Longrightarrow}_{c} \rho_2$ and $\rho_1\; \PWeak \;\rho_2$. 
        Hence, $P_2\mid Q \stackrel{\alpha}{\Longrightarrow}_{c} \mu_2:= \rho_2\mid\delta_Q$ and by our claim, $\mu_1~\E^{\dag}~\mu_2$ holds.
        \item 
        $P_1\myrightarrow{\ell}\rho_1$, $Q\myrightarrow{\overline{\ell}}o$ for some distributions $\rho_1,o\in\Distr(\PRCCS)$ and $\ell\in\L$, $\mu_1=\rho_1\mid o$ ,and $\alpha=\tau$.
        Since $P_1\PWeak P_2$, there exists $\rho_2\in\InfDistr(\PRCCS)$ such that $P_2 \stackrel{\ell}{\Longrightarrow}_{c} \rho_2$ and $\rho_1\; (\PWeak)^{\dagger} \;\rho_2$. 
        By Lemma~\ref{lem:communication}, we have $P_2\mid Q\stackrel{\tau}{\Longrightarrow}_{c}\mu_2:=\rho_2\mid o$.
        Therefore, $\mu_1~\E^\dag~\mu_2$.
    \end{enumerate}
    By symmetry, we conclude that $\E$ is a probabilistic weak bisimulation, therefore contained in the largest one, i.e., $\PWeak$.
    Hence, for any $P_1,P_2,Q\in \PRCCS$ such that $P_1\PWeak P_2$, we have $P_1\mid Q \PWeak P_2\mid Q$.

    It is trivial to show that $\left\{((a)P_1,(a)P_2):P_1\PWeak P_2\right\}\cup\PWeak$ is a probabilistic weak bisimulation.
    Therefore, for any processes $P_1,P_2\in\PRCCS$ such that $P_1\PWeak P_2$, we have $(a)P_1\PWeak (a)P_2$.
    Now we can conclude that $\PWeak$ is $p$-extensional.
    \qed
\end{proof}


\end{document}